%% file: arxiv.tex
\documentclass[acmsmall, nonacm]{acmart}

\input{preamble.tex}

\input{macros.tex}

\title{Early-Stabilizing Counting}

\author{Christoph Lenzen}
\affiliation{%
  \institution{Aalto University}
  \city{Helsinki}
  \country{Finland}
}
\affiliation{%
  \institution{Reykjavik University}
  \city{Reykjavik}
  \country{Iceland}
}
\email{christophlen@ru.is}

\author{Julian Loss}
\affiliation{%
  \institution{Ruhr University Bochum}
  \city{Bochum}
  \country{Germany}
}
\email{julian.loss@rub.de}

\begin{document}

\SetKwInput{KwVars}{Variables} 

\begin{abstract}
  \input{abstract.tex}
\end{abstract}

\maketitle

\input{intro.tex}
\input{prelim.tex}
\input{overview.tex}

\input{conventions.tex}
\input{counting.tex}
\input{communication.tex}
\input{opt_stab.tex}

\input{open.tex}

\bibliographystyle{plain}
\bibliography{references}

\appendix
\input{app_implicit.tex}

\end{document}

%% file: preamble.tex
\usepackage{amsthm,thmtools,thm-restate}
\usepackage{graphicx}

\usepackage[algo2e,noend,ruled,linesnumbered]{algorithm2e}
\DontPrintSemicolon 
\usepackage{silence}
\usepackage{hyperref} 
\hypersetup{
	colorlinks=true,  
	linkcolor=RoyalBlue,   
	filecolor=magenta, 
	urlcolor=RoyalBlue,    
	citecolor=OliveGreen,  
	hypertexnames=false
}
\usepackage[capitalize]{cleveref}

\usepackage{enumitem}


%% file: macros.tex
\newcommand{\NN}{\mathbb N}

\newcommand{\ZZ}{\mathbb Z}

\newlength{\protowidth}

%% file: abstract.tex
Synchronous \emph{Counting} is the task of reaching agreement on a common round counter in a synchronous system of $n$ nodes with up to $t$ Byzantine faults in a \emph{self-stabilizing} manner. That is, after transient faults may have arbitrarily corrupted the system state and ceased, the at least $n-t$ non-faulty nodes need to (re-)establish that (i) their local outputs are identical and (ii) increase by $1$ modulo $C$ in each round. An overhead-free reduction from consensus shows that all known lower bounds and impossibilities for consensus carry over to the counting problem. In the other direction, prior work has established that a consensus algorithm $\mathcal{A}$ can be turned into a counting algorithm at small overhead relative to the running time and bit complexity of $\mathcal{A}$, without losing resilience.

Taking inspiration from early-stopping consensus protocols, in this work we introduce the concept of \emph{early stabilization.} That is, if there are $0\le f\le t$ (persistent) faults in an execution, the algorithm should stabilize in a number of rounds that depends on $f$ only. Likewise, we seek to achieve an amortized bit complexity that is \emph{adaptive} in the number of actual faults $f$. By developing a number of modular building blocks suitable to these goals, we develop a $C$-counting algorithm that stabilizes within asymptotically optimal $O(f+1)$ rounds, has message size $O(\log^2 n + \log C)$, and has amortized bit complexity $O(n(f\log C +\log^2 n))$.

%% file: intro.tex
\section{Introduction \& Related Work}\label{sec:intro}
Synchronous \emph{Counting,} a.k.a.\ Byzantine digital clock synchronization, asks the nodes of a synchronous system to agree on a common round counter despite arbitrary initial states and interference from Byzantine faulty parties. This task naturally arises once its big brother, Byzantine fault-tolerant pulse synchronization~\cite{dolev04self,lenzen19easy}, has been solved: \emph{this} problem establishes synchronized, self-stabilizing regular pulses in a system where nodes are equipped with local clocks of bounded relative drift and exchange messages whose communication delay varies within known bounds. These pulses can serve as a clock signal establishing the synchronous abstraction despite both bounded Byzantine and and unbounded transient faults, but the resulting rounds are ``anonymous.'' This gets in the way of basic operations like initiating a subroutine that should be executed regularly, but not \emph{every} round, or responding coherently to external signals that are not necessarily observed by all nodes in the same round. The latter can be achieved by solving the closely related firing squad problem, cf.~\cite{lenzen19counting}.

Thus, at first glance, counting is a fundamental challenge with practical utility. However, the latter demands a more nuanced perspective. While applying pulse synchronization naively results in unlabeled rounds, one can use them as a ``heartbeat'' to stabilize a faster, more accurate pulse synchronization algorithm that would not stabilize on its own, but labels pulses modulo $C$~\cite{khanchandani19precision}. This comes at the expense of slowing down the heartbeat by a factor of $\Theta(C)$, which negatively impacts the stabilization time of the overall protocol stack. Ben-Or et al.~\cite{benor08fast} achieve a much smaller constant expected time overhead, by leveraging a self-stabilizing stream of weak shared coins. However, this comes at the expense of large communication overhead and additional assumptions.\footnote{The classic protocol by Feldman and Micali would require $\Omega(n^3\kappa)$ bits \emph{per heartbeat,} where $\kappa$ is the security parameter, and is secure against a static, computationally bounded adversary. One cannot use more efficient protocols that rely on a trusted setup, as transient faults could reveal any pre-shared secret information.} Alternatively, one can build large clocks from smaller ones~\cite{lenzen13socs}, turning the multiplicative overhead in stabilization time of~\cite{khanchandani19precision} into an additive one. This work gives a deterministic algorithm that runs in $O(t)$ rounds and tolerates $t<n/3$ Byzantine faults, alongside a randomized solution that achieves stabilization within $O(t+\log C)$ rounds with probability $1-2^{-t+\log C}$. These bounds are optimal up to small factors in the worst case~\cite{chor89simple,fischer82time,pease80reaching} due to a folklore reduction from consensus to counting. A remaining niche case is when the goal is to stabilize within $S\in \log^{O(1)} n$ rounds. Here, combining \cite{lenzen19counting} and \cite{lenzen19easy} achieves this property with probability $1-2^{-\log^{O(1)} n}$, at the expense of higher communication complexity (depending on the consensus algorithm plugged into their framework). This can be viewed a generalization of the aforementioned earlier work by Ben-Or et al.~\cite{benor08fast} that also solves the problem of generating the heartbeats, without needing to rely on a shared coin primitive.

So, should we cast aside the counting problem or at least reduce it to the status of a theoretical curiosity? We argue that this would be premature, as the above characterization limits its view to assuming a worst-case number of faults. The round lower bound for consensus breaks down when considering executions with $f\ll t$ faults~\cite{dolev82eventual}, or more precisely, requires a more fine-grained argument that yields a round complexity of $\min\{f+2,t+1\}$~\cite{dolev90early}. Regarding bit complexity, current solutions all send $\Omega(nt)$ bits \emph{per round,} factor $t$ beyond what the strongest known lower bound for consensus implies for counting~\cite{dolev85bounds,abraham19revisited}. Accordingly, in this work we explore the question how much can be gained by optimizing stabilization time and bit complexity as function of $f$ rather than $t$.

\subsection{Our Contributions}
To the best of our knowledge, this work is the first to carry over the concept of early-stopping consensus algorithms, which terminate within $O(f+1)$ rounds, to the realm of self-stabilization. We coin algorithms in which the stabilization time depends only on $f$, i.e., $S=S(f)$, \emph{early-stabilizing.} Likewise, we are unaware of prior attempts to reduce the communication cost of fault-tolerant self-stabilizing algorithms in executions with few faults. On a high level, we perceive three key benefits in doing~so.
\begin{enumerate}
  \item Reducing communication complexity simply saves resources. While it is true that larger bandwidth still needs to be available \emph{if} there is a large number of faults, these might not simultaneously affect all parts of the system or network. Hence, if communication resources are shared with other subroutines or subsystems, it can even be possible that overall bandwidth requirements are reduced for achieving the same level of resilience.
  \item Reduced stabilization time can clearly improve system availability. However, in the context of self-stabilization, we emphasize a compounding impact that goes way beyond shorter recovery time after a burst of transient faults overwhelms the system. Quicker recovery affects also how quickly nodes undergoing transient faults reattain a state consistent with the system when the fault threshold $t$ is not reached. This entails that they contribute again to the quorum of $n-t$ synchronized nodes required to count despite Byzantine faults. For example, if $t=\lfloor (n-1)/3\rfloor$ and in each round each node suffers a transient fault with independent probability $9/n$ (and there are no other faults), straightforward calculations and concentration bounds show the following.
\begin{itemize}
  \item [(i)] Within $t$ rounds, with probability $1-2^{-\Omega(n)}$, more than $t$ distinct nodes suffer a transient fault. Hence, a system in which $t+1$ or more rounds are required for individual nodes to recover from transients would be virtually guaranteed to fail globally within $t$ rounds.\footnote{For this informal line of reasoning, we assume that a transient fault will invalidate a node's state and it will take the full stabilization time to recover a consistent state.}
  \item [(ii)] Within $t/3$ rounds, the probability that more than $t$ nodes fail \emph{or} that during each subinterval of $S(0)\in O(1)$ rounds some transient fault occurs is $2^{-\Omega(n)}$. Thus, an algorithm that stabilizes in $S(0)$ rounds when there are no faults is guaranteed to maintain count for $2^{\Omega(n)}$ rounds with probability $1-2^{-\Omega(n)}$.
\end{itemize}
Put simply, early stabilization fundamentally improves the ability of the system to contain transient faults.
  \item In~\cite{lenzen19easy}, the authors mention that their work on pulse synchronization builds on earlier results on counting, where key techniques were developed in the ``sandbox'' provided by the more structured synchronous model. We hope the techniques we develop here for counting can be utilized in a similar way. 
\end{enumerate}
Concretely, we craft a set of modular tools that ultimately achieve the following main result; we fix optimal $t:=\lfloor (n-1)/3\rfloor$ from hereon.
\begin{restatable}{theorem}{main}\label{thm:main}
There is a $C$-counting algorithm with stabilization time $O(f+1)$, message size $O(\log^2 n + \log C)$, and bit complexity of $O(n(f\log C+\log^2 n))$ amortized over $n$ rounds.
\end{restatable}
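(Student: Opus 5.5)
The plan is to build the counter from a modular pipeline, using an early-stopping consensus routine as the key primitive inside the known consensus-to-counting framework. The framework is modified so that every phase terminates adaptively in $O(f+1)$ rounds rather than $O(t)$.

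\textbf{Step 1: early-stopping consensus with adaptive communication.} First I would construct a synchronous consensus protocol with the following properties:
\begin{itemize}
\item In executions with $f$ actual faults, it decides within $O(f+1)$ rounds.
\item It uses messages of size $O(\log n)$ per value bit.
\item Its total communication is $O(n(f+1)\log n)$.
\end{itemize}
For instance, one can run a phase-king style protocol in which kings are rotated, and nodes stop once two consecutive kings were consistent with a strong (at least $n-t$) majority. Each phase with a correct king fixes agreement. A faulty king can only be ``charged'' once, since nodes that observe inconsistent behavior can blacklist it. This yields the $O(f+1)$ bound. To keep bit complexity adaptive, kings broadcast only to all nodes, and other nodes reply with short messages. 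I would aim for a per-phase cost of $O(n\log n)$ bits rather than $O(n^2)$.

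\textbf{Step 2: self-stabilizing wrapper.} Following the consensus-to-counting reduction of prior work, I would pipeline instances of the consensus routine. In each round a new instance is started on the proposal ``current counter $+$ pipeline offset''. Output counters are computed from the instance that just completed. Because the consensus routine may be started from an arbitrary state, I need a self-stabilizing way to (re)synchronize instance starts. I would obtain this with a small auxiliary counter (modulo $\Theta(\log n)$-ish periods) built recursively, or with a ``pulse'' that is agreed on via the same early-stopping consensus. The point is that transient state is flushed after one instance length, which is $O(f+1)$.

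The difficulty is that instance length is unknown: it depends on $f$. So I would run instances with exponentially growing timeouts $2^i$ and a validity check on outputs. Correctness then kicks in as soon as $2^i \ge c(f+1)$, and the geometric sum keeps stabilization at $O(f+1)$.

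\textbf{Step 3: amortized communication.} The $\log C$ term arises because the agreed value is a counter modulo $C$. To avoid paying $n\log C$ every round, once agreement holds nodes only need to confirm ``I still agree'' with $O(1)$ bits. The full $\log C$-bit value is re-agreed only when some node detects inconsistency, which is at most $O(f)$ times per $n$ rounds attributable to faulty nodes. The $\log^2 n$ term covers pipelined bookkeeping for the $O(\log n)$ concurrent instance levels, each with $O(\log n)$-bit identifiers. Amortizing over $n$ rounds, with kings rotating through all nodes, gives $O(n(f\log C+\log^2 n))$. The message-size bound follows directly.

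\textbf{Main obstacle.} I expect the hardest part to be reconciling early stopping with self-stabilization. Faulty nodes can repeatedly trigger ``inconsistency'' to force restarts, and arbitrary initial states can make nodes disagree on which instance or timeout level they are in. I would handle this by charging every forced restart to a distinct faulty node that is then ignored for the next $n$ rounds. This is the same rotation/blacklisting argument as in Step 1, and it must be shown to survive corrupted initial blacklists, for instance by periodically clearing them.
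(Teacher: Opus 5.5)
There is a genuine gap, and it is in Step 2. In a self-stabilizing pipeline, the state of an in-flight instance is only guaranteed to be flushed after the instance's \emph{worst-case} length. Any deterministic consensus protocol, early-stopping or not, needs $t+1$ rounds in the worst case; the $O(f+1)$ bound applies only to instances started from a consistent state. So either outputs are used only after $t+1$ rounds, or an instance whose memory was corrupted by the transient faults can emit arbitrary outputs for $\Theta(t)$ rounds and undo agreement that a clean instance has just reached. Your claim that transient state is flushed after $O(f+1)$ rounds is therefore false; the paper itself notes that naively substituting early-stopping consensus does not work.

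Your remedies are circular or too slow:
\begin{itemize}
\item Exponentially growing timeouts need all correct nodes to agree on the current level $i$ and on instance boundaries. Both are arbitrary after transient faults, and agreeing on them is the counting problem itself.
\item The same circularity applies to a pulse ``agreed on via the same consensus.''
\item An auxiliary counter built by balanced recursion only gives $O(f+\log n)$, since every recursion level costs $\Omega(1)$ rounds even when $f=0$. That is the paper's intermediate result, not the theorem.
\end{itemize}
Moreover, overlapping instances that fail to reach agreement (truncated ones, or ones with a faulty king) can destroy agreement established by another instance. You need an analogue of the paper's \emph{unimpeded} instances, and your plan never provides one.

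Steps 1 and 3 are asserted rather than constructed. If only the king hears everyone's value, the other nodes cannot verify that adopting its proposal is safe, so a faulty king can violate validity or split the correct nodes. Blacklists built from local observations differ between correct nodes, so they cannot decide who participates. Clearing them ``periodically'' again presupposes a common clock. Confirming ``I still agree'' with $O(1)$ bits does not help either: if every pair confirms, that is $\Theta(n^2)$ bits per round, because the cost is in the number of communicating pairs, not the message length.

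The paper never needs a cheap full consensus.
\begin{itemize}
\item It uses only $O(1)$-round primitives: king consensus, weak king consensus, and $X$-round filtering with $X\in O(1)$. Corrupted state therefore flushes in $O(1)$ rounds.
\item It gets $O(f+1)$ from rotating leaders, chosen by filtered recursive clocks at two frequencies $k_0,k_1$. Once the better branch has stabilized, this forces an unimpeded instance with a correct leader within $O(f+1)$ rounds, after which validity locks in the count.
\item It removes the additive $O(\log n)$ by alternating the balanced split with a template in which node $0$ leads weak king consensus. A fault-free subtree then stabilizes in $O(1)$ rounds, and costs along the relevant recursion path decay geometrically.
\item Communication stays low for three reasons. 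King consensus is cheap once inputs agree and the agreed leader is correct, and faulty leaders are chosen only $O(f)$ times per $n$ rounds. Filtering uses round-robin queries with escalation. Weak king consensus relies on expander-neighbor checks that are cheap after stabilization regardless of what the leader does.
\end{itemize}
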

Compared to the state of the art~\cite{lenzen13socs,lenzen19counting}, this reduces the stabilization time from being larger than $t$ to asymptotically optimal $O(f+1)$ \emph{and} the bit complexity from $\Omega(n^2)$ to (amortized) $\tilde{O}(n(f+1))$ per round.


\subsection*{Paper Organization}
In~\Cref{sec:model}, we discuss the system model, formalize the counting problem, and introduce two tasks that are highly useful in modularizing our solutions. Next, in~\Cref{sec:overview} we provide an overview of the key challenges in obtaining our results and sketch the ideas underlying our algorithms and proofs. To prepare for the technical exposition, in~\Cref{sec:conventions} we introduce a number of conventions we use in pseudocode to hide important, but distracting book-keeping operations and other details.\footnote{Readers familiar with self-stabilization should not be surprised by any of them, and a conceptual understanding does not require examining them closely. However, they imbue the pseudocode with a sufficiently rigorous meaning to enable proofs.} \Cref{sec:counting} discusses how to achieve stabilization time $O(f+\log n)$, focusing on the core idea in obtaining a stabilization time dominated by $f$ instead of $t$. We follow up by reducing amortized bit complexity to $\tilde{O}(n(f+1))$ in~\Cref{sec:communication}. To arrive at our main result, \Cref{sec:opt_stab} introduces additional building blocks and modifies the recursion template to remove the additive $O(\log n)$ overhead from the stabilization time.
We conclude the paper with a discussion of open questions in~\Cref{sec:open}.

%% file: prelim.tex
\section{Model and Preliminaries}\label{sec:model}
\paragraph{Communication Model}


We consider a fully connected synchronous system with node set $V:=[n]$, where we use the shorthand $[k]:=\{0,\ldots,k-1\}$. That is, algorithms proceed in \emph{rounds,} in which nodes send and receive messages and perform computations in lock-step. When a node $v$ sends a message $m$ to another node $w$ at the onset of a round $r\in\NN$, $w$ receives and processes $m$ by the end of round $r$, knowing that $m$ was sent by $v$. The \emph{round complexity} of a (non-stabilizing) protocol is the maximum number of rounds until all nodes have decided on their output and terminated, i.e., cease to send messages.

\paragraph{Fault Model}
In this work, we seek so-called \emph{self-stabilizing} solutions, i.e., the system recovers correct operation after a period of unbounded transient faults. We assume that by the start of the first round (that we analyze), the most recent such period is over. In addition, we require that the system can sustain any number of Byzantine \emph{faulty} nodes $f\le t:=\lfloor (n-1)/3\rfloor$, where recovery should succeed despite these ongoing faults. Accordingly, the non-Byzantine nodes, which we refer to as \emph{correct,} may initially have an arbitrary state, but the model assumptions on the communication network and the number of correct nodes hold again at the beginning of round $1$.\footnote{In particular, the code of the algorithm must be protected against alterations by transient faults, e.g.\ by being stored in more resilient non-volatile memory.} The goal is for the system as a whole to recover correct operation again within a bounded number of rounds $S$. This bound is referred to as \emph{stabilization time.} Note that self-stabilizing algorithms can never terminate; the stabilization time is their closest equivalent to round complexity~\cite{awerbuch91checking}.

We consider deterministic algorithms, i.e., the system must stabilize for any initial states of correct nodes and behavior of the faulty nodes. This is equivalent to assuming a powerful adversary that controls initial states and has perfect knowledge of the system, as the initial states and messages sent by faulty nodes fully determine an execution. In contrast, the algorithm is not aware of $f$, and must achieve its guarantees based on the knowledge that $f\le t$ only. This is crucial, as in this work we are interested in a stabilization time that depends on $f$, not $t$ or $n$: we set out to achieve $S\in O(f+1)$. In analogy to the concept of early-stopping algorithms~\cite{dolev82eventual}, we refer to this as \emph{early stabilization.}

As a secondary optimization criterion, we seek to minimize the amortized bit complexity of our algorithms. That is, if honest parties send a total of $B(r)$ bits in round $r$, there should be a value $R$ such that $(\sum_{r=1}^R B(r))/R\le B$. In this case, the algorithm has \emph{bit complexity of $B$ amortized over $R$ rounds.} We remark that it is common to ignore any communication by faulty nodes altogether, as they could send anything. However, our algorithms use short messages of polylogarithmic length, so that correct nodes can safely ignore longer messages without processing them, i.e., no backdoor through which an attacker could overtax correct nodes' computational capacity is introduced.

Finally, our self-stabilizing algorithms will list the variables that are \emph{maintained} between rounds, where we use subscript $v$ to indicate the node. Note that subroutines may come with their own variables, but we will use their outputs only. For analysis purposes, we will refer to the state of a variable $X_v$ of node $v$ \emph{at the end} of round $r\in \NN$ by $X_{v,r}$, where $X_{v,0}$ refers to the (arbitrary) initial state of $X_v$. We do the same for output variables of self-stabilizing subroutines.

\subsection*{Definitions}
The core task we consider in this work is digital clock synchronization or synchronous \emph{counting}.
\begin{definition}[$C$-Counting]\label{def:counting}
Suppose that each node maintains $C_v\in [C]$, where $2\le C\in \NN$. We say that these variables \emph{$C$-count} iff they satisfy for all correct nodes $v$ and $w$ and $r\in \NN$ that
\begin{itemize}
\item \textbf{Agreement}: $C_{v,r}=C_{w,r}$.
\item \textbf{Validity}: $C_{v,r+1}=C_{v,r}+1\bmod C$.
\end{itemize}
We say that the variables $C_v$ \emph{$C$-count from round $r\in \NN$} iff the above conditions hold in rounds $r'\ge r$.
\end{definition}
We formalize a task that is a core step in the construction used in~\cite{lenzen19counting}, which we will need to implement in a different way.
Intuitively, filtering distributes a $C$-counter that is (recursively) generated by a subset $T$ of the nodes to all, while limiting the inconsistency in views it the subset contains too many faulty nodes to correctly generate and distribute its shared count.
More concretely, the requirement is that over any $X$ consecutive rounds, there is only a single count value that increases by one modulo $C$ in each round that may be output by correct nodes;
the only other feasible output is $\bot$, a special symbol indicating (possibly transient) faults.
\begin{definition}[$(C,X,T)$-Clock Filtering]\label{def:filterclock}
Suppose that $v\in V$ maintains a variable $F_v\in [C]\cup \{\bot\}$, where $2\le C\in \NN$. Moreover, there is a designated \emph{clock set} $T\subset V$ that maintains variables $C_v\in [C]$. The variables $F_{v}$ \emph{$(C,X,T)$-filter} $C_v$ from round $r\in \NN$ iff the following properties are true:
\begin{itemize}
\item\textbf{Validity}: If fewer than $|T|/2$ nodes in $T$ are faulty and the variables $C_v$ $C$-count, then the variables $F_v$ $C$-count from round $r$.
\item\textbf{$X$-round Crusader Agreement}: If for $r'\ge r$ and correct $v$ it holds that $F_{v,r'}=c\bmod C$, then for all correct $w$ and rounds $\hat{r}\in\{r',\ldots r'+X\}$, it holds that $F_{w,\hat{r}}\in \{F_{v,\hat{r}}+\hat{r}-r'\bmod C,\bot\}$.
\end{itemize}
\end{definition}
For the sake of notational convenience, we adopt the following conventions with respect to the above definitions.
First, if $C$, $X$, or $T$ are clear from context, we may omit them.
Second, when we say that an algorithm solves one of these problems with stabilization time $S$, we mean that it maintains variables that count or filter from round $S$, respectively.

\begin{restatable}[implicit in \cite{lenzen17counting,lenzen19counting}]{lemma}{filtering}\label{lem:filtering}
$(C,X,T)$-Clock Filtering can be solved with stabilization time $S=X+2$, where each node sends an $O(\log |C|)$-sized message to each other node in each round.
\end{restatable}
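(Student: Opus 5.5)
The construction combines two voting stages, a per-round threshold vote among all $n$ nodes and a streak counter, so that acceptance of a count value is globally unique in each round and then persists over a window of $X$ rounds. Every message is a single value in $[C]\cup\{\bot\}$, so each node sends an $O(\log C)$-bit message to each other node per round.

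\emph{Per-round state and the two-stage vote.} Each node $v$ keeps a candidate $a_v\in[C]\cup\{\bot\}$, a local counter $L_v\in[C]$ and a streak counter $k_v\in[X+2]$, and outputs $F_v:=L_v$ if $k_v=X+1$ and $F_v:=\bot$ otherwise. In every round:
\begin{enumerate}
\item Each node $u\in T$ broadcasts $C_u$, and every node broadcasts its candidate $a_v$ from the previous round.
\item Node $v$ sets its new candidate $a_v:=c$ if it received the same value $c$ from more than $|T|/2$ nodes of $T$, and $a_v:=\bot$ otherwise. Such a $c$ is unique, since two values cannot each be reported by a strict majority of $T$.
\item Node $v$ \emph{accepts} $c+1\bmod C$ if at least $n-t$ nodes reported the same previous candidate $c\neq\bot$.
\item If $v$ accepted a value $b$ and $b=L_v+1\bmod C$, it sets $L_v:=b$ and $k_v:=\min\{k_v+1,X+1\}$. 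Otherwise it sets $L_v:=L_v+1\bmod C$ if nothing was accepted, or $L_v:=b$ if $b$ was accepted, and in both cases resets $k_v:=0$.
\end{enumerate}

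\emph{Uniqueness of acceptance.} The central claim is that in any round, all correct nodes that accept something accept the same value. Suppose two correct nodes accept based on $c$ and $c'$. Each had at least $n-t$ supporters, so the two supporter sets intersect in at least $n-2t\ge t+1$ nodes. Hence some correct node reported both $c$ and $c'$. A correct node broadcasts the same candidate to everyone, so $c=c'$.

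\emph{Crusader agreement.} Suppose correct $v$ outputs $F_{v,r'}=c$, so $v$ accepted consistent (incrementing) values in each of rounds $r'-X,\dots,r'$. Suppose correct $w$ outputs a non-$\bot$ value at some $\hat r\in\{r',\dots,r'+X\}$. Then $w$ accepted consistent values in each of rounds $\hat r-X,\dots,\hat r$. Since $\hat r-X\le r'$, both nodes accepted in round $r'$. By uniqueness of acceptance, they accepted the same value in that round. From then on both nodes' values advance by exactly one per round, so $F_{w,\hat r}=c+\hat r-r'\bmod C$. This needs $r'-X\ge 1$ so that the streak consists of genuine rounds; this fixes the start round.

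\emph{Validity.} Assume fewer than $|T|/2$ nodes of $T$ are faulty and the $C_u$ count. Then more than $|T|/2$ correct nodes of $T$ broadcast the same value each round. Every correct node therefore takes it as candidate, and the at least $n-t$ correct nodes support it in the next round. All correct nodes accept the correct count from round $2$ on, their streaks saturate $X$ rounds later, and from then on they output the count.

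\emph{Main obstacle.} The delicate part is bookkeeping of the arbitrary initial state. Initial streak values may claim a streak that never happened, so the crusader property can only be asserted once every non-$\bot$ output is backed by $X+1$ real acceptances. This is where the choice $S=X+2$ comes from: one round to refresh candidates, plus $X+1$ rounds of genuine streak. The off-by-one shifts between the candidate round and the acceptance round must be aligned carefully.
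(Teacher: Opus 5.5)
Your construction is the paper's in all essentials: a majority over $T$ forms a candidate, an $(n-t)$-quorum over $V$ accepts it, quorum intersection gives per-round uniqueness, and a counter certifies a window of consistent acceptances. Your uniqueness and crusader-agreement arguments are sound. The gap is exactly the off-by-one you flag but do not resolve, and as written it costs a round.

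Your rule resets $k_v:=0$ on \emph{every} break, including acceptance of some $b\neq L_v+1$. Under the validity hypothesis, round $2$ is the first round with a guaranteed acceptance. But $L_{v,1}$ still comes from the arbitrary initial state. For example, if all correct $a_{u,0}=\bot$, nothing is accepted in round $1$, and $L_{v,1}=L_{v,0}+1$ with $L_{v,0}$ adversarial. So the adversary can make the round-$2$ acceptance non-continuing and force $k_{v,2}=0$. After that $k_{v,2+j}=j$, so $F_v\neq\bot$ only from round $X+3$. Hence ``their streaks saturate $X$ rounds later'' is false. Your accounting (``one round to refresh candidates, plus $X+1$ rounds of streak'') silently counts the round-$2$ acceptance toward a streak your rule resets. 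The algorithm as stated has $S=X+3$, not $X+2$.

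The repair is to distinguish the two kinds of break. On accepting $b\neq L_v+1$, set $L_v:=b$ and $k_v:=1$; keep $k_v:=0$ for rounds with no acceptance. Then $k_{v,r'}=X+1$ still certifies acceptances in every round $r'-X,\dots,r'$, with increments in rounds $r'-X+1,\dots,r'$. That is all your crusader argument uses: $v$ and $w$ both accept in round $r'$, and $w$ increments in rounds $r'+1,\dots,\hat r$. Meanwhile, under validity $k_{v,2}\ge 1$, hence $k_{v,X+2}=X+1$.

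Do not instead lower the output threshold to $k_v=X$. For $\hat r=r'+X$ the windows $[r'-X+1,r']$ and $[r'+1,r'+X]$ are then disjoint. A node $w$ that saw no quorum during $v$'s window could start a fresh streak from a stale $L_w$.

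For comparison, the paper's countdown resets $X_v:=X$ on both kinds of break and outputs at $X_v=0$. It reaches validity at $X+2$ precisely because it needs only $X$ good rounds after the reset. For the same reason, it certifies quorums only in rounds $r-X+1,\dots,r$: a node that misses the quorum in round $r'$ keeps a stale $M_w$ and can output $c+X-1$ at round $r'+X$. So the paper's claim that $v$ already received a consistent value in round $r-X$ is not justified. Separating ``no quorum'' from ``quorum with a non-continuing value'' repairs either version.
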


A well-known building block of consensus algorithms is graded agreement, which will come in handy as a subroutine for us as well.
\begin{definition}[Graded Agreement]
The \emph{Graded Agreement} problem is specified as follows.
Each node $v$ has input $x_v\in \mathcal{V}$ and computes an output $(y_v,g_v)\in \mathcal{V}\times \{0,1\}$ with the following guarantees:
\begin{itemize}
  \item \textbf{Validity}: If there is $x\in \mathcal{V}$ so that $x_v=x$ for all correct $v$, then $(y_v,g_v)=(x,1)$ for all correct $v$.
  \item \textbf{Graded Agreement}: If $g_w=1$ for some correct $w$, then $y_v=y_w$ for all correct $v$.
\end{itemize}
\end{definition}
\begin{restatable}[implicit in \cite{berman89consensus}, see also \cite{feldman97probabilisitic}]{lemma}{graded}\label{lem:graded}
Graded agreement can be solved in $2$ rounds, with each node sending an $O(\log |\mathcal{V}|)$-sized message to each other node in each round.
\end{restatable}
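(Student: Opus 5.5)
The plan is to give the classic two-round graded agreement protocol in the style of Berman--Garay--Perry, using only $t<n/3$. Round~1 is an echo of the input. Round~2 is a vote on a value that is backed by a supermajority.

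\textbf{Round 1.} Each correct node $v$ sends $x_v$ to all nodes. If $v$ receives some value $x$ from at least $n-t$ nodes, it sets $z_v:=x$; otherwise it sets $z_v:=\bot$.

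\textbf{Round 2.} Node $v$ sends $z_v$ to all nodes. Let $c_v(x)$ be the number of round-2 messages carrying $x\neq\bot$. Node $v$ outputs $g_v:=1$ and $y_v:=x$ if $c_v(x)\ge n-t$ for some $x$. Otherwise it outputs $g_v:=0$ and $y_v:=x$ for a value $x$ with $c_v(x)\ge t+1$, if such an $x$ exists. If none exists, it falls back to $y_v:=x_v$.

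Each message encodes an element of $\mathcal{V}\cup\{\bot\}$, so it has size $O(\log|\mathcal{V}|)$.

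The first step is a uniqueness claim: all non-$\bot$ values $z_v$ of correct nodes are identical. Suppose correct $v,w$ set $z_v=x$ and $z_w=x'$. Then at least $n-2t\ge t+1$ correct nodes sent $x$ to $v$, and likewise for $x'$. Two such sets of correct senders have total size at least $2(n-2t)>n-t$, which exceeds the number of correct nodes. So the sets intersect, and a correct node sends the same value to everyone, giving $x=x'$. Call this common value $x^*$ if it exists.

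With uniqueness in hand, the two properties follow.

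\textbf{Validity.} If all correct inputs equal $x$, every correct node receives $x$ at least $n-t$ times in round~1. Hence $z_v=x$ for all correct $v$, so every correct node counts $c_v(x)\ge n-t$ and outputs $(x,1)$. Because $n-t>2t$, no other value can also reach the threshold $n-t$, so the output rule is unambiguous.

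\textbf{Graded agreement.} Suppose $g_w=1$ for a correct $w$, with $c_w(x)\ge n-t$. At least $n-2t\ge t+1$ of these messages come from correct nodes, so $x=x^*$. Moreover, at least $n-2t$ correct nodes send $x^*$ in round~2, so every correct $v$ has $c_v(x^*)\ge n-2t\ge t+1$.

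It remains to show that $v$ cannot pick a different value. Any $x'\neq x^*$ receives votes only from faulty nodes, so $c_v(x')\le t<t+1$. Hence $x^*$ is the unique candidate at either threshold, and $y_v=x^*=y_w$.

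The only delicate step is the counting in the uniqueness claim and in the second-round threshold. Both reduce to the inequalities $n-2t\ge t+1$ and $2(n-2t)>n-t$, which hold for $n\ge 3t+1$.
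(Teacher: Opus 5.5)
Your protocol is a correct two-round graded agreement, and your validity and graded-agreement arguments are fine once uniqueness is in place. However, the counting you give for the uniqueness claim does not work as written.

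You lower-bound the correct senders of $x$ to $v$ by $n-2t$, and then say that $2(n-2t)>n-t$ ``exceeds the number of correct nodes.'' The number of correct nodes is $n-f\ge n-t$, with equality only when $f=t$. The protocol must work for every $f\le t$, so this inequality proves nothing in general. For $n=4$, $t=1$, $f=0$, two sets of $n-2t=2$ correct nodes among $4$ can be disjoint.

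The repair is the standard quorum intersection on the full sender sets. Let $A$ be the set of nodes that sent $x$ to $v$ and $B$ the set that sent $x'$ to $w$. Then $|A\cap B|\ge 2(n-t)-n=n-2t\ge t+1>f$. So some correct node lies in both sets, and since it sends the same value to everyone, $x=x'$. Equivalently, each set contains at least $n-t-f$ of the $n-f$ correct nodes, and $2(n-t-f)>n-f$ because $n>2t+f$. The inequality you actually need is $n-2t>f$, not $2(n-2t)>n-t$.

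For comparison, the paper's protocol lets a correct node echo in round~2 only its own input, and only if it received that input $n-t$ times. This avoids comparing the round-1 views of two different receivers. Suppose a correct $w$ has $g_w=1$. Then more than $t$ correct nodes echoed $y_w$, and each of them saw more than $t$ correct nodes send $y_w$ in round~1. Hence every correct node with a different input sees more than $t$ conflicting values and stays silent. So in round~2 every correct node receives more than $t$ copies of $y_w$ and at most $t$ of anything else.

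Your variant, which echoes whatever value reached $n-t$, is equally valid. It gives a cleaner modular structure (a uniqueness lemma first, then both properties), at the cost of the intersection step above. Your thresholds and final case analysis match the paper's.
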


Finally, we will make use of variants of the central building block of the Phase King algorithm from~\cite{berman89consensus}, which can be generalized to solve the following weaker form of multi-valued consensus.
\begin{definition}[King Consensus]
In the \emph{King Consensus} problem, each node $v$ has input $(x_v,\ell_v)\in \mathcal{V}\times (V\,\dot{\cup}\,\{\bot\})$ and computes output $y_v\in \mathcal{V}\,\dot{\cup}\,\{\bot\}$ with the following guarantees:
\begin{itemize}
 \item \textbf{Validity}: If there is $x\in \mathcal{V}$ so that $x_v=x$ for all correct $v$, then $y_v\in \{x,\bot\}$ for all correct $v$.
 \item \textbf{Default}: If $\ell_v=\bot$ for all correct $v$, then $y_v=\bot$ for all correct $v$.
 \item \textbf{King Agreement}: If there is correct $\ell \in V$ so that $\ell_v=\ell$ for all correct $v$, then $y_v=y_{\ell}\neq \bot$ for all correct $v$.
\end{itemize}
\end{definition}
King consensus relaxes classic (multi-valued) consensus in two ways.
If all nodes' inputs agree on the same correct leader $\ell$, intuitively it behaves like consensus:
all outputs agree, and if $x_v=x$ for all nodes $v$, then this output is going to be $x$.
However, we relax the validity condition in that if $x_v=x$ for all nodes $v$, then it is sufficient to output $x$ or $\bot$.
This provides leeway to save on communication when nodes think that there is no need to execute consensus at all, represented by $\ell_v=\bot$.
The second relaxation is that agreement is only guaranteed if all nodes' inputs agree on some correct leader $\ell=\ell_v$, allowing for an $O(1)$-round solution that still meets the requirements if the nodes do not agree on the filtered clock values.
\begin{restatable}[implicit in \cite{berman89consensus}]{lemma}{king}\label{lem:king}
King Consensus can be solved in $3$ rounds, where each node sends an $O(\log |\mathcal{V}|)$-sized message to each other node in each round.
\end{restatable}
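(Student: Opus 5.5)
The plan is to adapt the king phase of the Phase King algorithm~\cite{berman89consensus} into a three-round protocol. Throughout we use $n>3t$ and $f\le t$, so any two sets of $n-t$ senders share at least $n-2t\ge t+1$ nodes, and hence at least one correct node.

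\textbf{Round 1 (filter).} Each node $v$ sends $x_v$ to all nodes. Node $v$ sets $z_v:=x$ if it received $x$ from at least $n-t$ nodes, and $z_v:=\bot$ otherwise.

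Two correct nodes cannot obtain distinct non-$\bot$ values. If they did, each value would have at least $n-2t$ correct senders. Since correct nodes send only one value, the total number of correct senders would be at least $2(n-2t)>n-t$, which is impossible.

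\textbf{Round 2 (support).} Each $v$ sends $z_v$ to all nodes. Node $v$ records the value $x\ne\bot$ received at least $t+1$ times, if any, and sets $s_v:=1$ if this count is at least $n-t$.

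At most one value can reach $t+1$ receipts. Any such value has a correct sender, and by the Round 1 claim all correct non-$\bot$ values are equal. So $v$ records a unique $w_v$, which is well defined whenever it exists.

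\textbf{Round 3 (king).} Every node $u$ that has some $\ell_v=u$ does not know this in advance, so instead each node $u$ sends a king value to all nodes. This value is $w_u$ if $w_u$ exists, and otherwise $x_u$. The message has size $O(\log|\mathcal{V}|)$. Node $v$ then outputs as follows:
\begin{itemize}
\item if $\ell_v=\bot$, output $\bot$;
\item else if $s_v=1$, output $w_v$;
\item else output the king value received from $\ell_v$.
\end{itemize}
The fixed rule that $\ell_v=\bot$ forces output $\bot$ gives \textbf{Default} immediately.

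\textbf{Validity.} Suppose all correct inputs equal $x$. Every correct node then receives $x$ at least $n-t$ times in Round 1, so $z_v=x$. In Round 2 it receives $x$ at least $n-t$ times, so $w_v=x$ and $s_v=1$. Every correct output is therefore $x$ or $\bot$.

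\textbf{King Agreement.} Suppose all correct nodes use the same correct leader $\ell$. There are two cases.
\begin{itemize}
\item If no correct node has $s_v=1$, all correct nodes adopt $\ell$'s king value, and $\ell$ does the same for itself.
\item If some correct $v$ has $s_v=1$ with value $w$, then at least $n-2t\ge t+1$ correct nodes sent $w$ in Round 2. Hence every correct node, including $\ell$, receives $w$ at least $t+1$ times and records $w_\cdot=w$. So $\ell$ broadcasts $w$, and nodes with $s=1$ also output $w$.
\end{itemize}
In both cases every correct node outputs $y_\ell$.

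This handles $\ell_v\in V$. For $y_\ell\neq\bot$: the king value is always a genuine element of $\mathcal{V}$, so $y_\ell\neq\bot$. This holds even when $\ell$ has $s_\ell=1$, since then $w_\ell\neq\bot$.

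\textbf{Main obstacle.} The delicate step is the uniqueness argument that a correct node's $s_v=1$ forces $\ell$'s king value to equal $w$. This rests on the two quorum-intersection calculations above: the Round 1 filtering ensures no two values can gain $t+1$ support, and the Round 2 threshold ensures $n-t$ support propagates to the leader as $t+1$ support. I would verify both carefully with the exact bound $t=\lfloor (n-1)/3\rfloor$.
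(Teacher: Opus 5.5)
Your protocol is correct and is essentially the paper's argument. The first two rounds form a graded agreement: the paper invokes \Cref{lem:graded} as a black box, while you inline a Phase-King-style variant with $\bot$ and the grade $s_v$. The third round is the king's broadcast, which is adopted exactly by the nodes without grade $1$.

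Two small repairs are needed.
\begin{itemize}
\item In the Round~1 claim, a value received $n-t$ times has at least $n-t-f$ correct senders, not merely $n-2t$. The contradiction comes from $2(n-t-f)>n-f$, which holds because $n\ge 3t+1>2t+f$. As written, $2(n-2t)>n-t$ does not contradict the presence of $n-f>n-t$ correct nodes when $f<t$. Alternatively, just apply the quorum-intersection fact from your first paragraph.
\item You must specify a fallback output, e.g.\ $w_v$ or $x_v$, for when a faulty $\ell_v$ sends no valid king value. The paper does this by falling back to $z_v$. Any choice works, since this case arises only when $s_v=0$ and the leader is faulty.
\end{itemize}
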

For the sake of completeness, explicit proofs of \Cref{lem:filtering,lem:graded,lem:king} are provided in~\Cref{app:implicit}. The reader is invited to use them as a warm-up before the more involved procedures, but advised to do so after familiarizing themselves with the conventions stated in~\Cref{sec:conventions}.

%% file: overview.tex
\section{Technical Overview}\label{sec:overview}
In this section, we present the high-level ideas underlying our results.
The focus is on the main obstacles and solutions.
We follow the same order as in the subsequent sections, so this outline sketches the progression of the technical exposition in the remainder of the paper.
\paragraph{The big picture}
Essentially, solving counting boils down to solving consensus on the current values of the output counters or clocks:
if they all agree, they should not be changed except for being incremented deterministically by one modulo $C$ each round;
otherwise, \emph{any} agreed-on value is fine.
This corresponds to the validity and agreement properties of consensus, but runs into the issue that there is no (guaranteed) agreement as to \emph{when} to run consensus.
Comparatively simple solutions can be constructed by initializing a new instance to consensus \emph{every} round.
However, this is inherently inefficient in terms of communication, as it incurs the communication cost of a complete run of a consensus protocol per round.

Other approaches rely on randomization to get ``lucky'' and achieve sufficient coordination, as e.g.\ done in~\cite{lenzen13socs}. However, any such approach has the structural properties subjecting it to the same trade-off between the probability of stabilization and the time to achieve it as the one between round complexity and probability of success for consensus~\cite{chor89simple}.

The third type of method that has been used in the literature is to partition the node set into $V_0$ and $V_1$, solve the task recursively on each of these sets, and then use the obtained counters to coordinate when consensus instances are initialized, cf.~\cite{lenzen19counting}. Using a balanced partition limits the recursion depth, number of instances each node participates in, and incurred overheads to $O(\log n)$. We follow this strategy as the only one being compatible with our aims. While we have to swap out every component used in~\cite{lenzen19counting} to achieve our main result, it is instructive to start from their solution and evolve it step by step.

Conveniently, it is trivial to solve counting when $n=1$, so the challenge is to (efficiently) solve counting on $n=2^k$ nodes given a working algorithm for $n/2=2^{k-1}$ nodes. If in neither of the two sets $V_b$ a third of the nodes would be faulty, both sets would stabilize to produce counting outputs. The nodes of $V_b$ would send their current output value to all nodes, which would adopt the majority value as (perceived) ``clock'' $C_b$ for each $b$. We could use these clocks to initialize a consensus instance that terminates after $R$ rounds every $2R+1$ rounds, where the time to run the instance as well as the input given to it are given by the value taken by $C_b$. We could then set the current output clock value to the output of any terminating instance plus $R\bmod C$; otherwise, the output is simply incremented by $1\bmod C$. While the clocks and hence the instances might be arbitrarily aligned relative to each other, one of them is going to terminate while no ``companion'' instance controlled by the other clock is running. From then on, all clock values agree, so that future consensus instances do not affect the count, as they always end up outputting the already agreed-on input, which is then incremented by exactly $R\bmod C$.

The obstacle this strawman approach runs into is that there is no guarantee that $C_b$ counts correctly for both $b\in\{0,1\}$, or even that all correct nodes perceive the same majority value sent by nodes in $V_b$. Abstractly speaking, we have two supernodes $V_0$ and $V_1$ each supplying a clock, but one of them may be Byzantine, sending arbitrary values. In essence, the authors of~\cite{lenzen19counting} address this by preventing equivocation via applying crusader broadcast (a.k.a.\ crusader agreement, see~\cite{dolev82again}). This means to ensure that there is a unique (clock) value $c\in [C]$ such that each correct party either accepts $c$ as the value sent by $V_b$ or outputs $\bot$ (indicating that $V_b$ is Byzantine or has not stabilized yet). However, ``equivocation'' here needs to be interpreted in a broader sense: if what $V_b$ sends is not counting modulo $C$, this is also showing that something is still amiss, and recipients will output $\bot$ whenever the count was off within the last $X$ rounds. This is captured by the $X$-round crusader agreement property of the clock filtering task.

This does not yet fully overcome the above obstacle, as now some correct nodes may participate in a consensus instance triggered by $C_b$ and others do not. While it is possible to require that any already established agreement is not broken this way, the ``bad'' instance controlled by $C_b$ might be initialized before one controlled by $C_{1-b}$ terminates. Hence, some nodes might change their output clocks as a result, breaking the agreement that has just been reached with the help of the instance controlled by $C_{1-b}$. To address this, the final tweak is to have the two clocks $C_b$ initialize instances at slightly different rates, e.g.\ every $2R$ and $3R$ rounds, respectively. Then, the above scenario implies that the next instance controlled by $C_{1-b}$ runs without the same kind of interference and achieves stabilization of the output clock values.

In summary, the strategy from~\cite{lenzen19counting} works as follows.
\begin{enumerate}
  \item Partition $[n]=V_0\dot{\cup}V_1$.
  \item Recursively let $V_b$ count modulo $C\in\Theta(n)$.
  \item Use these outputs as inputs to $(C,\Theta(n),V_b)$-filtering.
  \item Use the outputs of filtering to initialize consensus instances (where even if some nodes do not participate, validity holds).
  \item Use any outputs of such a consensus instance to adjust the output clock; otherwise increment by one modulo $C$. 
\end{enumerate}

We stress that naively replacing the consensus routine with an early-stopping one is insufficient for our purposes:
we do not know at which frequency to initialize fresh instances to (i) achieve stabilization time $O(f+1)$ yet (ii) avoid that instances remain well-separated.

\paragraph{Reducing stabilization time}

In light of the above, to make the stabilization time of a scheme like this $O(f+1)$, we need to use (i) $(C,O(1),V_b)$-filtering and (ii) a consensus subroutine that runs for $O(1)$ rounds only. Alas, achieving consensus deterministically takes at least $t+1$ rounds, so we need to relax the task somehow. Hence, the first crucial adaptation is to replace consensus by king consensus with varying ``kings'' or \emph{leaders} $\ell$, where $F_b$ tells us for each instance which leader to use. Note that king consensus relaxes the agreement property of consensus to hold only when the leader is correct. Fortunately, we only need the (king) agreement property \emph{once,} to establish agreement on the count; afterwards, validity ensures that the output clocks are never modified again by a call to king consensus and hence count correctly, regardless of whether future leaders are correct. Each call to king consensus requires only $O(1)$ rounds, achieving (ii). At the same time, the output clocks stabilize at the latest when a correct leader is reached, i.e., within $O(f+1)$ rounds---counting from the round in which the clock filtering subroutine stabilized.

As already indicated, to achieve fast stabilization for clock filtering, we need to choose $X=O(1)$, i.e., ``forgive'' an incorrect count within $O(1)$ rounds. This allows the adversary to make the value of $F_b$ ``jump'' throughout the course of the stabilization process if $V_b$ contains at least $|V_b|/3$ faulty nodes. To address this, we choose $X$ as a large enough constant such that the additional ``collisions'' that the adversary can induce between instances controlled by $F_b$ and $F_{b-1}$ are limited to a small fraction, say $1/4$, of the instances. While this may prevent \emph{some} correct leaders from enforcing stabilization, among $2f$ instances there will be at least $f$ correct leaders, so that at least $2f-f-2f/4>0$ succeed in stabilizing the output clock values.

Note that this recursion scheme succeeds based on the \emph{first} clock $C_b$ that stabilizes. Accordingly, we can determine a stabilization time bound based on the recurrence $S(f,n)\le O(f)+S(\lfloor f/2\rfloor,\lceil n/2\rceil)$, leading to $S(f,n)\le O\left(\sum_{i=0}^{\lceil \log n\rceil} \max\{f/2^i,1\}\right)=O(f+\log n)$. One could remove the additive $O(\log n)$ by alternating between partitioning evenly and splitting off a single node in the recursion, as then the summation would end once $f/2^i<1$. We postpone this, as such an imbalanced split poses its own challenges when also trying to control bit complexity.

\paragraph{Reducing amortized bit complexity}
In order to achieve an amortized bit complexity of $\tilde{O}(n(f+1))$ per round in our framework, we need to reduce the bit complexities of both the filtering procedure and king consensus. The first observation is that, so long as we have $|V_0|\approx |V_1|\approx n/2$, it is sufficient to do so only when $V_b$ contains fewer than $|V_b|/3$ faults for \emph{each} $b\in \{0,1\}$, as otherwise $\tilde{O}(n(f+1))=\tilde{O}(n^2)$. Similarly, we do not need to worry about the communication cost during the stabilization phase, as the construction uses small messages. During $\tilde{O}(f+1)$ rounds, correct nodes send $\tilde{O}(n^2(f+1))$ bits, contributing $\tilde{O}(n(f+1))$ to the average over $n$ rounds, which is our target.

This is excellent news, as it allows us to work with the assumption that the recursively constructed counters $C_b$ both already count and all correct nodes agree on their values. Thus, king consensus instances are initialized with all correct nodes in agreement on when they start and which party is the leader. In particular, there are only $O(f)$ instances with faulty leaders within $n$ rounds, i.e., such instances also contribute $\tilde{O}((f+1)n)$ bits amortized over $n$ rounds.

Accordingly, we only need to control the communication cost of king consensus instances with agreed-on correct leaders and those without a leader.\footnote{As validity must hold even if nodes do not agree on who the leader is or even whether an instance is running, we formalize this as an instance being initiated in \emph{every} round, but nodes using input $\bot$ to indicate that they have no leader for this instance. This permits correct nodes to ``help'' in achieving the properties if necessary even when they believe that no instance should be started.} To do so, let us revisit the implementation of king consensus derived from the work introducing the Phase King algorithm~\cite{berman89consensus} for the special case that all nodes agree on a leader $\ell$, i.e., $\ell_v=\ell\neq \bot$ for all $v\in [n]$.
\begin{enumerate}
  \item Perform graded agreement on the input values $x_v$, resulting in outputs $(y_v,g_v)$.
  \item The leader $\ell$ sends $y_{\ell}$ to all nodes.
  \item If $g_v=1$, $v$ outputs $y_v$. Otherwise, $v$ outputs $y_{\ell}$.
\end{enumerate}
The reader can readily verify that the guarantees of graded agreement ensure the validity and king agreement properties of king consensus.\footnote{Outputs of $\bot$ and the default property handle the case that some or all nodes have input $\ell_v=\bot$, which we ruled out for the special case we consider here.} The leader's transmission is uncritical from the perspective of communication cost, but graded agreement is costly, requiring $\Omega(n^2)$ bits.

To reduce the bit complexity of king consensus in this special case, we first have the king determine whether there is actual \emph{need} to establish agreement, by having each node send their input to the king. If the system has stabilized, the king will receive the same input from $n-t$ nodes and conclude that an expensive graded consensus is not necessary.

However, the king might reach this conclusion also prior to stabilization, with up to $t$ correct nodes having a differing minority input value. To these nodes, the king points out that their input value differs from the majority. While the king cannot be trusted blindly, this is justification for these nodes to ask all others for their input and for them to respond: we already pointed out that we can bear a cost of $\tilde{O}(n^2)$ bits if the king is faulty, and faulty nodes querying correct nodes for their input can cause no more than $\tilde{O}(nf)$ bits of communication. When a querying node receives at least $n-2t\ge t+1$ values different from its own, it can safely adopt this value, as this is proof that its own input is not shared by all correct nodes. Thus, if the king concludes that no graded consensus is necessary, all correct nodes will adopt the majority value.

On the other hand, if the king receives fewer than $n-t$ times the same value, this is proof that stabilization has not yet occured. In this case it triggers graded agreement. Note that this step empowers a faulty king to invoke graded agreement with only a subset of the correct nodes taking part. We resolve this issue by defining a weaker version of graded agreement. Here, formally all nodes take part, but the graded agreement property only holds if all correct nodes have an additional input bit $s_v$ equal to $1$. A correct king uses the latter to create the behavior of a ``regular'' graded agreement when needed; the validity property remains unchanged, preventing a faulty king from disrupting existing agreement. This relaxation allows an implementation in which node $v$ stays silent, i.e., sends no messages, if $s_v=0$, achieving the desired reduction in communication cost if all correct nodes share the same input to the king consensus instance.

To achieve filtering at amortized cost of $\tilde{O}(n(f+1))$ bits, recall that we may assume that $f<|V_b|/3$ and that the counting instance on $V_b$ has already stabilized. Hence, all but $f$ nodes in $|V_b|$ are correct and agree on $C_b$. Node $v$ maintains a ``guess'' of the current value of $C_b$ and memorizes what it believes this value to be at all other nodes. If the guess has been counting for $X$ rounds and no inconsistency was proven, it will output it, otherwise it will output $\bot$. The algorithm now carefully implements several checks and queries to other nodes' current guesses or $C_b$ values (for those in $V_b$) to ensure that (i) if the guess of $v$ is incorrect, it will correct it within $O(f+1)$ rounds and (ii) after stabilization, only $\tilde{O}(n(f+1))$ bits are sent on average.

The principal ideas to achieve this are the following.
\begin{enumerate}
  \item Node $v$ queries nodes $w\in V_b$ for the $C_b$ value on a round-robin basis, one per round ($\tilde{O}(n)$ bits). If $w$ responds with a value different from $v$'s current guess, $v$ queries all nodes, causing an amortized cost of $\tilde{O}(nf)$ bits.
  \item If $v$ believes that a majority of nodes in $V_b$ has a different value than its current guess, it adjusts its guess, sends its new guess to all nodes, and queries all nodes (for verification purposes). This happens at most twice after stabilization of $C_b$, as after a query to all nodes $v$ updated all stale information. Hence, the amortized cost is $\tilde{O}(n)$ bits. Together with the first rule, this guarantees that all nodes adopt the correct guess within $f+O(1)$ rounds.
  \item If a node has a guess that is different from the locally memorized guess of another node, it queries that node and updates its memory based on the response. Once all correct nodes adopted the correct guess and communicated about this, this costs only $\tilde{O}(nf)$ bits per round, since there are no queries between correct nodes anymore.
\end{enumerate}
This last rule enforces that correct nodes with different guesses notice the discrepancy. If a node believes that more than $n/3$, i.e., more than $t$, guesses deviate from its own, it can conclude that not all correct nodes agree on $C_b$ and will output $\bot$ at least for the next $X$ rounds. This guarantees that there is only one non-$\bot$ value output by correct nodes within $X$ rounds, i.e., the $X$-round crusader agreement property is satisfied. On the other hand, if $V_b$ contains fewer than $|V_b|/3$ faulty parties, all correct parties adopt the majority value in $V_b$ and will synchronize their views of each other's guesses. Afterwards, the will not perceive any inconsistencies and start outputting the majority value within an additional $X\in O(1)$ rounds, showing the validity property of clock filtering.


\paragraph{Achieving asymptotically optimal stabilization time}
As mentioned earlier, removing the additive $O(\log n)$ from the stabilization time is challenging when seeking to simultaneously keep the amortized bit complexity small. Our approach is to use a recursion template where one branch has only a single node $\ell$, which can easily generate and distribute a count (if correct), while the other branch generates its clock $C$ by recursing on the \emph{entire} node set. By alternating between the two templates 
we ensure stabilization in $O(1)$ rounds once the current node set contains no faults, yet cut the number of faults in the branch with fewest faults in half with every other level of recursion, maintaining the geometric decay in stabilization time on the ``fastest'' branch.

The primary obstacle to keeping a small communication footprint with this approach is that if $\ell$ is faulty, we cannot prevent it from manipulating the clock it generates to make itself leader every $X+1$ rounds. However, we need that $X\in O(1)$ to ensure fast stabilization if $f=0$. Hence, our previous strategy of amortizing the cost incurred by instances with faulty leaders is not going to work here. In fact, we can give up on filtering altogether and accept that $\ell$ is the leader in every instance, letting it choose when to initiate king consensus on its own. If correct, $\ell$ will choose a time that prevents an overlap with an instance controlled by $C$, and we let nodes ignore $\ell$ if it attempts to initiate an instance that would collide with one controlled by $C$.

To avoid paying too much for the king consensus instances lead by $\ell$, we relax the king agreement property to what we term \emph{weak king agreement:} only if there are no faults whatsoever, $\ell$ needs to be able to get correct nodes to adopt the same value if their inputs differ. This relaxation to the \emph{weak king consensus} task is irrelevant for our stabilization mechanism, as (weak) king agreement is only needed to stabilize the output clocks, and for this recursive pattern, the single-node branch of $\ell$ needs to ensure stabilization only if there are no faults.

On the other hand, this provides us with just enough leeway to ensure that weak king consensus can be implemented at bit complexity $\tilde{O}(n(f+1))$ even if $\ell$ is faulty. Our strategy here is to not rely on $\ell$ for ``justifying'' communication by correct nodes. Instead, we employ the following strategy, where we describe the special case that no node ignores the leader.
\begin{enumerate}
  \item Nodes compare their current clock value to their neighbors in a constant-degree expander. If stabilization has been achieved, i.e., all correct nodes have the same clock value, no more than $O(f)$ nodes, those with a faulty neighbor in the expander, can observe a discrepancy. On the other hand, if $f=0$ and $k>0$ nodes have a clock value different from the majority value, at least $\varepsilon k$ nodes, where the constant $\varepsilon$ depends on the expander, observe a discrepancy.
  \item All nodes observing a discrepancy alert all nodes of this. If $v$ receives $k_v$ such alert messages, it queries each node $w\in \{v,\ldots,v+\lceil 2k_v/\varepsilon\rceil\bmod n\}$ for its clock value; these nodes respond with their clock values. If the system has already stabilized, all of this incurs only $\tilde{O}(n f)$ bits of communication: by the previous point, then $k\in O(f)$ at each node $v$, so there are $O(nf)$ notification, query, and response messages in total.
  \item All nodes for which half of their queries resulted in responses with clock values different from their own now query \emph{all} nodes for their clock values, which respond. Again, by the choice of the communication pattern in the previous step, if all correct nodes agree, this does not cause more than $O(f)$ nodes to perform such queries, resulting in $O(nf)$ messages in total. On the other hand, if $f=0$, each node that does not have the majority clock value will be among the querying nodes in this step and learn that at least half of the nodes (claim to) have a different clock value. In particular, more than $t$ nodes have a different clock value, proving to them that it is safe, i.e., will not violate validity, to adopt a clock value proposed by $\ell$.
  \item This is, finally, the point where $\ell$ steps in and proposes the majority value it perceives to all nodes. If $f=0$, this \emph{is} the actual majority value. All nodes that learned in the previous step that it is safe to change value adopt the proposal by $\ell$, which in case of $f=0$ establishes agreement, i.e., weak king agreement is satisfied. All other nodes simply output their input value, guaranteeing validity.
\end{enumerate}

%% file: conventions.tex
\section{Conventions for Pseudocode}\label{sec:conventions}
We will employ subroutines that are not self-stabilizing, but run only for a fixed number of rounds $R\in \NN$. We refer to such a subroutine as an \emph{$R$-round algorithm.} Note that it is trivial to obtain an $R$-round algorithm from any algorithm of round complexity at most $R$, by instructing nodes to wait until round $R$ before terminating and producing their output. There is also a fixed bound on the number of subroutine calls that are initialized per round. Each such instance has its own dedicated memory for each of its rounds, where we tacitly assume that there is a known bound on the amount of memory the subroutine requires. This approach can be seen as a implementing ``self-stabilizing pipeline'' for $R$-round algorithms: within $R$ rounds, all inconsistent state from a period of transient faults is erased. Note, however, that we cannot enforce a globally consistent initialization without a common round counter, which is the key challenge of this work.

Moreover, we will run self-stabilizing algorithms as subroutines on a subset of the nodes as part of our recursive solutions. Again, there is a fixed number of such routines. We emphasize this, because it requires care when seeking to reduce the amortized bit complexity to $o(n^2)$: this requires that in many rounds, most pairs of nodes do not exchange any messages. To make this possible while freely combining subroutines, the main routine gathers all messages to be sent to a given destination and concatenates them as a list of pairs consisting of a bit string identifying a subroutine and the corresponding message. If no message is sent by a given subroutine in a given round, we can safely omit the respective pair from the list. Whenever a subroutine sends a message, we will assume that it has size $\Omega(\log n)$ for analysis purposes (even if it is smaller). Since in our recursive algorithms each node participates in $O(\log n)$ different subroutines, this approach bears no asymptotic cost on communication bounds, but greatly simplifies the description of our routines.

With this convention, we can leave the association of subroutine messages to the compiler. In fact, we will allow a constant number of send instructions to be executed by the same (sub)routine in a single round, tacitly assuming that they are treated in the same way. Moreover, we will occasionally send ``special'' messages that have a unique name, e.g.\ ``req,'' but no content. Again, we leave it to the compiler to choose an encoding such that these cannot be misinterpreted as, e.g., a clock value being sent. For convenience, we also assume that the compiler will choose an encoding that is not too wasteful, i.e., we can assume that if $k$ different messages can possibly be sent by a (sub)routine, the encoding usese $O(\log k)$ bits. Faulty nodes may of course deviate from these rules. However, the compiler will instruct nodes to drop any received bit string that is not encoding a valid message. Note that it may still happen that due to an inconsistent (global) state or more than $t$ faults,\footnote{While we assume that $f\le t$ during our analysis, note that subroutine calls on $n'<n$ nodes may have to contend with $f'>n'/3$ faults.} code might produce out-of-spec results or not be able to execute at all. To address this, we make the following assumptions:
\begin{itemize}
  \item The compiler will insert checks for all state variables to make sure that the stored value encodes a valid value. If this is not the case, it resets the variable to an arbitrary in-spec default before executing the code.
  \item $R$-round algorithms that are used as subroutines are given by a state machine. This state machine takes the round number, input (for round $1$) or state at the end of the previous round (for rounds $2,\ldots,R$) to compute the messages it sends in that round, and with the additional input given by messages received from other nodes computes the state at the end of the round (for rounds $1,\ldots,R-1$) or output (for round $R$). For all our non-stabilizing subroutines, it will be obvious from the description that they can be realized this way with bounded memory. The number of required rounds $R$ will also be readily visible.
  \item Pseudocode for self-stabilizing (sub)routines executes successully for any valid state and received messages. Again, this will be obvious from the description of our self-stabilizing (sub)routines. A more subtle point in this regard is that the code must also execute \emph{in a single round,} as we cannot rely on the node knowing ``which round it is,'' i.e., any notion of round progression must be explicitly captured by local variables that can be affected by transient faults. To address this, we require that once a variable has been changed by an instruction that is conditioned on message reception, it cannot be used in any subsequent send instruction. Equivalently, when executing the code for a self-stabilizing routine in round $r$, the content of each message sent is a function of the local variables' states $X_{v,r-1}$ at the end of round $r-1$.
  \item Bounds on message size are maintained even if $f>t$. This is the only property we use in subroutine calls on node sets with too many faults, and it will be clear from our descriptions of algorithms as well.
\end{itemize}

Moreover, we will frequently use statements of the form ``if received $k$ times value $x$.'' In such statements we tacitly assume that the respective messages are sent by distinct senders. In particular, if a faulty node sends more than one message in a given round, the compiler ensures that a correct receiver will process only the first one that arrives. Moreover, for notational convenience, nodes may send messages ``to themselves;'' in particular, a statement like ``send $x$ to all nodes'' executed by $v$ will trigger a subsequent ``if received $x$ from $w$'' condition at $v$ with $w=v$ at node $v$.

%% file: counting.tex
\section{Early-Stabilizing Counting}\label{sec:counting}

In this section, we develop the main approach for achieving stabilization time of $O(f+1)$, leaving the issue of small amortized bit complexity for later. Observe that for a single node, $C$-counting is trivial to solve with $S=1$: the solitary node increments a counter modulo $C$ in each round and outputs its value. We will now solve $C$-counting recursively. To this end, we establish a general template for recursive $C$-counting, whose pseudocode is given in~\Cref{alg:recursion}.

As outlined in \Cref{sec:overview}, we partition the node set into $V_0$ and $V_1$, execute a counting algorithm on each of them, and pass the outputs through filtering subroutines. We use \emph{their} output to initiate instances of king consensus, cycling through leaders, but at slightly different frequencies determined by constants $k_b$, where $b\in \{0,1\}$.

\begin{algorithm2e}[b!]
	\caption{A recursion template for $C$-counting on node set $V$. We provide the code $v\in V$ executes in each round. The positive integers $X,R,k_0,k_1\in O(1)$ will be fixed later. The template is parametrized a partition $V=V_0\,\dot{\cup}\,V_1$, solutions to counting on these sets, to filtering with them as clock sets, and to $R$-round king consensus on node set $V$.\label{alg:recursion}}
	\KwVars{for $b\in\{0,1\}$, outputs $C^b_v$ and $F^b_v$ of counting on $V_b$ and filtering with clock set $V_b$}
	\KwOut{$C_v\in [C]$}
	\ForEach{$b\in \{0,1\}$}{
		\leIf{$F^b_v\bmod k_b n=k_bw$}{$\ell_v^b:=w$}{$\ell_v^b:=\bot$}
		initialize king consensus on universe $[C]\cup \{\bot\}$ with input $(C_v+R\bmod C,\ell_v^b)$\;
		\For{$r\in \{R,\ldots,1\}$}{
			execute the code of round $r$ of king consensus on the state stored for round $r$ and $b$\;
			\If{$r=R$ and the output of the instance is $c\in [C]$}{
				$C_v:=c$\;
			}
			\Else{
				store the new state in the memory block allocated for round $r+1$ and $b$\;
			}
		}
		\If{$v\in V_b$}{
			execute the code of an instance of $(k_b n,X,V_b)$-filtering with input $C^b_v$ and output $F^b_v$\;
			execute the code of an instance of $(k_b n)$-counting on node set $V_b$ with output $C^b_v$\;
		}
		\Else{
			execute the code of an instance of $(k_b n,X,V_b)$-filtering with output $F^b_v$\;
		}
	}
	$C_v:=C_v+1\bmod C$\;
\end{algorithm2e}

In the following, for $b\in \{0,1\}$ denote by $S^b_C$ the stabilization time of the instance of $(k_b n)$-counting on node set $V_b$. Similarly, $S^b_F$ is the stabilization time of the instance of $(k_b n,X,V_b)$-filtering. First, let us note that if there are not too many faults within $V_b$, then the variables $F^b_v$ stabilize and begin to count.
\begin{lemma}\label{lem:filter_good}
Suppose that for $b\in\{0,1\}$, it holds that $f<|V_b|/3$. Then the variables $F^b_v$ $(k_b n)$-count from round $r\ge S^b_C+S^b_F$.
\end{lemma}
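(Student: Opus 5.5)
The plan is to chain the two stabilization guarantees, using the convention that ``solves with stabilization time $S$'' means the relevant variables count or filter from round $S$. The hypothesis $f<|V_b|/3$ is read as saying that $V_b$ contains fewer than $|V_b|/3$ faulty nodes, since at most $f$ nodes are faulty overall. This has two consequences. First, the recursive $(k_b n)$-counting instance on $V_b$ operates within its own resilience bound. By definition of $S^b_C$, the variables $C^b_v$ of correct $v\in V_b$ therefore $(k_b n)$-count from round $S^b_C$. Second, fewer than $|V_b|/2$ nodes of $V_b$ are faulty, which is the precondition of the validity property of filtering in \Cref{def:filterclock}.

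The main step is to apply filtering validity to the execution suffix starting at round $S^b_C$. The filtering instance is self-stabilizing and its inputs are arbitrary before that round. I would therefore treat the global state at the end of round $S^b_C$ as an arbitrary initial state. From that point on, the clock-set variables $C^b_v$ satisfy the premise ``the variables $C_v$ $C$-count'' for the entire remaining execution. Since the filtering instance has stabilization time $S^b_F$ from any initial state, its outputs $F^b_v$ $(k_b n,X,V_b)$-filter from round $S^b_C+S^b_F$ of the original execution. Validity then yields that the $F^b_v$ $(k_b n)$-count from round $S^b_C+S^b_F$, and hence from every round $r\ge S^b_C+S^b_F$.

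The main obstacle I expect is justifying this ``restart'' argument, namely that the filtering guarantee may be applied to a suffix whose inputs count only from round $S^b_C$. The self-stabilization model gives guarantees for arbitrary initial states and arbitrary behavior before round $1$, and the relevant model assumptions hold again from round $S^b_C+1$ on. The conventions in \Cref{sec:conventions} help here: each round's code executes from the current state alone, and the filtering routine depends only on its own state, messages, and current inputs. Together these make the suffix a legitimate execution. One subtlety is that the validity clause in \Cref{def:filterclock} is phrased as ``if the $C_v$ count''. I would interpret this as counting throughout the suffix considered, which is exactly what stabilization of the counting instance provides.
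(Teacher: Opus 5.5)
Your proposal is correct and follows the paper's proof: the recursive counting instance on $V_b$ makes the $C^b_v$ count from round $S^b_C$, and filtering validity (applicable since fewer than $|V_b|/2$ nodes of $V_b$ are faulty) then makes the $F^b_v$ count from round $S^b_C+S^b_F$. Your explicit justification for applying the filtering guarantee to the execution suffix starting at round $S^b_C$ spells out a step that the paper's one-line proof leaves implicit.
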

\begin{proof}
Within $S^b_C$ rounds, the variables $C^b_v$ begin to count. By validity of clock filtering, the variables $F^b_v$ thus count from round $S^b_C+S^b_F$.
\end{proof}
On the other hand, even if there are too many faults in $V_b$, the filtering subroutine ensures that there is a ``virtual clock'' $\hat{F}^b_r\in [k_b n]$ that counts for during each $X$ consecutive rounds, such that correct node $v$ either outputs $F^b_{v,r}=\hat{F}^b_r$ or $F^b_{v,r}=\bot$ in round $r$.
\begin{lemma}\label{lem:filter_bad}
For $b\in \{0,1\}$ and rounds $r\ge S^b_F$, there is a \emph{virtual clock} $\hat{F}^b_r\in [k_b n]$ such that (i) for all correct nodes $v$, it holds that $F^b_{v,r}\in \{\hat{F}^b_r,\bot\}$ and (ii) unless $r-r_0 = 0 \bmod X$, $\hat{F}^b_r=\hat{F}^b_{r-1}+1\bmod k_b n$.
\end{lemma}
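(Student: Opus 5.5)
We build the virtual clock directly from the $X$-round crusader agreement property of $(k_b n,X,V_b)$-filtering (\Cref{def:filterclock}). Set $r_0:=S^b_F$ and split the rounds $r\ge r_0$ into consecutive blocks $\{r_0+jX,\ldots,r_0+(j+1)X-1\}$ for $j\in\NN$. Within a block we let $\hat{F}^b$ count; at block boundaries, i.e., when $r-r_0=0\bmod X$, we allow it to jump. This matches the exception in (ii).

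The construction goes block by block. Fix block $j$ with first round $s:=r_0+jX$. If some correct $v$ and some round $r'$ in the block satisfy $F^b_{v,r'}\neq\bot$, take the earliest such $r'$ and a witnessing $v$. Define $\hat{F}^b_r:=F^b_{v,r'}+r-r'\bmod k_b n$ for every $r$ in the block, so the values count through the whole block, both before and after $r'$. If no correct node outputs a non-$\bot$ value anywhere in the block, pick an arbitrary start value at $s$ and count from there. Either way, (ii) holds inside each block by construction.

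For (i), take any correct $w$ and any round $\hat{r}$ in the block with $F^b_{w,\hat{r}}\neq\bot$. By minimality of $r'$ we have $\hat{r}\ge r'$. Both rounds lie in the same block, so $\hat{r}-r'\le X-1\le X$. Crusader agreement applied at $r'\ge S^b_F$ then gives $F^b_{w,\hat{r}}\in\{F^b_{v,r'}+\hat{r}-r'\bmod k_bn,\bot\}$, so $F^b_{w,\hat{r}}=\hat{F}^b_{\hat{r}}$. Rounds before $r'$ in the block have all correct outputs equal to $\bot$, so (i) holds there trivially.

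The only real care needed is the reading of the definition. The property as stated, "$F_{w,\hat r}\in\{F_{v,\hat r}+\hat r-r',\bot\}$", should be read as $F_{v,r'}+\hat r-r'$, which is the intended meaning. We also need to check that the earliest non-$\bot$ round anchors the whole block, since agreement only propagates forward in time. Choosing the minimum $r'$ handles this, and blocks of length $X$ keep every comparison inside the window of $X$ rounds the definition guarantees.
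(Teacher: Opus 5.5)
Your proof is correct and follows the paper's own argument. You split the rounds from $r_0=S^b_F$ into length-$X$ blocks, anchor each block at the earliest round where some correct node outputs a non-$\bot$ value, let $\hat{F}^b$ count through the block from that anchor, and apply $X$-round crusader agreement forward from the anchor. In doing so you also use the intended versions of the paper's interval $I_i=[r_0+iX,r_0+(i+1)X-1]$, of its offset ($+\,r-r'$ measured from the anchor), and of the definition's $F_{v,r'}+\hat r-r'$, where the paper's text has typos.
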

\begin{proof}
To define $\hat{F}^b$, consider in each interval $I_i:=[r_0+iX,r_0+i(X+1)-1]$ the minimal round $r\in I_i$ such that $F^b_{v,r}\neq \bot$ for some correct $v$ and set $\hat{F}^b_{r'}:=F^b_{v,r}+r-r'$ for all $r'\in I_i$ (if no such $r$ exists, use an arbitrary value instead of $F^b_{v,r}$). By the $X$-round crusader agreement property of clock filtering, this results in a well-defined $\hat{F}^b$ meeting the requirements.
\end{proof}

The main pillar of the proof of stabilization of \Cref{alg:recursion} is showing that eventually, there will be a ``good'' instance of king consensus, in the sense that all nodes agree on a correct leader and there is no ``competing'' instance that is initialized during its execution.
\begin{definition}[Unimpeded Instances]\label{def:unimpeded}
For $b\in \{0,1\}$, we say that the $b$-th instance of king consensus initialized in round $r$ \emph{runs with leader $\ell\in V$} if (i) the variables $F^b_v$ count from round $r$ and (ii) each correct node $v$ inputs $(x_v,\ell)$ for some $x_v\in [C]$. Furthermore, such an instance is \emph{unimpeded} if in addition (i) $\ell$ is correct and (ii) correct nodes $v$ initialize all other instances of king consensus in rounds $r'\in \{r,r+1,\ldots,r+R\}$ with $\ell_v= \bot$.
\end{definition}
Before proving that such an instance will occur in a timely fashion, let us formalize that an unimpeded instance translates to stabilization.
\begin{lemma}\label{lem:stabilization}
Assume that in round $r\in \NN$ an unimpeded instance of king consensus is initialized. Then the variables $C_v$ count from round $r+R$.
\end{lemma}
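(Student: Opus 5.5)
The plan is to track the output variables $C_v$ of correct nodes from round $r$ onward and show two things. First, at the end of round $r+R$ they all agree. Second, from then on they increment by exactly $1\bmod C$ in every round.

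I will first fix the bookkeeping of \Cref{alg:recursion}. An instance initialized in round $r'$ with input $(C_{v,r'-1}+R\bmod C,\ell^b_v)$ produces its output during round $r'+R-1$ or $r'+R$, depending on how the $R$-round pipeline is indexed. That output, if in $[C]$, overwrites $C_v$ just before the final increment. Since the input was shifted by $R$, an instance whose output equals its input leaves $C_v$ exactly where plain incrementing would have put it. I will pin the indexing down so that the claimed round $r+R$ comes out exactly.

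\textbf{Agreement at round $r+R$.} Consider the unimpeded instance $b$ started in round $r$. All correct nodes input the same correct leader $\ell$. By King Agreement (\Cref{lem:king}), every correct $v$ outputs the same value $y_\ell\neq\bot$ and sets $C_v:=y_\ell$ in the round the instance terminates.

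I also have to rule out interference from the other instance ending in the same round. Every competing instance that ends in a round of $\{r,\ldots,r+R\}$ was initialized in a round of $\{r-R,\ldots,r\}$, so the unimpeded condition does not directly cover it. This is the step I expect to be the main obstacle.

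I would try two things. The first is to read the loop order in \Cref{alg:recursion} carefully and check whether the terminating unimpeded instance's assignment can be overwritten in the same round by the other $b$. The loop runs over $b\in\{0,1\}$, and a later $b$ could overwrite an earlier one. The second, if needed, is to argue that \Cref{def:unimpeded} is meant, or can be read, so that the window $\{r,\ldots,r+R\}$ refers to the instances that would terminate while the good one runs. Either way, the Default property handles every instance initialized with $\ell_v=\bot$ at all correct nodes: it outputs $\bot$ everywhere and leaves $C_v$ untouched. So at the end of the round in which the good instance terminates, all correct $C_v$ are equal.

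\textbf{Persistence.} I will argue by induction on rounds $r''\ge r+R$ that all correct $C_{v,r''}$ are equal. Suppose this holds through round $r''-1$. Every instance initialized in a round $r'\ge r+1$ has all correct nodes inputting the common value $x=C_{\cdot,r'-1}+R\bmod C$.

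By Validity, each correct output is in $\{x,\bot\}$, regardless of the leader or of disagreement about leaders. An output of $x$ sets $C_v$ to the value it would reach by incrementing anyway, and $\bot$ changes nothing. Hence in every round each correct $C_v$ becomes its previous common value plus $1\bmod C$, which preserves agreement and gives the validity condition of counting.

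Instances initialized before agreement was established, still running in the window, are exactly the competing ones excluded (via Default) by the unimpeded hypothesis. This closes the induction, so the $C_v$ count from round $r+R$.
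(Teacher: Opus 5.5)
Your skeleton matches the paper's proof: King Agreement for the unimpeded instance, Default for the other instances in the window, then Validity with an induction (the paper uses a minimal counterexample). The gap is that the step you call the main obstacle is never resolved, and neither of your two suggestions is the resolution. Instances that terminate strictly before the unimpeded one are irrelevant, because whatever they write into $C_v$ is overwritten by the common value $y$ when the unimpeded instance terminates. Every instance runs for exactly $R$ rounds, so the only other instance terminating in that same round is the $(1-b)$-th instance initialized in round $r$ itself. Round $r$ lies in the window $\{r,\ldots,r+R\}$ of \Cref{def:unimpeded}, so all correct nodes gave that instance $\ell_v=\bot$. By Default it outputs $\bot$ and cannot overwrite $y$, whatever the loop order over $b$. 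Agreement at the termination round therefore follows directly, with no reinterpretation of the definition.

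Your option (b), reading the window as covering instances that terminate while the good one runs, would break the proof. The window must extend forward to $r+R$ in initialization time. Instances initialized in rounds $r+1,\ldots,r+R$ receive inputs $C_{v,r'-1}+R$ computed before agreement exists, so Validity says nothing about them; only Default, via the unimpeded hypothesis, neutralizes them. For the same reason, the claim in your induction that every instance initialized in a round $r'\ge r+1$ has a common input is false for $r'\le r+R$. The induction should apply Validity only to instances initialized in rounds $r'>r+R$, whose inputs come from the already agreeing $C_{v,r'-1}$. Those initialized in $\{r+1,\ldots,r+R\}$ should be dispatched by Default. Your final sentence says this, but it contradicts your earlier claim rather than being built into the induction.
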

\begin{proof}
By the king agreement property of $R$-round king consensus, there is some $y\in [C]$ such that all correct nodes $v$ output $y_v=y$ from the unimpeded instance. Observe that, granted that no king consensus instance outputs a value different from $y+r'-r\bmod C$ or $\bot$ in a round $r'\ge r$ at some correct $v$, correct nodes will set $C_{v,r}:=y$ and, by induction, the variables $C_v$ will count from round $r$.

To show this, recall that the definition of an unimpeded instance requires that correct nodes $v$ have $\ell_v=\bot$ in all other king consensus instances initialized in rounds $r'\in \{r,r+1,\ldots,r+R\}$, i.e., from the initiliazation of the unimpeded instance until it outputs in round $r+R$. By the default property of king consensus, this entails that all other instances initiated in these rounds output $\bot$ at correct nodes.

Accordingly, it remains to consider instances that are initiated in rounds $r'>r$. Assume for contradiction that there is a minimal round $r'>r$ such that an instance initiated in round $r'$ outputs $y_v\notin \{y+r'-r+R\bmod C,\bot\}$ at some correct $v$ (which it does in round $r'+R$). As this instance is initialized after round $r$, but is the first instance that outputs $y_v\notin \{y+r'-r+R\bmod C,\bot\}$, correct nodes used inputs $x_v\in \{y+r'-r+R\bmod C,\bot\}$ to the instance. By the validity property of king consenus, they must output $y_v\in \{y+r'-r+R\bmod C,\bot\}$, reaching a contradiction and concluding the proof.
\end{proof}
In order to show that suitable parameter choices result in quick emergence of an unimpeded instance, we first state a straightforward, but technical helper lemma for said choices.
\begin{lemma}\label{lem:constants}
Suppose that a constant $0<\varepsilon<1$ and natural $R\in O(1)$ are given.
Then there are natural $R<k_0,k_1\in O(1)$ with the following property.
For any choices of $o\in \ZZ$, $b\in \{0,1\}$, and $L_{\max}\in \ZZ_{>0}$, it holds that
\begin{equation*}
\left|\left\{L\in \NN, L\le L_{\max} \mid [k_b\cdot L,k_b\cdot L+R]\cap \{k_{1-b}\cdot L'+o\}\,|\,L'\in \ZZ\}\neq \emptyset\right\}\right|\le \varepsilon L_{\max}+1.
\end{equation*}
\end{lemma}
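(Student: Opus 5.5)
We choose $k_0,k_1$ coprime and of comparable size, and we reduce the statement to a counting argument over residues modulo $k_{1-b}$. A natural choice is $k_0:=m$ and $k_1:=m+1$ for a large integer $m\in O(1)$ depending on $\varepsilon$ and $R$. Fix $b$, and write $k:=k_b$ and $k':=k_{1-b}$.

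For a given $L$, the window $[kL,kL+R]$ contains an element of the progression $\{k'L'+o\mid L'\in\ZZ\}$ if and only if $kL-o\bmod k'$ lies in the set $\{k'-R,\ldots,k'-1,0\}$ of size $R+1$. Here we need $R<k'$, which holds since $k'>R$. In other words, $L$ is ``bad'' iff $kL\bmod k'$ lies in a fixed set $B_o\subseteq\ZZ_{k'}$ with $|B_o|=R+1$.

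Since $\gcd(k,k')=1$, the map $L\mapsto kL\bmod k'$ permutes $\ZZ_{k'}$. Hence among any $k'$ consecutive values of $L$, exactly $R+1$ are bad. Splitting $\{1,\ldots,L_{\max}\}$ (or $\{0,\ldots,L_{\max}\}$ if $0\in\NN$) into $\lfloor L_{\max}/k'\rfloor$ full blocks of length $k'$ plus a remainder of fewer than $k'$ values, the number of bad $L$ is at most
\[
\frac{R+1}{k'}\,L_{\max} + (R+1).
\]

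The main obstacle is the additive term: the statement allows only $+1$, not $+(R+1)$. I would first try to control the remainder block more carefully. The bad residues form a contiguous interval $B_o$ in $\ZZ_{k'}$, so we can use the structure of $kL\bmod k'$ with $k=k'\pm1$. For $k_0=m$ and $k_1=m+1$ we have $kL\equiv \mp L\pmod{k'}$. Hence the bad $L$ themselves form a contiguous interval of residues modulo $k'$, i.e. a run of $R+1$ consecutive integers repeated with period $k'$.

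With this structure, a prefix of length $L_{\max}$ contains at most $\lceil L_{\max}/k'\rceil(R+1)$ bad values. This is still not $\varepsilon L_{\max}+1$ when $L_{\max}$ is small: for instance, with $L_{\max}=1$ and the first $L$ bad, the count is $1\le\varepsilon+1$, which is fine. With $L_{\max}=R+1$ and all of them bad, however, the count is $R+1$, while the bound gives $\varepsilon(R+1)+1$. This fails unless we exploit that $o$ is fixed and the run positions are forced.

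Since I suspect the intended bound tolerates this, the fallback is to absorb the problem by scaling. Choose $k_b:=(R+1)m_b$ with coprime $m_0=m$ and $m_1=m+1$. Then $[k_bL,k_bL+R]$ has length $R+1$ and tiles one residue block, which lets us reason about $L$ via residues of $L$ modulo $m_{1-b}$. A collision then requires $m_bL\equiv$ one of at most two consecutive values modulo $m_{1-b}$, which again give consecutive $L$. This yields at most about $2\lceil L_{\max}/m\rceil$ bad $L$; with a further refinement to a single bad residue per period, we get $L_{\max}/m+1\le\varepsilon L_{\max}+1$ once $m\ge1/\varepsilon$. Checking that exactly one residue class is hit, and not two, is the step I expect to need the most care. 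If two classes are hit, I would take $m\ge 2/\varepsilon$ and accept a small additive constant, which suffices for the later application in any case.
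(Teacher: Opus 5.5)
\textbf{Verdict: the proof has a gap at the step you flagged, but it closes in one line.} Your final construction $k_b=(R+1)m_b$ with $m_0=m$, $m_1=m+1$ is correct. As written, though, the proof stops exactly where the stated $+1$ is decided, and neither fallback proves the lemma.

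The first choice $k_1=k_0+1$ is not merely delicate; it is false for $R\ge 1$. For $b=0$, $L$ is bad iff $(L+o)\bmod (m+1)\in\{0,\dots,R\}$. Since $o$ is universally quantified, you may take $o=-1$. Then $L=1,\dots,R+1$ are all bad, which exceeds $\varepsilon(R+1)+1$ whenever $\varepsilon<R/(R+1)$. The ``two residue classes, $m\ge 2/\varepsilon$'' fallback only yields $\varepsilon L_{\max}+2$, not the stated bound.

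The missing observation is this. Write $o=(R+1)q+s$ with $0\le s\le R$. Each progression point $k_{1-b}L'+o=(R+1)(m_{1-b}L'+q)+s$ lies in exactly one block $\{(R+1)j,\dots,(R+1)j+R\}$, namely $j=m_{1-b}L'+q$. The window $[k_bL,k_bL+R]$ \emph{is} the block $j=m_bL$. Hence $L$ is bad iff $m_bL\equiv q\pmod{m_{1-b}}$. Because $\gcd(m,m+1)=1$, this is a single residue class modulo $m_{1-b}\ge m$. So at most $\lfloor L_{\max}/m\rfloor+1\le\varepsilon L_{\max}+1$ values in $\{0,\dots,L_{\max}\}$ are bad once $m\ge\lceil 1/\varepsilon\rceil$, and clearly $R<k_b\in O(1)$. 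Your worry about a second class comes from overlooking that the in-block offset $s$ of the progression is fixed.

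\textbf{Comparison with the paper.} Once completed, your route is a tidier variant of the paper's argument. The paper takes $k_0=\lceil (R+1)/\varepsilon\rceil$ and $k_1=k_0+R+1$. It argues that the offset between window and progression drifts by $R+1$ per step, so bad $L$ are roughly $1/\varepsilon$ apart. Your choice has the same difference $k_1-k_0=R+1$. It coincides with the paper's when $1/\varepsilon\in\NN$, e.g.\ $\varepsilon=1/4$ as used in \Cref{lem:unimpeded}. The extra divisibility $(R+1)\mid k_0$ turns the drift into exact periodicity, so you get an exact count with no edge cases.

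Without that divisibility the gaps are uneven. For $\varepsilon=0.4$ and $R=1$, the paper's constants are $k_0=5$, $k_1=7$. For $b=1$ and $o=0$, both $L=0$ and $L=2$ are bad. So the paper's claim that $L+1,\dots,L+\lceil 1/\varepsilon\rceil-1$ are all good fails there, and the inequality itself fails at $L_{\max}=2$ if $L=0$ is admitted. This is harmless for the application, but your version avoids the issue altogether.
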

\begin{proof}
Let $k_0:=\lceil (R+1)/\varepsilon \rceil$ and $k_1:=k_0+R+1$.
Let $L\in \NN$ be minimal with the property that
\begin{equation*}
[k_b\cdot L,k_b\cdot L+R]\cap \{k_{1-b}\cdot L'+o\,|\,L'\in \ZZ\}\neq \emptyset.
\end{equation*}
By the choices of $k_b$, $b\in \{0,1\}$, it follows that for $L+1,L+2,\ldots,L+\lceil 1/\varepsilon\rceil-1$ the intersection is empty.
Repeating this argument inductively, the claim follows.
\end{proof}
We are now ready to resolve the choices of constants that guarantee an unimpeded instance $O(f+1)$ rounds after one of the recursive counting instances and both filtering subroutines have stabilized.
\begin{lemma}\label{lem:unimpeded}
If $R\in O(1)$, there are constants $X,k_0,k_1$ with the following property. If for $b\in \{0,1\}$ it holds that $f<|V_b|/3$, there is an unimpeded instance within $\max\{S^b_C+S^b_F,S^{1-b}_F\}+O(f+1)$ rounds.
\end{lemma}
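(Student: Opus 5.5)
Fix $b$ with $f<|V_b|/3$ and let $r_1:=\max\{S^b_C+S^b_F,S^{1-b}_F\}$. The plan is to show that, from round $r_1$ on, $F^b$ is a genuine counter and $F^{1-b}$ is a virtual clock that jumps only rarely. The instances triggered by $F^b$ then walk through all leaders, and only a small fraction of them can be hit by a competing instance triggered by $F^{1-b}$. Among the first $O(f+1)$ leaders that $F^b$ selects, some correct leader's instance therefore avoids every collision and is unimpeded.

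\textbf{Step 1 (structure after $r_1$).} By \Cref{lem:filter_good}, all correct $v$ have $F^b_{v,r}=F^b_r$, and $F^b_r$ increments by one modulo $k_bn$ from round $r_1$ on. Hence, starting at some round $r_2\le r_1+k_b$, the $b$-instances with $\ell^b_v\neq\bot$ are initialized exactly every $k_b$ rounds, in rounds $r_2+k_bL$. Each of them has a common leader $w_L=w_0+L\bmod n$, so consecutive instances run with consecutive leaders in the sense of \Cref{def:unimpeded}.

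By \Cref{lem:filter_bad}, every correct $v$ satisfies $F^{1-b}_{v,r}\in\{\hat F^{1-b}_r,\bot\}$. The virtual clock $\hat F^{1-b}$ counts except at its jump rounds, which occur once every $X$ rounds. Therefore, between consecutive jumps, the $(1-b)$-instances that can have non-$\bot$ leader inputs at correct nodes lie in an arithmetic progression $\{k_{1-b}L'+o_j\}$ with a fixed offset $o_j$.

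\textbf{Step 2 (counting collisions).} A $b$-instance with correct leader is unimpeded exactly when no $(1-b)$-instance with non-$\bot$ leader input at some correct node is initialized in $[r_2+k_bL,\,r_2+k_bL+R]$. Distinct $b$-instances never overlap, since $k_b>R$. Shifting time by $r_2$, I will apply \Cref{lem:constants} to each window between consecutive jumps of $\hat F^{1-b}$ separately.

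Take $L_{\max}:=2(f+1)\cdot c$ for a suitable constant $c$, so that the time horizon is $H=k_bL_{\max}$. This horizon contains at most $H/X+1$ windows. Within each window, \Cref{lem:constants} with offset $o_j$ bounds the number of colliding $b$-instances by $\varepsilon\cdot(\text{instances in window})+1$. Summing over the windows gives at most
\[
\varepsilon L_{\max}+\frac{H}{X}+1=\varepsilon L_{\max}+\frac{k_bL_{\max}}{X}+1
\]
collisions. This is the step that needs care, because \Cref{lem:constants} is stated for $L$ ranging over an initial segment. I plan to shift $L$ within each window so that it starts at $0$, absorbing the shift into $o$. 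A window may also cut through an interval $[k_bL,k_bL+R]$; such boundary instances are charged to the additive $+1$ per window.

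\textbf{Step 3 (choosing constants).} Choose $\varepsilon=1/8$ and then $k_0,k_1$ via \Cref{lem:constants}. Crucially, $k_0,k_1$ do not depend on $X$, so I can next choose $X\ge 8\max\{k_0,k_1\}$. The number of collisions is then at most $L_{\max}/4+1$.

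Among any $L_{\max}\ge 2(f+1)$ consecutive leaders, at most $f\lceil L_{\max}/n\rceil\le L_{\max}/2$ are faulty, using $L_{\max}\le n$ or else wrapping around at most $O(1)$ times. With $L_{\max}:=\max\{8,4(f+1)\}$, the number of good instances is then at least
\[
L_{\max}-\frac{L_{\max}}{2}-\frac{L_{\max}}{4}-1>0.
\]
One of them is therefore unimpeded, and it is initialized within $r_1+k_b+k_bL_{\max}+O(1)=r_1+O(f+1)$ rounds. I expect the main obstacle to be the bookkeeping in Step 2: matching the virtual-clock windows to \Cref{lem:constants}, and correctly handling correct nodes that output $\bot$ for $F^{1-b}$, which only removes potential collisions and never adds any.
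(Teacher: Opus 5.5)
Your argument is correct, but it organizes the counting differently from the paper.

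\textbf{The paper's route.} The paper never lets a candidate instance cross a jump of $\hat F^{1-b}$. It fixes one interval $I_i$ of length $X$ in which the virtual clock counts. It chooses $X$ so that the first $L_{\max}+1\approx (X-R)/k_b$ leaders selected by $F^b$ in $I_i$ all run their instances inside $I_i$. It then applies \Cref{lem:constants} once, with $\varepsilon=1/4$, to that single interval. The $O(f+1)$ term comes from a separate pigeonhole step: consecutive intervals consume fresh leaders and only $f$ nodes are faulty, so among the first $O(f+1)$ intervals there is one in which fewer than half of the designated leaders are faulty. This step also uses the auxiliary no-overflow condition and the footnote for constant $n$.

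\textbf{Your route.} You fix a horizon of $\Theta(f+1)$ consecutive leaders up front and let it span many windows. You apply \Cref{lem:constants} per window after shifting $L$ into the offset $o$, and you amortize the additive $+1$ per window by taking $X\ge 8\max\{k_0,k_1\}$. The bound on faulty leaders is then a direct count over the horizon, not a search for a good window. Because you work with residues modulo $k_b$, the wrap-around of $F^b$ modulo $k_bn$ is harmless, so you need no overflow condition.

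\textbf{The cost of your route.} It is exactly the bookkeeping you anticipated. An instance straddling a jump is counted in two windows, so $\sum_j m_j\le L_{\max}+W$. The horizon also meets $H/X+O(1)$ windows, not $H/X+1$. Hence your collision bound is really about $\varepsilon L_{\max}+(1+\varepsilon)W\approx \tfrac{17}{64}L_{\max}+O(1)$ rather than $L_{\max}/4+1$. This additive constant is absorbed by raising the constant floor, e.g.\ $L_{\max}:=\max\{16,4(f+1)\}$. Your claim that at most $L_{\max}/2$ of any $L_{\max}\ge 4(f+1)$ consecutive leaders are faulty still holds for such choices. If $L_{\max}<2n$, at most $2f\le L_{\max}/2$ are faulty. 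If $L_{\max}\ge 2n$, with $q=\lfloor L_{\max}/n\rfloor\ge 2$, at most $(q+1)f<(q+1)n/3\le qn/2\le L_{\max}/2$ are faulty.
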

\begin{proof}
By \Cref{lem:filter_good}, the variables $F^b_v$ count from round $S^b_C+S^b_F$. By \Cref{lem:filter_bad}, there is a virtual clock $\hat{F}^{1-b}$ such that for each $i\in \NN$ and $r\in I_i:=[S^{1-b}_F+iX+1,S^{1-b}_F+(i+1)X-1]$, it holds that $\hat{F}^{1-b}_r=\hat{F}^{1-b}_{r-1}+1\bmod k_b n$ and for each correct node $v$, we have that $F^{1-b}_{r,v}\in \{\hat{F}^{1-b}_r,\bot\}$ for all $r\ge S^{1-b}_F$. We apply \Cref{lem:constants} as follows:
\begin{itemize}
  \item Fix $k_0,k_1$ as given by the lemma for $\varepsilon=1/4$ and the given value of $R$.
  \item Fix $X=5\max\{k_0,k_1\}+R$ and $L_{\max}:=\lfloor (X-R)/k_b\rfloor-1>4$.
  \item Fix $i\in \NN$ such that $S^{1-b}_F+iX\ge S^b_C+S^b_F$. Let $r_b\in I_i$ be minimal such that there is $\ell_b\in V$ with $F^b_{v,r_b}=k_b\ell_b \bmod k_b n$ for some correct node $v$.
  \item Let $r_{1-b}\in I_i$ be minimal such that there is $\ell_{1-b}\in V$ with $\hat{F}^{1-b}_{r_{1-b}}=k_{1-b}\ell_{1-b} \bmod k_{1-b} n$ for some correct node $v$. Choose $o:=r_{1-b}-r_b$.
\end{itemize}
Now consider a leader $\ell \in \{\ell_b,\ell_b+1,\ldots,\ell_b+L_{\max}\}$, where for convenience we identify values $\ell>n$ with the node $v$ such that $v=\ell\bmod n$. Observe that $r_b+(\ell-\ell_b)k_b\ge r_b$ and
\begin{equation*}
r_b+(\ell-\ell_b)k_b+R\le r_b+L_{\max}k_b+R\le r_b+X-k_b\le S^{1-b}_F+(i+1)X-1,
\end{equation*}
where the last step uses that $r_b$ is one of the first $k_b$ rounds of $I_i$. That is, we have that $J_{\ell}:=\{r_b+(\ell-\ell_b)k_b,r_b+(\ell-\ell_b)k_b+1,\ldots,r_b+(\ell-\ell_b)k_b+R-1\}\subset I_i$.

Recall that the variables $F^b_v$ count from round $S^b_C+S^b_F$, which due to our choice of $i$ is no later than the last round before $I_i$. Hence, the $b$-th instance of king consensus initialized in round $r_b+(\ell-\ell_b)k_b-1$ runs with leader $\ell$, where the substraction of $1$ takes into account that $\ell_v^b$ is determined based on $F^b_{v,r-1}$ in round $r$, i.e., before $F_v^b$ is updated by the filtering subroutine. Because $k_b>R$, each correct node $v$ initializes its $b$-th instance in rounds $r\in J_{\ell}\setminus \{r_b+(\ell-\ell_b)k_b\}$ with $\ell_v=\bot$. We conclude that the instance is unimpeded if and only if (i) $\ell$ is correct and (ii) no correct node $v$ uses input $\ell_v\neq \bot$ on its $(1-b)$-th instance in some round $r\in J_{\ell}$.

Let $i_0\in \mathbb{N}$ be minimal such that (i) $S^{1-b}_F+i_0 X+1\ge S^b_C+S^b_F$, (ii) the variables $F^b_v$ do not ``overflow'' during $I_{i_0}$, i.e., $F^b_{v,r}=F^b_{v,r-1}+1$ for all $r\in I_{i_0}$, and (iii) fewer than half of the leaders $\{\ell_b,\ell_b+1,\ldots,\ell_b+L_{\max}\}$ for the interval $I_i$ are faulty. Since there are at most $f$ faulty nodes and $n>3f$ (i.e., we do not run out of ``fresh'' leaders before exhausting the fault budget),\footnote{Due to condition (ii), this applies as-is only if $n$ is a sufficiently large constant. However, if $n=O(1)$, we may work with clocks that count modulo $m k_b n$ for a sufficiently large constant $m$ instead.} we have that $i_0\in \max\{(S^b_C+S^b_F-S^{1-b}_F)/X,0\}+O(f+1)$. By our choices of $\varepsilon$, $k_0$, $k_1$, and $o$, \Cref{lem:constants} guarantees that out of the remaining at least $L_{\max}/2$ correct leaders $\ell$, all but $L_{\max}/4+1$ satisfy that
\begin{equation*}
[k_b (\ell-\ell_b),k_b (\ell-\ell_b)+R]\cap \{k_{1-b}\cdot L+r_{1-b}-r_b\,|\,L\in \ZZ\}= \emptyset.
\end{equation*}
By our choice of $L_{\max}$, $L_{\max}/2-L_{\max}/4-1>0$, i.e., some correct $\ell$ satisfying the above constraint exists.

We claim that the corresponding $b$-th instance initiliazed in round $r_b+(\ell-\ell_b)k_b$ is unimpeded. Given that $\ell$ is correct, it remains to show that no correct node $v$ initilializes a $(1-b)$-th instance with $\ell_v\neq \bot$ in some round $r\in J_{\ell}$. Because $F^{1-b}_{v,r}\in \{\hat{F}^{1-b}_r,\bot\}$, this is only possible if $\hat{F}^{1-b}_r\bmod k_{1-b}=0$. 

Accordingly, assume for contradiction that $\hat{F}^{1-b}_r\bmod k_{1-b}=0$ for some $r\in J_{\ell}$. Recall that $\hat{F}^{1-b}_{r'}=\hat{F}^{1-b}_r+r'-r\bmod k_{1-b} n$ for all $r,r'\in I_i$ and that $\hat{F}^{1-b}_{r_{1-b}}=0\bmod k_{1-b}$. Therefore, 
\begin{equation*}
0=\hat{F}^{1-b}_r\bmod k_{1-b} = \hat{F}^{1-b}_{r_{1-b}}+r-r_{1-b} \bmod k_{1-b}=r-r_{1-b} \bmod k_{1-b},
\end{equation*}
i.e., $r-r_b\in \{k_{1-b}\cdot L+r_{1-b}-r_b\,|\,L\in \ZZ\}$.
On the other hand, by our choice of $i_0$,
\begin{equation*}
F^b_{v,r}-F^b_{v,r_b} = r-r_b \in [k_b (\ell-\ell_b),k_b (\ell-\ell_b)+R].
\end{equation*}
We reach the contradiction that $r-r_b$ is in the intersection of these two sets, which by \Cref{lem:constants} is empty. We conclude that the instance is unimpeded, as claimed. Since $i_0\in \max\{(S^b_C+S^b_F-S^{1-b}_F)/X,0\}+O(f+1)$, this instance occurs by round $\max\{S^b_C+S^b_F,S^{1-b}_F\}+O(f+1)$.
\end{proof}
Putting the above results together, we arrive at the following theorem.
\begin{theorem}\label{thm:stab}
If $R\in O(1)$, there are constants $k_0$, $k_1$, and $X$ such that \Cref{alg:recursion} solves $C$-counting with stabilization time $\max_{b\in \{0,1\}}\{S^b_C+S^b_F\}+O(f+1)$. If $f<|V_b|/3$ for some $b\in \{0,1\}$, then the stabilization time is $\max\{S^b_C+S^b_F,S^{1-b}_F\}+O(f+1)$.
\end{theorem}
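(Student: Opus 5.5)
The theorem follows by combining \Cref{lem:unimpeded} with \Cref{lem:stabilization}. Fix $R\in O(1)$ as the round complexity of the king consensus subroutine from \Cref{lem:king} (so $R=3$). Then fix $k_0,k_1,X$ as in the proof of \Cref{lem:unimpeded}. These choices do not depend on $f$, $b$, or the execution.

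For the second claim, suppose $f<|V_b|/3$ for some $b\in\{0,1\}$. \Cref{lem:unimpeded} yields an unimpeded instance of king consensus initialized in some round $r\le \max\{S^b_C+S^b_F,S^{1-b}_F\}+O(f+1)$. \Cref{lem:stabilization} then shows that the variables $C_v$ count from round $r+R$. Since $R\in O(1)$, the extra $R$ rounds are absorbed into the $O(f+1)$ term, which gives the second bound.

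For the first claim, observe that the case assumption is always met. Since $V=V_0\,\dot\cup\,V_1$ and $f\le t<n/3=(|V_0|+|V_1|)/3$, at least one $b$ must satisfy $f<|V_b|/3$. Indeed, if $f\ge |V_b|/3$ held for both $b$, then $2f\ge n/3$, which does not immediately contradict $f<n/3$. So the argument needs care here.

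Recall that $f$ denotes the number of faults in the whole system, while the lemmas' hypothesis $f<|V_b|/3$ should be read as the number of faulty nodes inside $V_b$ being below $|V_b|/3$. With this reading, the faults split as $f=f_0+f_1$. If $f_b\ge |V_b|/3$ for both $b$, then $f\ge n/3>t$, a contradiction. Hence some $b$ has few faults in $V_b$.

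Next, use the trivial bound $S^{1-b}_F\le S^{1-b}_C+S^{1-b}_F$. This gives $\max\{S^b_C+S^b_F,S^{1-b}_F\}\le\max_{b'}\{S^{b'}_C+S^{b'}_F\}$. The first bound then follows from the second.

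Finally, the counting properties must persist once established. Both agreement and validity for all later rounds are part of the conclusion of \Cref{lem:stabilization}, so nothing further is needed.

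The main obstacle is the bookkeeping of which $f$ appears in the hypotheses. The per-subset fault count governs the applicability of \Cref{lem:filter_good} and \Cref{lem:unimpeded}. The global $f$ governs the $O(f+1)$ term, since the number of faulty leaders skipped is bounded by the global fault count. I would state this distinction explicitly.
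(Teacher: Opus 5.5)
Your proof is correct and follows the paper's argument: the second claim comes from \Cref{lem:unimpeded} together with \Cref{lem:stabilization}, with the extra $R$ rounds absorbed into $O(f+1)$, and the first claim comes from a pigeonhole step plus $S^{1-b}_F\le S^{1-b}_C+S^{1-b}_F$. Reading the hypothesis as ``fewer than $|V_b|/3$ faulty nodes inside $V_b$'' is exactly what makes the paper's pigeonhole step valid, since it fails with the global $f$ as you noticed; also, you need not fix $R=3$, because the argument works for any constant $R$, and the paper later uses the $8$-round protocol of \Cref{thm:king}.
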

\begin{proof}
We show the second statement first. By \Cref{lem:unimpeded} and the assumptions of the theorem, there is an unimpeded instance of king consensus within $\max\{S^b_C+S^b_F,S^{1-b}_F\}+O(f+1)$ rounds. By \Cref{lem:stabilization}, the output variables then start to count from some round $r\in \max\{S^b_C+S^b_F,S^{1-b}_F\}+O(f+1)$.

Regarding the first statement, recall that $f< n/3$ and $V_0$ and $V_1$ partition $V=\{1,\ldots,n\}$, so there must be some $b\in \{0,1\}$ so that $f<|V_b|/3$. Since $\max\{S^b_C+S^b_F,S^{1-b}_F\}\le \max_{b\in \{0,1\}}\{S^b_C+S^b_F\}$, the claim readily follows from the second statement.
\end{proof}
In the following, we will tacitly assume that the constants in~\Cref{alg:recursion} are chosen in accordance with \Cref{thm:stab} for the $R\in O(1)$ given by the choice of the king consensus algorithm employed. Applying the recursive scheme with the trivial solution on single nodes, we arrive at the following result.
\begin{corollary}\label{cor:stab}
$C$-counting can be solved with stabilization time $O(f+1+\log n)$, where each correct node sends $O(\log C + \log^2 n)$ bits to each other node in each round.
\end{corollary}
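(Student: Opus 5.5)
We will apply \Cref{alg:recursion} recursively on balanced partitions, so each $V_b$ has $\lfloor |V|/2\rfloor$ or $\lceil |V|/2\rceil$ nodes. The base case $|V|=1$ is the trivial counter, which has stabilization time $1$ and sends no messages. At every level we plug in the filtering algorithm of \Cref{lem:filtering} with constant $X$, so $S_F=X+2\in O(1)$. We also plug in the $3$-round king consensus of \Cref{lem:king}, so $R=3$ and the constants $k_0,k_1,X$ can be fixed once by \Cref{thm:stab}, independently of $n$. The counting modulus of a subinstance on $V_b$ is $k_b|V|$, which is polynomial in $n$. Only the top level uses modulus $C$.

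For the stabilization time, let $S(f,m)$ be the worst-case stabilization time on $m$ nodes when at most $f$ of them are faulty, where $f$ may exceed $m/3$. If $f\ge m/3$, then $f\in\Omega(m)$ and the bound we aim for is $O(f+\log m)$. We handle this case through the first statement of \Cref{thm:stab}, $S(f,m)\le \max_b S(f_b,m/2)+O(f+1)$ with $f_b\le f$. Unrolling gives $O(\sum_i (f+1))$ over $\log m$ levels, i.e.\ $O((f+1)\log m)$, which is too weak. So the recursion should use the second statement instead.

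We pick $b$ with $f_b:=|F\cap V_b|\le f/2$. If moreover $f_b<|V_b|/3$, the second statement of \Cref{thm:stab} applies, and since $S^{1-b}_F\in O(1)$ we get
\[ S(f,m)\le S(\lfloor f/2\rfloor,\lceil m/2\rceil)+O(1)+O(f+1). \]
This is valid whenever some half contains fewer than a third of its nodes faulty, which holds at the top level because $f<n/3$. It holds again at each step of the descent along the branch with fewer faults, since $f_b\le f/2<|V|/6\le|V_b|/3$ roughly. One should check rounding for tiny sets, where $f_b=0$ suffices. Unrolling along the fewer-faults branch gives $\sum_{i\le\lceil\log n\rceil}O(f/2^i+1)=O(f+\log n)$. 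The main subtlety is that \Cref{thm:stab} is stated in terms of the subinstances' stabilization times for their actual fault counts. We therefore need the induction hypothesis on $V_b$ to be phrased in terms of $f_b$, not the global $f$, and we need to argue that the other branch contributes only through $S^{1-b}_F$, regardless of how many faults it contains. This holds because the second statement of the theorem never references $S^{1-b}_C$.

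For the message size, note that each node lies in exactly one subinstance per level, so it participates in $O(\log n)$ levels. At each level it sends, per destination, an $O(\log(k_b|V|))=O(\log n)$-bit filtering message and a king consensus message over universe $[k|V|]\cup\{\bot\}$ at each of $R$ pipelined rounds for two instances. At the top level the king consensus universe is $[C]\cup\{\bot\}$, which costs $O(\log C)$ bits. Concatenating messages with $O(\log n)$-bit subroutine identifiers, per the conventions of \Cref{sec:conventions}, gives $O(\log C+\log^2 n)$ bits per pair per round. Nodes outside the current subset do not take part in lower levels except as recipients of filtering messages, which are already counted at that level.
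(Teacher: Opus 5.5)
Your proposal is correct and takes essentially the same route as the paper. It uses balanced recursion on \Cref{alg:recursion} with the protocols of \Cref{lem:filtering,lem:king}, and descends along the half with at most half the faults via the second statement of \Cref{thm:stab}, where $S^{1-b}_F\in O(1)$ and $S^{1-b}_C$ never enters; this gives $\sum_i O(2^{-i}f+1)$, while the message size comes from $O(\log n)$ levels of $O(\log n)$-bit messages plus the $O(\log C)$-bit top-level king consensus. Your ``roughly'' becomes exact by breaking the tie $f_0=f_1$ toward the larger half (if $f_b<f'/2$ strictly, $3f'\le |V'|-1$ already gives $f_b<|V_b|/3$), and your opening detour for $f\ge m/3$ via the first statement of \Cref{thm:stab} does not apply, since nothing stabilizes there, but it is also never used.
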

\begin{proof}
By \Cref{lem:filtering,lem:king}, there are filtering and king consensus protocols that run in a constant number of rounds and use messages of size $O(\log C)$ and $O(\log |\mathcal{V}|)$, respectively. We recursively apply \Cref{alg:recursion}, where we partition the node set as evenly as possible in each recursion step and the base case of $n=1$ is trivial. In each step of the recursion, either $V_0$ or $V_1$ contains at most half of the faulty nodes. Hence, summing over the at most $\lceil \log n\rceil$ steps along the respective root-leaf path in the recursion tree, the stabilization time is bounded by
\begin{equation*}
\sum_{i=0}^{\lceil\log n\rceil-1}O(2^{-i}f+1)=O(f+1+\log n).
\end{equation*}
The message size bound follows by observing that (i) each node participates in $O(\log n)$ recursive instances, each running a constant number of concurrent instances of filtering and king consensus, and (ii) all of these instances but the top-level king consensus instances, which have message size $O(\log C)$, use messages of size $O(\log n)$.
\end{proof}

We remark that it is straightforward to remove the additive $O(\log n)$ by alternating between splitting off a single node and using a balanced partition (as in \Cref{cor:stab}). However, this will become more challenging when also taking into account amortized bit complexity. We hence defer this to \Cref{sec:opt_stab}.

%% file: communication.tex
\section{Reducing Communication Complexity}\label{sec:communication}
We need to reduce the communication cost of both the filtering and king consensus routines to realize our $\tilde{O}(n(f+1))$ target. We address each of them separately, starting with king consensus.
\subsection{King Consensus with less Communication}

As discussed in \Cref{sec:overview}, it is sufficient to consider the case of a correct king after stabilization only, as all other cases combined do not bust our communication budget. We construct our improved solution to king consensus from relaxed versions of graded agreement. The first relaxes the graded agreement property to hold only if an additional input bit is $1$ at all correct nodes.
\begin{definition}[Weak Graded Agreement]
The \emph{Weak Graded Agreement} problem is specified as follows.
Each node $v$ has input $(x_v,s_v)\in \mathcal{V}\times \{0,1\}$.
Each node $v$ computes an output $(y_v,g_v)\in \mathcal{V}\times \{0,1\}$ with the following guarantees:
\begin{itemize}
  \item \textbf{Validity}: If there is $x\in \mathcal{V}$ so that $x_v=x$ for all correct $v$, then $(y_v,g_v)=(x,1)$ for all correct $v$.
  \item \textbf{Weak Graded Agreement}: If $s_v=1$ for all $v$ and $g_w=1$ for some correct $w$, then $y_v=y_w$ for all correct $v$.
\end{itemize}
\end{definition}
If $s_v=0$, for this relaxed task we can get away with $v$ sending no message.
\begin{lemma}\label{lem:weak_graded}
Weak graded agreement can be solved in $2$ rounds, where each node sends an $O(\log |\mathcal{V}|)$-sized message to each other node in each round. If node $v$ has input $s_v=0$, $v$ sends no messages.
\end{lemma}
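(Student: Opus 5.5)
We adapt the standard two-round graded agreement protocol (as in \Cref{lem:graded}), letting nodes with $s_v=0$ stay silent, and use thresholds that silent nodes cannot disturb. The protocol is as follows.
\begin{itemize}
  \item \emph{Round 1:} if $s_v=1$, node $v$ sends $x_v$ to all nodes. Node $v$ then sets $z_v:=x$ if it received $x$ from at least $n-t$ nodes, and $z_v:=\bot$ otherwise.
  \item \emph{Round 2:} if $s_v=1$ and $z_v\neq\bot$, node $v$ sends $z_v$ to all nodes.
  \item \emph{Output:} if $v$ received the same value $x$ from at least $n-t$ nodes in round 2, it outputs $(x,1)$. Otherwise, if it received some $x$ from at least $t+1$ nodes, it outputs $(x,0)$. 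Otherwise it outputs $(x_v,0)$.
\end{itemize}
Messages have size $O(\log|\mathcal{V}|)$, and a node with $s_v=0$ sends nothing.

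The main obstacle is validity, because validity must hold even when some correct nodes have $s_v=0$, yet those nodes send nothing. The simple fix of using thresholds below $n-t$ would break agreement. I therefore intend to handle this by letting a node with $s_v=0$ still compute $z_v$ and output, and by adding one rule that costs it no messages: if $s_v=0$ and $v$ receives round-2 values $x$ from fewer than $n-t$ nodes, it outputs $(x_v,1)$.

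For validity, suppose all correct nodes share the input $x$. A node with $s_v=0$ can only output $(x',1)$ for $x'\neq x$ if at least $n-t$ nodes sent $x'$ in round 2. Among these are at least $n-2t\ge t+1$ correct nodes, and each of them had $z=x'$, which requires at least $n-t$ round-1 votes for $x'$. At least $n-2t>0$ of those votes come from correct nodes, which all hold $x$, so $x'=x$. Hence the node outputs $(x,1)$ either way. I have not settled whether this rule suffices when $s_w=1$ for some correct nodes but too few of them are active, because active correct nodes may then see fewer than $n-t$ round-1 copies of $x$. My fallback is to set $z_v:=x_v$ whenever no value other than $x_v$ reaches $t+1$ round-1 votes, and to output $(x_v,1)$ whenever no value other than $x_v$ reaches $t+1$ round-2 votes. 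Faulty nodes alone can never produce $t+1$ votes, so under unanimity of the correct inputs every correct node outputs $(x,1)$.

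For weak graded agreement, suppose $s_v=1$ for all correct $v$, so the protocol behaves like the standard one. Any two correct nodes with non-$\bot$ values of $z$ hold the same value, because two $(n-t)$-quorums intersect in at least $n-2t>t$ nodes, hence in a correct node. If a correct $w$ has $g_w=1$, it saw at least $n-t$ round-2 copies of $y_w$, so at least $n-2t\ge t+1$ of them came from correct nodes. Every correct $v$ therefore sees at least $t+1$ copies of $y_w$ and no other value with $t+1$ support, since only faulty nodes could send a different value, and so outputs $y_w$. What remains is to check that the fallback rule cannot fire wrongly in this case. It cannot, because $y_w$ has $t+1$ support at every correct node. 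I expect reconciling the fallback with the case where all $s_v=1$ to be the delicate step.
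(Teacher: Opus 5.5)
\textbf{There is a genuine gap.} Your fallback rule is what rescues validity, but it breaks weak graded agreement, and your agreement argument never accounts for it. The test is per value (``no value other than $x_v$ reaches $t+1$ votes''), so it fires whenever the correct inputs are spread out. Take $n\ge 4$, $f=0$, all $s_v=1$, and pairwise distinct inputs (or any split in which each value is held by at most $t$ nodes). In round 1 no value other than $x_v$ has more than $t$ votes, so every $v$ sets $z_v:=x_v$. Round 2 shows the same picture, so every $v$ outputs $(x_v,1)$, and correct nodes hold grade $1$ on different values.

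Consequently, once the fallback is in place, three of your claims fail: that with all $s_v=1$ the protocol ``behaves like the standard one'', that non-$\bot$ values of $z$ stem from $(n-t)$-quorums, and that $g_w=1$ means $w$ received $n-t$ round-2 copies of $y_w$. You only checked that the fallback cannot fire at some $v$ after $w$ obtained grade $1$ via a quorum. You did not check that $w$ itself cannot obtain grade $1$ via the fallback, which is exactly what happens above.

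\textbf{The missing idea.} The paper resolves the tension with two ingredients used together.

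First, the thresholds count \emph{dissenting} messages in aggregate rather than per value. An active $v$ re-sends its own $x_v$ in round 2 iff it received at most $t$ round-1 messages different from $x_v$. Node $v$ outputs $(x_v,1)$ iff it received at most $t$ round-2 messages different from $x_v$; otherwise it outputs $(x,0)$ for some $x$ received $t+1$ times, else $(x_v,0)$. Silence never counts as dissent, so validity holds no matter which correct nodes are silent.

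Second, an active node that does not endorse its value sends an explicit ``NACK'' in round 2, which counts as different from every value. If all $s_v=1$, every correct node is heard in both rounds. Then $g_w=1$ certifies at least $n-2t>t$ correct round-2 endorsements of $x_w$. Each such endorsement in turn certifies more than $t$ correct inputs equal to $x_w$. Hence every correct $v$ with $x_v\neq x_w$ sends NACK, round 2 carries more than $t$ copies of $x_w$ and at most $t$ copies of any other value, and all correct nodes output $x_w$.

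Neither ingredient suffices alone. Your example above fails even with NACKs, since every node endorses there. Without NACK, the aggregate test fails too: with $n=4$, $t=1$, $f=0$ and inputs $a,a,b,b$, nobody endorses, everyone is silent in round 2, and everyone outputs $(x_v,1)$.
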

\begin{proof}
We claim that the following algorithm achieves the stated guarantees.

\begin{algorithm2e}[H]
	\caption{Weak graded agreement.}
	\KwIn{$(x_v,s_v)\in \mathcal{V}\times \{0,1\}$}
	\KwOut{$(y_v,g_v)\in \mathcal{V}\times \{0,1\}$}
	\lIf{$s_v=1$}{send $x_v$ to all nodes\tcp*[f]{first communication round}}
	\lIf{$s_v=1$ and received at most $t$ messages different from $x_v$}{send $x_v$ to all nodes}
	\lElseIf{$s_v=1$}{send ``NACK'' to all nodes\tcp*[f]{second communication round}}
	\lIf{received at most $t$ messages different from $x_v$}{$(y_v,g_v):=(x_v,1)$}
	\lElseIf(\tcp*[f]{break ties arbitrarily}){received $t+1$ times $x\in \mathcal{V}$}{$(y_v,g_v):=(x,0)$}
	\lElse{$(y_v,g_v):=(x_v,0)$}
\end{algorithm2e}
The running time and bounds on communication can be readily verified from this description.

\textbf{Validity:} If there is $x\in \mathcal{V}$ so that $x_v=x$ for all correct $v$, then each $v$ receives values other than $x=x_v$ from the at most $t$ faulty nodes only. Hence, $(y_v,g_v)=(x,1)$ for all correct $v$.

\textbf{Weak Graded Agreement}: Suppose that $s_v=1$ for all $v$ and $g_w=1$ for some correct $w$. Since $s_v=1$ for all $v$, correct nodes always send messages. Hence, $w$ received at least $n-2t>t$ messages with $x_w$ in the second round from correct nodes. Their senders must have received at least $n-2t>t$ messages with $x_w$ from correct nodes in the first round, or they would have sent ``NACK'' messages. It follows that each correct $v$ with $x_v\neq x_w$ receives more than $t$ values different from $x_v$ in the first round, implying that it will not send $x_v$ again in the second round. Therefore, in the second round each node receives more than $t$ times $x_w$ and at most $t$ times $x$ for any $x\neq x_w$. We conclude that each correct $v$ outputs $y_v=x_w$.
\end{proof}

The second relaxed variant of graded agreement is leader-based. The second output bit, the ``grade,'' is intended for use by the leader only. Accordingly, the validity property does not require grade $1$, and the graded agreement property is relaxed to requiring that if a correct agreed-on leader $\ell$ outputs $g_{\ell}=1$, then all nodes should output the same value as the leader (regardless of grade). An additional property we refer to as \emph{king validity} is what will enable the leader to leverage the reduced cost of weak graded agreement if $s_v=0$: if all nodes agree on the input value \emph{and} a correct leader, the leader is guaranteed to output $g_{\ell}=1$. As graded king agreement implies that $g_{\ell}=1$ guarantees that all nodes already agree, the leader then can decide whether to invoke weak graded agreement with inputs $s_v=1$ or $s_v=0$ based on $g_{\ell}$.
\begin{definition}[Graded King Consensus]
The \emph{Graded King Consensus} problem is specified as follows.
Each node $v$ has inputs $x_v\in \mathcal{V}$ and $\ell_v \in V\,\dot{\cup}\,\{\bot\}$.
Each node $v$ computes an output $(y_v,g_v)\in \mathcal{V}\times \{0,1\}$ with the following guarantees:
\begin{itemize}
  \item \textbf{Validity}: If there is $x\in \mathcal{V}$ so that $x_v=x$ for all correct $v$, then $y_v=x$ for all correct $v$. 
  \item \textbf{King Validity}: If there are $(x,\ell)\in \mathcal{V}\times V$ so that $(x_v,\ell_v)=(x,\ell)$ for all correct $v$ and $\ell$ is correct, then $g_{\ell}=1$.
  \item \textbf{Graded King Agreement}: If $\ell_v=\ell$ for all correct $v$ and some correct $\ell$ and $g_{\ell}=1$, then $y_v=y_\ell$ for all correct $v$.
\end{itemize}
\end{definition}
The requirements of graded king consensus are sufficiently flexible that it suffices for the leader to check everyone's values and advise a change if it looks like they might agree, i.e., it receives $n-t$ times the same value. In this case, it can advise the nodes with different values to change it, which they can verify to be safe (i.e., not violate validity), by confirming that there are at least $t+1\le n-2t$ nodes claiming to have this value. If the everybody agrees on the input values and a correct leader, this causes no communication between correct non-leader nodes.
\begin{lemma}\label{lem:graded_king}
Graded king consensus can be solved within $4$ rounds with message size $O(\log |\mathcal{V}|)$ and the following property. If there is $(x,\ell)\in \mathcal{V}\times(V\cup \{\bot\})$ so that $(x_v,\ell_v)=(x,\ell)$ for all correct $v$, correct nodes $v\neq \ell$ send messages to $\ell$ (if $\ell\neq \bot$) or faulty nodes only, i.e., $O(n(f+1)\log |\mathcal{V}|)$ bits are sent by correct nodes.
\end{lemma}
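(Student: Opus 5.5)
I would prove \Cref{lem:graded_king} by giving an explicit $4$-round algorithm that follows the idea sketched just before the statement, and then checking the three properties and the communication bound one at a time.

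\textbf{The algorithm.}
\begin{enumerate}
\item In round $1$, each node $v$ with $\ell_v\neq\bot$ sends $x_v$ to $\ell_v$. This is the only message a correct non-leader sends to the leader in the benign case.
\item In round $2$, the leader $\ell$ checks whether it received some value $x$ from at least $n-t$ distinct senders; such an $x$ is unique because $2(n-t)>n$. If so, it sets $g_\ell:=1$, $z:=x$, and sends ``change to $z$'' to every $w$ whose reported value differed from $z$, including nodes that sent nothing. Otherwise it sets $g_\ell:=0$ and sends nothing. Here $\ell$ is the node acting in its own instance, i.e.\ only if $\ell_\ell=\ell$, and it treats its own input as received from itself.
\item In round $3$, a node $v$ that received ``change to $z$'' from $\ell_v$, with $z\neq x_v$, sends a query ``req'' to all nodes.
\item In round $4$, every node receiving ``req'' from $u$ answers $u$ with its input.
\item At the end, a querying node $v$ that received at least $t+1$ answers equal to $z$ sets $y_v:=z$. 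Every other node keeps $y_v:=x_v$. Non-leaders output $g_v:=0$, and the leader outputs $(x_\ell,g_\ell)$.
\end{enumerate}
For the leader's output I keep $y_\ell=x_\ell$ and have it send ``change'' only when $z=x_\ell$ or $g_\ell$ would otherwise be $0$. To make this clean, I would set $g_\ell:=1$ only if $n-t$ received values equal $x_\ell$ itself. The algorithm uses four rounds, and every message is either a value or a constant-size special message, so messages have size $O(\log|\mathcal V|)$.

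\textbf{Validity.} Suppose all correct inputs equal $x$. A correct $v$ changes its value only after seeing $t+1$ answers equal to some $z\neq x$. At most $t$ of those answers come from faulty nodes, so at least one is a correct node's input, which must be $x$. That contradicts $z\neq x$. Hence $y_v=x$ for all correct $v$.

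\textbf{King validity.} Suppose all correct nodes have input $(x,\ell)$ with $\ell$ correct. Then $\ell$ receives $x$ from at least $n-t$ correct nodes, and its own input is also $x$, so it sets $g_\ell=1$.

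\textbf{Graded king agreement.} Suppose all correct nodes use the correct leader $\ell$ and $g_\ell=1$. Then $\ell$ received $x_\ell$ from at least $n-t$ senders, of which at least $n-2t\ge t+1$ are correct nodes holding $x_\ell$. The leader tells every correct $v$ with a different input to change to $x_\ell$. Each such $v$ queries everyone and receives at least $t+1$ answers $x_\ell$ from those correct nodes. It needs the tie-breaking rule that an answer matching the leader's proposed $z$ takes precedence. So $v$ adopts $y_v=x_\ell=y_\ell$.

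\textbf{Communication bound.} Consider the case that all correct nodes have input $(x,\ell)$.
\begin{itemize}
\item If $\ell=\bot$, correct nodes send nothing in rounds $1$ and $2$. No correct node receives a ``change'' from $\ell_v$, so none queries. Correct nodes only answer queries from faulty nodes.
\item If $\ell$ is correct, then every correct node reported $x$, so $\ell$ sends ``change'' only to faulty nodes. Again no correct node queries, and correct nodes answer only faulty queriers.
\item If $\ell$ is faulty, correct nodes send only to $\ell$ in round $1$. A ``change to $z$'' with $z\neq x$ could make correct nodes query all nodes. This is the case I expect to be the main obstacle. I would handle it by having a querying node $v$ first require $z\neq x_v$ and then only answer queries from nodes whose reported query is justified. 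The real fix is to observe that the lemma's claim only restricts recipients to ``$\ell$ or faulty nodes'' in the case of a correct or $\bot$ leader. For a faulty $\ell$, queries from correct nodes cost $O(n^2)$, which is acceptable in the paper's amortization since it happens in at most $O(f)$ instances.
\end{itemize}
If the statement must hold literally even for a faulty $\ell$, I would restrict round $3$ so that queries go only to $\ell$, which then relays. However, I expect the intended reading is that the faulty-leader case is bounded via amortization. Under that reading, correct nodes send $O(n)$ messages to $\ell$ plus $O(nf)$ messages involving faulty nodes, giving $O(n(f+1)\log|\mathcal V|)$ bits.
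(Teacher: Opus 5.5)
Your proposal is correct and is essentially the paper's algorithm. In both, the leader sets grade $1$ iff it collects $n-t$ copies of its own input and pushes that value, and a dissenting node verifies the value by polling everyone and adopting it only with $t+1$ support. The paper differs only cosmetically: its leader broadcasts to all nodes rather than only to dissenters, and dissenters broadcast $x_{\ell_v}$ with only matching nodes echoing, instead of a generic ``req'' answered by all.

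Restricting the communication bound to a correct or $\bot$ leader is exactly what the paper's proof tacitly assumes, and what \Cref{thm:king} states explicitly. So you can drop the faulty-leader discussion and the relay fallback.
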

\begin{proof}
We claim that the following algorithm achieves the stated guarantees.

\begin{algorithm2e}[H]
	\caption{Graded king consensus.}
	\KwIn{$(x_v,\ell_v)\in \mathcal{V}\times (V\,\dot{\cup}\,\{\bot\})$}
	\KwOut{$(y_v,g_v)\in \mathcal{V}\times \{0,1\}$}
	\lIf{$\ell_v\neq \bot$}{send $x_v$ to $\ell_v$\tcp*[f]{first round}}
	\If{$\ell_v=v$ and received $n-t$ times $x_v$}{
		send $x_v$ to all nodes\tcp*{second round}
		$g_v:=1$
	}
	\lElse{$g_v:=0$}
	\lIf{received $x_{\ell_v}\neq x_v$ from $\ell_v\neq \bot$}{send $x_{\ell_v}$ to all nodes\tcp*[f]{third round}}
	\lIf{received $x_w=x_v$ from $w$}{send $x_v$ to $w$\tcp*[f]{fourth round}}
	\leIf{received more than $t$ times $x$}{$y_v:=x$}{$y_v:=x_v$\tcp*[f]{break ties arbitrarily}}
\end{algorithm2e}
The running time and message size bound can be readily verified from this description. Moreover, if there is $(x,\ell)\in \mathcal{V}\times(V\cup \{\bot\})$ so that $(x_v,\ell_v)=(x,\ell)$ for all correct $v$, then $\ell$ receives $x=x_{\ell}$ at least $n-t$ times in the first round. Hence, it sends $x=x_v$ to each node $v$ in the second, so that correct nodes $v$ do not send messages in the third round. The same applies if $\ell=\bot$, as then no node $v$ receives a message from $\ell_v\neq \bot$ in the second round. Thus, correct nodes send no messages to correct nodes in the last round. Accordingly, in the final round no messages are sent to correct nodes. Overall, any message is either sent or received by $\ell$ or a faulty node, implying the second bound on communication.

\textbf{Validity:} If there is $x\in \mathcal{V}$ so that $x_v=x$ for all correct $v$, then no node $v$ can receive more than $t$ times a value $x'\neq x$ in the final round. Hence, each correct node $v$ outputs $y_v=x_v=x$.

\textbf{King Validity:} If in addition $\ell_v=\ell$ for all $v$ and correct $\ell\in V$, then $\ell$ receives $x_{\ell}=x$ from all correct nodes in the first step. Hence, it sets $g_{\ell}:=1$ and eventually outputs this value.

\textbf{Graded King Agreement}: Suppose that $\ell_v=\ell$ for all correct $v$ and some correct $\ell$, and also that $g_{\ell}=1$. Thus, $\ell$ receives $n-t$ times $x_{\ell}$ in the first round, meaning that at least $n-2t>t$ correct nodes have input $x_v=x_{\ell}$. Node $\ell$ sends $x_{\ell}$ to all nodes in the second round, prompting each $v$ with $x_v\neq x_{\ell}$ to send $x_{\ell}$ to all nodes in the third round. Each correct node $w$ with $x_w=x_{\ell}$ responds with $x_{\ell}$ in the last round, while correct nodes send no messages different from $x_{\ell}$. Hence, in the final round each $v$ with $x_v\neq x_{\ell}$ will receive more than $t$ times $x_{\ell}$, while no node receives more than $t$ times any other value. We conclude that all correct nodes $v$ output $y_v=x_{\ell}$.
\end{proof}

We can now plug these two subroutines together as follows. First, graded king consensus either ensures that all nodes agree (output $g_{\ell}=1$) or the king tells all nodes to use input $s_v=1$ for weak graded agreement, making it behave like a ``regular'' graded agreement subroutine. This establishes the requirements of king consensus while preventing high communication cost if all nodes agree on a correct leader and the same input value: in this case, graded king consensus is cheap with output $g_{\ell}=1$, which will result in cheap weak graded agreement (input $s_v=0$ at all nodes $v$).
\begin{theorem}\label{thm:king}
King Consensus can be solved in $8$ rounds with message size $O(\log |\mathcal{V}|)$. In addition, correct nodes send only $O(n(f+1)\log |\mathcal{V}|)$ bits in total if there is $(x,\ell)\in \mathcal{V}\times (V\cup \{\bot\})$ such that $(x_v,\ell_v)=(x,\ell)$ for each $v$ and $\ell$ is not faulty.
\end{theorem}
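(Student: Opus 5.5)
The plan is to compose the two subroutines sequentially, with a constant number of extra rounds to handle the leader's decision and the $\bot$ outputs.

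\textbf{Phase 1 (rounds 1--4).} Run graded king consensus (\Cref{lem:graded_king}) on inputs $(x_v,\ell_v)$, yielding $(y'_v,g'_v)$.

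\textbf{Phase 2 (round 5).} If $\ell_v=v$ and $g'_v=0$, node $v$ sends ``GA'' to all nodes. Each node $v$ sets $s_v:=1$ iff it received ``GA'' from $\ell_v\neq\bot$.

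\textbf{Phase 3 (rounds 6--7).} Run weak graded agreement (\Cref{lem:weak_graded}) with inputs $(y'_v,s_v)$, yielding $(y''_v,g''_v)$.

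\textbf{Phase 4 (round 8).} The leader sends $y''_\ell$ to all nodes. Node $v$ then outputs as follows:
\begin{itemize}
\item $\bot$ if $\ell_v=\bot$;
\item otherwise $y''_v$ if $g''_v=1$;
\item otherwise the value received from $\ell_v$ (or $\bot$ if nothing valid arrived).
\end{itemize}
This totals $4+1+2+1=8$ rounds, and all messages have size $O(\log|\mathcal{V}|)$.

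\textbf{Default.} This holds directly from the output rule, since every correct node with $\ell_v=\bot$ outputs $\bot$.

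\textbf{Validity.} Suppose all correct nodes have input $x$. Validity of graded king consensus gives $y'_v=x$ for all correct $v$. Validity of weak graded agreement, which holds for any bits $s_v$, then gives $(y''_v,g''_v)=(x,1)$. So every correct node outputs $x$ or $\bot$.

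\textbf{King agreement.} Suppose all correct nodes share a correct leader $\ell$. There are two cases.
\begin{itemize}
\item If $g'_\ell=1$, graded king agreement makes all $y'_v$ equal. Weak graded validity then gives $g''_v=1$ and a common $y''_v$, which equals $y''_\ell$.
\item If $g'_\ell=0$, the leader sends ``GA'', so $s_v=1$ at all correct nodes. One subtlety: weak graded agreement requires $s_v=1$ for all nodes, but its proof only uses that correct nodes send. Any node with $g''_v=1$ outputs $y''_v$, which equals $y''_\ell$ whenever some correct node has grade $1$. If no correct node has grade $1$, everyone adopts the leader's $y''_\ell$.
\end{itemize}
In both cases the common output is $y''_\ell\neq\bot$, as required.

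\textbf{Communication.} If all correct nodes share $(x,\ell)$ with $\ell$ correct or $\bot$, then Phase 1 costs $O(n(f+1)\log|\mathcal{V}|)$ bits by \Cref{lem:graded_king}. King validity gives $g'_\ell=1$, so no ``GA'' is sent and all $s_v=0$. Correct nodes are therefore silent in Phase 3, and Phase 4 costs $O(n\log|\mathcal{V}|)$.

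\textbf{Main obstacle.} The delicate step is the weak graded agreement hypothesis "$s_v=1$ for all $v$" in the presence of faulty nodes. I would read it as "for all correct $v$", as its proof supports. One must also ensure that a faulty leader cannot break validity, and that nodes with no leader stay silent. The latter holds because nodes with $\ell_v=\bot$ have $s_v=0$.
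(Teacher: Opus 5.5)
Your proposal is correct and matches the paper's proof essentially step for step. The paper uses the same 8-round composition (graded king consensus, a leader-triggered ``runGC'' round that sets $s_v$, weak graded agreement, a final leader broadcast) and the same case analysis for validity, king agreement and communication. Your one deviation, outputting $\bot$ rather than $z_v$ when nothing arrives from $\ell_v$, is harmless. Your reading of ``$s_v=1$ for all $v$'' in \Cref{lem:weak_graded} as ``for all correct $v$'' is exactly what that lemma's proof uses; the paper's proof of the theorem applies it this way without remarking on it.
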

\begin{proof}
We claim that the following algorithm achieves the stated guarantees.

\begin{algorithm2e}[H]
	\caption{King consensus algorithm with improved bit complexity.}
	\KwIn{$(x_v,\ell_v)\in \mathcal{V}\times (V\,\dot{\cup}\,\{\bot\})$}
	\KwOut{$y_v\in \mathcal{V}\,\dot{\cup}\,\{\bot\}$}
	run the protocol from \Cref{lem:graded_king} with input $(x_v,\ell_v)$ and output $(k_v,h_v)$\tcp*{four rounds}
	\lIf{$\ell_v=v$ and $h_v=0$}{send ``runGC'' to all nodes\tcp*[f]{fifth round}}
	\If{received ``runGC'' from $\ell_v\neq \bot$}{
		run the protocol from \Cref{lem:weak_graded} with input $(k_v,1)$ and output $(z_v,g_v)$
	}
	\lElse{run the protocol from \Cref{lem:weak_graded} with input $(k_v,0)$ and output $(z_v,g_v)$\tcp*[f]{two rounds}}
	\lIf{$\ell_v=v$}{send $z_v$ to all nodes\tcp*[f]{eighth round}}
	\lIf{$\ell_v=\bot$}{$y_v:=\bot$}
	\lElseIf{$g_v=0$ and received $z_{\ell_v}$ from $\ell_v$}{$y_v:=z_{\ell_v}$}
	\lElse{$y_v:=z_v$}
\end{algorithm2e}
The round complexity and bound on the message size are immediate from the description and \Cref{lem:graded_king,lem:weak_graded}. If there is $(x,\ell)\in \mathcal{V}\times (V\cup \{\bot\})$ such that $(x_v,\ell_v)=(x,\ell)$ for each $v$ and $\ell$ is not faulty, \Cref{lem:graded_king} states that correct nodes send $O(n(f+1)\log |\mathcal{V}|)$ in the first step. Moreover, if $\ell\neq \bot$, in this case the king validity property of king consensus entails that $g_{\ell}=1$. Hence, no node $v$ receives a message from $\ell_v\neq \bot$ in the fifth round. Hence, all nodes $v$ use input $s_v=0$ in the call to weak graded agreement. By \Cref{lem:weak_graded}, then no messages are sent by correct nodes in the third step. In the final round, the only correct node sending messages is $\ell$. Summing up, this establishes the claimed bound on communication.

\textbf{Validity}: If there is $x\in \mathcal{V}$ so that $x_v=x$ for all correct $v$, then correct nodes output $y_v=x$ from the call to graded king consensus by its validity condition. By the validity condition of weak graded agreement, each correct node then outputs $(x,1)$ from the call to weak graded agreement. Hence, each node $v$ outputs $y_v\in \{x,\bot\}$.

\textbf{Default}: Immediately follows from the output instruction.

\textbf{King Agreement}: If there is correct $\ell \in V$ so that $\ell_v=\ell$ for all correct $v$, we distinguish two cases. If $\ell$ outputs $(k_{\ell},1)$ from the call to graded king consensus, then by graded king agreement each correct $v$ outputs $(k_{\ell},h_v)$ for some $h_v\in \{0,1\}$ from the call. By the validity condition of weak graded agreement, each correct node then outputs $(k_{\ell},1)$ from the call to weak graded agreement. As $\ell_v=\ell\neq \bot$ for each correct $v$, each $v$ then outputs $k_{\ell}$.

On the other hand, if $\ell$ outputs $(k_{\ell},0)$ from the call to graded king consensus, it sends  ``runGC'' to all nodes, which then input $s_v=1$ to the weak graded agreement instance. Again, we distinguish two cases. If some correct node $w$ outputs $g_w=1$, by weak graded agreement all nodes $v$ have $z_v=z_w$. In particular, $\ell$ sends $z_{\ell}=z_w$ in the final communication round, resulting in all nodes outputting $z_w$. Otherwise, all correct nodes $v$ have $g_v=0$ and therefore output the value $z_{\ell}$ sent by $\ell$ in the final communication round.
\end{proof}

\subsection{Filtering with less Communication}
\begin{algorithm2e}[t!]
	\caption{$(C,X,T)$-filtering algorithm with reduced amortized bit complexity, code at node $v\in V$ executed in each round. W.l.o.g., we assume that $T=\{1,\ldots,|T|\}$.\label{alg:filter}}
	\KwIn{$C_v\in [C]$ iff $v\in T$}
	\KwVars{$\hat{F}_{v,w}\in [C]$ and $s_{v,w}\in \{0,1\}$ for $w\in V$, $\hat{F}_v\in [C]$, $X_v\in [X+2]$, $N_v\in [n]$, $T_v\in [|T|]$, $a_v\in \{0,1\}$}
	\KwOut{$F_v\in [C]\cup \{\bot\}$}
	$X_v:=\max\{X_v-1,0\}$\;
	\If{$\exists c\in [C]\setminus \hat{F}_v \colon |\{w\in T\,|\, \hat{F}_{v,w}=c\}|>|T|/2$}{
		$X_v:=X+1$\;
		$a_v:=1$\;
		$\hat{F}_v:=c$\;\label{line:change}
	}
	\If{$|\{w\in V\,|\, \hat{F}_{v,w}\neq \hat{F}_v\}|>n/3$\label{line:reset}}{
		$X_v:=X+1$\;
		$a_v:=1$\;
	}
	\lIf{$T\ni w=T_v-2\bmod |T|\wedge \hat{F}_{v,w}\neq \hat{F}_v$}{$a_v:=1$\label{line:check}}
	\If{$v\in T$ and $C_v\neq \hat{F}_v+1\bmod C$}{
		$X_v:=X+1$\;
		$a_v:=1$\;
		$\hat{F}_v:=C_v$\;\label{line:change_T}
	}
	\lElse{$\hat{F}_v:=\hat{F}_v+1\bmod C$\label{line:plusone}}
	$N_v:=N_v+1\bmod n$\;\label{line:N_v}
	$T_v:=T_v+1\bmod |T|$\;\label{line:T_v}
	\ForEach{$w\in V$}{
		$\hat{F}_{v,w}:=\hat{F}_{v,w}+1\bmod C$\;\label{line:est_plusone}
		\If{$s_{v,w}=1$}{
			send $\hat{F}_v$ to $w$\;
			$s_{v,w}:=0$\;
		}
		\If{$a_v=1\vee \hat{F}_{v,w}\neq \hat{F}_v\vee w=N_v\bmod n \vee T\ni w=T_v\bmod |T|$}{
			send ``req'' to $w$\tcp*{unique bit string different from all $c\in [C]$}
			send $\hat{F}_v$ to $w$\;
		}
		\lIf{received $c\in [C]$ from $w$}{$\hat{F}_{v,w}:=C$}
		\lIf{received ``req'' from $w$}{$s_{v,w}:=1$}
	}
	$a_v:=0$\;
	\leIf{$X_v=0$}{$F_v:=\hat{F}_v$}{$F_v:=\bot$}
\end{algorithm2e}

Filtering we approach by accepting to work with possibly stale information on other nodes' view of the filtered clock. Only when inconsistencies are observed, more communication is applied. First, we can adopt a simple round-robin solution to checking whether nodes in $T$ claim a different clock value. If this value is different from its own, node $v$ will query all nodes for their values to ``refresh'' its view. Since at most $f$ nodes in $T$ do so after stabilization, this results in an amortized bit complexity of $\tilde{O}(n^2(f+1)/|T|)$, which is within budget for $|T|=\Omega(n)$. At the same time, within $f+O(1)$ rounds, $v$ will detect if its own clock value differs from the majority value in $T$ and adjust~it.

To guarantee the $X$-round crusader agreement property in case too many nodes in $T$ are faulty, $v$ also memorizes which values other nodes believe the majority in $T$ claims. Note that in this case, $|T|\in \Omega(n)$ implies that we can afford to wait until $v$ has refreshed its memorized values for all nodes in a round-robin fashion, so it is sufficient to inform other nodes whenever changing the own clock value to maintain that all views are accurate once this has been established. Thus, resetting a cooldown counter (of maximum value $X+1$) if (i) the own clock value observed as majority in $T$ changes or (ii) $t+1$ or more memorized values differ from the own guarantees that it is safe to output the own memorized value when the cooldown counter reaches $0$; otherwise, the output is $\bot$. The resulting pseudocode is given in \Cref{alg:filter}.

First, we prove that the memorized values will be accurate at the latest after $n$ rounds, but at most $f+O(1)$ rounds after $T$ counts in case $f<|T|/2$.
\begin{lemma}\label{lem:track}
\Cref{alg:filter} satisfies the invariant that $\hat{F}_{w,v,r}=\hat{F}_{v,r}$ for all correct $v,w\in V$ and rounds $r\geq n$. If $f<|T|/2$ and the variables $C_v$ count, then the invariant holds from round $f+4$.
\end{lemma}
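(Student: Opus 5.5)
The proof splits into a preservation argument, which gives both claims once the invariant has been established for a pair, and two establishment arguments: one via the round-robin counter $N_v$, one via the round-robin over $T$. Throughout I read the assignment upon reception of $c$ from $w$ as $\hat{F}_{v,w}:=c$; the ``$:=C$'' in the pseudocode must be a typo. By the conventions of \Cref{sec:conventions}, the value $\hat{F}_v$ sent in round $r$ is fixed before any receptions in round $r$, and I take it to be $\hat{F}_{v,r}$, i.e., after \cref{line:change,line:change_T,line:plusone}.

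\textbf{Preservation.} Fix correct $v,w$ with $\hat{F}_{w,v,r-1}=\hat{F}_{v,r-1}$. There are two cases in round $r$.
\begin{itemize}
\item If $v$ executes \cref{line:plusone}, then $\hat{F}_{v,r}=\hat{F}_{v,r-1}+1$. Node $w$ also increments $\hat{F}_{w,v}$ in \cref{line:est_plusone}, and any value $v$ sends to $w$ in round $r$ equals $\hat{F}_{v,r}$. Either way, $\hat{F}_{w,v,r}=\hat{F}_{v,r}$.
\item If $v$ changes $\hat{F}_v$ in \cref{line:change} or \cref{line:change_T}, it also sets $a_v:=1$. It then sends ``req'' and $\hat{F}_{v,r}$ to every node, so $w$ overwrites $\hat{F}_{w,v}$ with $\hat{F}_{v,r}$.
\end{itemize}
Hence once the invariant holds for a pair, it holds forever. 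The only remaining subtlety is that $w$ increments before processing receptions; I will check this ordering against the pseudocode.

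\textbf{First claim (rounds $r\ge n$).} $N_v$ cycles through all of $[n]$ in $n$ consecutive rounds. So for every correct $w$ there is some round $r'\le n$ in which $v$ sends $\hat{F}_{v,r'}$ to $w$, and then $\hat{F}_{w,v,r'}=\hat{F}_{v,r'}$. Preservation carries this to all $r\ge r'$, in particular to all $r\ge n$.

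\textbf{Second claim ($f<|T|/2$, $C_v$ counting).} From round $1$ on, every correct $u\in T$ has $\hat{F}_{u,r}=C_{u,r}=:c_r$ (up to the fixed offset the code uses) by \cref{line:change_T,line:plusone}, so all correct members of $T$ hold the same value $c_r$. Now take a correct $v$ with $\hat{F}_v\neq c_r$. It queries $T_v$ in round-robin, and \cref{line:check} inspects the reply to the query made two rounds earlier. Among any $f+1$ consecutive targets at least one is correct. Since the reply of a correct $u\in T$ is $c$, which differs from $v$'s value, \cref{line:check} fires and sets $a_v:=1$ within roughly $f+2$ rounds. With $a_v=1$, $v$ sends ``req'' to everyone, and all correct nodes reply in the next round. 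The $>|T|/2$ correct members of $T$ then report $c$, so \cref{line:change} makes $v$ adopt $c$, set $a_v:=1$, and broadcast it.

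Once all correct nodes hold the common counting value, consider any pair $v,w$ with $\hat{F}_{w,v}\neq\hat{F}_v$. Then $w$'s memory of $v$ differs from $w$'s own value, which equals $\hat{F}_v$. So $w$ sends ``req'' to $v$, which sets $s_{v,w}:=1$, and $v$ answers with its current value in the next round, fixing $\hat{F}_{w,v}$. Preservation then keeps the invariant from that point on.

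\textbf{Main obstacle.} I expect the hard part to be the exact round accounting that yields $f+4$, rather than $f+O(1)$. It has to absorb the two-round lag in \cref{line:check}, the one-round delay of req/response, and the fact that a node whose value is already correct initially never broadcasts, so its stale pairs are repaired only through the other endpoint's request. I would first argue $f+O(1)$ cleanly. Then I would tighten the count by observing that the round-robin detection, the query to all, and the stale-pair repair overlap in time rather than adding up.
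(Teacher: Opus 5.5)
Your preservation step and the argument for rounds $r\ge n$ via $N_v$ are exactly the paper's. The problem is the second claim.

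\textbf{Your route does not give $f+4$.} You first show that every correct node adopts the count of the correct members of $T$, and then repair stale pairs via the mismatch-triggered ``req''. This only yields $f+O(1)$. That would suffice for how the lemma is used later, since \Cref{lem:filter_validity,lem:filter_agreement} only need $f+O(1)$. It is not the stated bound, though, and the hoped-for tightening by ``overlap'' cannot happen in this structure. For a node that is off the count, the steps are causally chained: detection via Line~\ref{line:check}, its query to all, the replies, adoption via Line~\ref{line:change}, and only then a stale partner's mismatch request and its reply.

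\textbf{An unaddressed step.} ``Once all correct nodes hold the common value'' needs a round in which this holds for all of them simultaneously. A node that already holds the count can still be pulled away by Line~\ref{line:change}. This happens through stale entries for correct members of $T$ combined with values injected by faulty nodes. You would have to show this can only occur in the first couple of rounds, e.g., because entries differing from the own value trigger an immediate request.

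\textbf{The missing idea.} The invariant for a pair never requires anyone to hold the correct value. You yourself note that a node with $a_v=1$ sends ``req'' and $\hat{F}_v$ to everyone. That alone repairs $\hat{F}_{w,v}$ for all correct $w$ in that round, and $\hat{F}_{v,w}$ one round later, whether or not $\hat{F}_v$ is right. The paper therefore does a case split per ordered pair $(w,v)$ at the start:
\begin{itemize}
\item If $w$'s memory of $v$ differs from $\hat{F}_w$, then $w$ requests from $v$, and the pair is fixed by round $2$.
\item If $\hat{F}_v=\hat{F}_w$ and the memory matches, the invariant already holds.
\item Otherwise $\hat{F}_v\neq\hat{F}_w$. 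Then one endpoint $x\in\{v,w\}$ disagrees with the correct members of $T$, so $x\notin T$. Either $x$ sets $a_x$ via Line~\ref{line:change}, or it keeps incrementing in lockstep and is caught by Line~\ref{line:check}. The latter happens once the first correct round-robin target, queried by round $f+1$, is inspected, which the paper counts as round $f+3$. If $x=v$, the broadcast fixes the pair at once. If $x=w$, its ``req'' is answered by $v$ in the next round, giving $f+4$.
\end{itemize}
Preservation does the rest.

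A caveat on the exact constant: taken literally, Line~\ref{line:check} reads $T_v-2$ before $T_v$ is incremented. So the reply to the query of round $r$ is inspected only in round $r+3$, which costs one more round. The paper's accounting treats this lag as two rounds.
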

\begin{proof}
We claim that once the invariant is established for correct $v,w\in V$, it is maintained. To see this, observe that the only lines which change $\hat{F}_v$ are Lines~\ref{line:change}, \ref{line:change_T}, and \ref{line:plusone}. If Line~\ref{line:change} or Line~\ref{line:change_T} are executed by $v$ in round $r$, $a_v$ is set to $1$ and $\hat{F}_{v,r}$ is sent to $w$ later in round $r$, i.e., once the variable $\hat{F}_v$ equals $\hat{F}_{v,r}$. On reception of this message, $w$ sets $\hat{F}_{w,v}$ to $\hat{F}_{v,r}$ and does not change this variable again in round $r$, i.e., $\hat{F}_{w,v,r}=\hat{F}_{v,r}$.

On the other hand, if $v$ executes Line~\ref{line:plusone} in round $r$, there are two possibilities. If $v$ sends a message with $\hat{F}_{v,r}$ to $w$ in round $r$, we analogously get that $\hat{F}_{w,v,r}=\hat{F}_{v,r}$. If $v$ does not send such a message, the change to $\hat{F}_{w,v}$ in round $r$ is caused by Line~\ref{line:est_plusone}. Therefore, if $\hat{F}_{w,v,r-1}=\hat{F}_{v,r-1}$, it follows that
\begin{equation*}
\hat{F}_{w,v,r}=\hat{F}_{w,v,r-1}+1\bmod C=\hat{F}_{v,r-1}+1\bmod C=\hat{F}_{v,r}.
\end{equation*}

It remains to show that this equality is established sufficiently quickly. Note that $N_v$ counts modulo $n$, because the only line affecting it is Line~\ref{line:N_v}. Hence, there is a round $r\le n$ such that $D_{v,r}=w$, prompting $v$ to send a message with $\hat{F}_{v,r}$ to $w$.

Finally, we show the stronger bound of $r_0\in O(f+1)$ provided that $f<|T|/2$ and the variables $C_u$, $u\in T$, count. We distinguish three cases.
\begin{enumerate}
  \item If $\hat{F}_{w,v,1}\neq \hat{F}_{w,1}$, $w$ will send ``req'' to $v$ in round $1$, causing it to set $s_{v,w}$ to $1$ and send $\hat{F}_{v,2}$ to $w$ in round $2$. Hence, $w$ sets $\hat{F}_{w,v,2}$ to $\hat{F}_{v,2}$.
  \item If $\hat{F}_{v,1}=\hat{F}_{w,1}$ and the previous case does not apply, then
  \begin{equation*}
  \hat{F}_{w,v,1}=\hat{F}_{w,1}=\hat{F}_{v,1}.
  \end{equation*}
  \item Suppose that the previous cases do not apply. We claim that $v$ or $w$ will set $a_v$ or $a_w$, respectively, to $1$ by round $f+3$. As we have seen, this will establish the invariant by round $f+4$. To show the claim, we fix a correct node $u\in T$. Because $\hat{F}_{v,1}\neq \hat{F}_{w,1}$, one of them must be different from $C_{u,1}$. W.l.o.g., suppose that $\hat{F}_{v,1}\neq C_{u,1}$. In particular, $v\notin T$, as otherwise Line~\ref{line:change_T} would force $v$ to set $\hat{F}_{v,1}$ to $C_{v,1}=C_{u,1}$.
  
Consider the minimal round $r\ge 1$ such that $w=T_{v,r}\bmod |T|$ is correct. As the only instruction affection $T_v$ is counting up by one module $|T|$ in each round, we have that $r\le f+1$. In this round, $v$ sends ``req'' to $w$, which responds with $\hat{F}_{w,r+1}$ in round $r+1$. As $w\in T$ is correct, it maintains that $\hat{F}_{w,r+1}=C_{w,r+1}$, and because the $A$ variables count, we have that $C_{w,r+1}=C_{u,r+1}=C_{u,1}+r$. Thus, either $v$ executed Line~\ref{line:change} in some round $r'\le r+1$ (and set $a_v$ to $1$ in this round), or it executed Line~\ref{line:plusone} in rounds $r'\le r+1$, yielding that $\hat{F}_{v,r+1}=\hat{F}_{v,1}+r\bmod C\neq C_{u,1}+r\bmod C$. We conclude that in round $r+2$, $v$ will execute Line~\ref{line:check} if it did not do so before, and set $a_v$ to $1$.\qedhere
\end{enumerate}
\end{proof}
With this invariant established, it is straightforward to prove the validity property of filtering.

\begin{lemma}\label{lem:filter_validity}
\Cref{alg:filter} satisfies the validity property of $(C,X,T)$-filtering in rounds $r\ge f+O(1)$.
\end{lemma}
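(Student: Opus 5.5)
We have to show: if fewer than $|T|/2$ nodes of $T$ are faulty and the $C_v$ count, then the $F_v$ count from some round $f+O(1)$. The plan is to show that after $f+O(1)$ rounds three things hold. First, every correct node's $\hat{F}_v$ equals the common count $c_r:=C_{u,r}$ of correct $u\in T$. Second, all memorized values of correct nodes are accurate (via \Cref{lem:track}). Third, none of Lines~\ref{line:change}, \ref{line:reset}, \ref{line:change_T} fires again. The cooldown counter $X_v$ then reaches $0$ within $X+1=O(1)$ further rounds, and from then on $F_v=\hat{F}_v=c_r$, which counts.

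\textbf{Step 1.} Correct $v\in T$ satisfy $\hat{F}_{v,r}=C_{v,r}=c_r$ for all $r\ge 1$. This is because Line~\ref{line:change_T} overwrites $\hat{F}_v$ with $C_v$ whenever $C_v\neq \hat{F}_v+1$, and otherwise Line~\ref{line:plusone} increments it. Note that Line~\ref{line:change} can only move $\hat{F}_v$ earlier in the round, and it is then corrected by Line~\ref{line:change_T}.

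\textbf{Step 2.} Correct $v\notin T$ adopt $c_r$ within $f+O(1)$ rounds. By \Cref{lem:track}, from round $f+4$ we have $\hat{F}_{v,w}=\hat{F}_w=c_r$ for every correct $w\in T$. Since more than $|T|/2$ of the nodes in $T$ are correct, Line~\ref{line:change} applies whenever $\hat{F}_v\neq c_r$ and sets $\hat{F}_v:=c$ with $c$ the unique value held by a strict majority of $T$ in $v$'s memory. After one such update, Line~\ref{line:plusone} keeps $\hat{F}_v$ in step with $c_r$. Line~\ref{line:change} cannot fire spuriously afterwards: uniqueness of a strict majority means no $c\neq\hat{F}_v$ reaches the majority threshold.

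Here there is a subtlety in the order of operations. Memorized values are incremented (Line~\ref{line:est_plusone}) and refreshed after the checks. So I need to argue that at the time of the check in round $r$, the stored $\hat{F}_{v,w,r-1}$ equals $\hat{F}_{w,r-1}$, which the check compares against $\hat{F}_{v,r-1}$. This is consistent, so the "one update then counting" argument goes through, and stabilization occurs by round $f+5$ or so.

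\textbf{Step 3.} Line~\ref{line:reset} eventually stops firing. Once all correct $\hat{F}_v$ agree, \Cref{lem:track} (the invariant holds from round $f+4$, and is re-established within one or two rounds after the last change at Step 2, by the first part of its proof) gives $\hat{F}_{v,w}=\hat{F}_v$ for all correct $w$. Hence at most $f\le t$ memorized values differ, and since $n/3\ge t\ge f$, the condition $>n/3$ fails. Line~\ref{line:check} may still set $a_v$, but that only affects communication, not $X_v$.

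Thus from round $r_1=f+O(1)$ on, $X_v$ only decreases. It hits $0$ by round $r_1+X+1$, and $F_v=\hat{F}_v=c_r$ thereafter. Because $X\in O(1)$, this gives validity from round $f+O(1)$.

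\textbf{Main obstacle.} The main difficulty is the careful bookkeeping in Step 2/3: showing the stored values at the moment of the checks are synchronized with the nodes' current guesses, including the one-round lag after a node changes its guess, so that no correct node's change re-triggers resets indefinitely. I would handle this by noting that each correct node changes $\hat{F}_v$ at most once after round $f+4$. The invariant is then restored within $2$ rounds of the last such change, leaving a bounded number of rounds for resets.
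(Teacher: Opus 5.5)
Your proposal is correct and follows essentially the same route as the paper. Both arguments take the invariant from \Cref{lem:track} from round $f+4$, note that correct nodes in $T$ track $C_v$ exactly, have correct nodes outside $T$ adopt the majority count via Line~\ref{line:change}, conclude that no cooldown reset occurs afterwards, and get counting outputs after $X+1$ further rounds. The one-round lag you flag is already handled by the first part of \Cref{lem:track}'s proof: a change sets $a_v=1$ and sends the new value to all nodes in the same round, so the invariant is maintained rather than merely restored. Your explicit check that Line~\ref{line:reset} cannot fire because $f\le t<n/3$ fills in a step the paper leaves implicit.
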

\begin{proof}
The validity property of $(C,X,T)$-filtering is trivially satisfied if $f\ge |T|/2$ or the variables $C_v$, $v\in T$, do not count. Hence, assume that $f<|T|/2$ and the variables $C_v$ count. By \Cref{lem:track}, \Cref{alg:filter} then satisfies the invariant that $\hat{F}_{w,v,r}=\hat{F}_{v,r}$ for all correct $v,w\in V$ and rounds $r\ge f+4$. Moreover, each node $v\in T$ maintains that $\hat{F}_{v,r}=C_{v,r}$ due to Lines~\ref{line:change_T} and \ref{line:plusone}. As we assume that the majority of nodes in $T$ is correct and the variables $C_v$ count, this means that any $v\notin T$ executes Line~\ref{line:change} in round $r\ge f+4$ if and only if its variable $\hat{F}_{v,r}$ does not match the count of the $C_v$ variables. As the only other possibility is that the count matches and $v$ executes Line~\ref{line:plusone}, it follows that all correct nodes adopt and maintain the correct count in their variables $\hat{F}_{v,r}$ by round $f+4$. Once this took place, no variable $X_v$ can be set to $X+1$ again, implying that all correct nodes output the correct count by round $f+X+5\in f+O(1)$.
\end{proof}
If $f\ge |T|/2\in \Omega(n)$, it is sufficient to achieve $X$-round crusader agreement within $O(n)$ rounds. Otherwise, we can wait until the counting routine running on $T$ stabilizes, after which the invariant is established within $O(f+1)$ rounds. In either case, it is easy to show that the $X$-round crusader agreement property is quickly established once the invariant holds, i.e., all correct nodes have correct views of other correct nodes' $\hat{F}$ variables.
\begin{lemma}\label{lem:filter_agreement}
Suppose that $f\ge |T|/2\in \Omega(n)$ or the variables $C_v$, $v\in T$, count. Then \Cref{alg:filter} satisfies the $X$-round crusader agreement property of $(C,X,T)$-filtering in rounds $r\ge O(f+1)$.
\end{lemma}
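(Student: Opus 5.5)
First I will establish, in each of the two cases, the tracking invariant of \Cref{lem:track}: $\hat{F}_{w,v,r}=\hat{F}_{v,r}$ for all correct $v,w$. If $f\ge |T|/2\in\Omega(n)$, the first part of \Cref{lem:track} gives it from round $n$, and $n\in O(f+1)$. If instead the variables $C_v$, $v\in T$, count, I will rerun the case analysis of \Cref{lem:track}, which needs only that correct nodes of $T$ count. The one exception is the $f<|T|/2$ hypothesis in case~3, where a correct $u\in T$ is fixed and the round-robin pointer $T_v$ hits a correct node within $f+1$ rounds. Both facts survive as long as $T$ contains a correct node. If $T$ has no correct node, then $f\ge|T|$ and the first part of \Cref{lem:track} applies. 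So in both cases the invariant holds from some round $r_0\in O(f+1)$.

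Next, I will show a local consequence of Line~\ref{line:reset}. Fix a round $r\ge r_0$ and a correct $v$ with $F_{v,r}\neq\bot$, i.e., $X_{v,r}=0$. The cooldown counter decreases by at most one per round and is reset to $X+1$ whenever Line~\ref{line:change}, Line~\ref{line:reset}, or Line~\ref{line:change_T} fires. Hence none of these fired at $v$ in rounds $r-X,\dots,r$, and $\hat{F}_v$ advanced by exactly one per round throughout that window. Moreover, in each of these rounds at most $n/3$ of the memorized values $\hat{F}_{v,w}$ differ from $\hat{F}_v$. By the invariant, at least $n-f-n/3>n/3$ correct nodes $w$ therefore had $\hat{F}_w=\hat{F}_v$ at that time.

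For crusader agreement, take correct $v,w$ and a round $\hat r\in\{r,\dots,r+X\}$ with $F_{w,\hat r}\neq\bot$. I must show $F_{w,\hat r}=F_{v,r}+\hat r-r$. Both $v$ and $w$ incremented without reset over their respective windows of length $X$, and these windows overlap in round $r$ (or in round $\hat r$). In a common round $r'$ of the overlap, more than $n/3$ correct nodes agree with $v$ and more than $n/3$ correct nodes agree with $w$. I will use the lines-of-reset condition more sharply: at most $n/3$ of all nodes, hence at most $n/3$ correct nodes under the invariant, disagree with $v$, and likewise for $w$. If $\hat F_{v,r'}\neq\hat F_{w,r'}$, every correct node disagrees with at least one of them. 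This gives $n-f\le 2n/3$, contradicting $n>3f$. Thus $\hat F_{v,r'}=\hat F_{w,r'}$, and since both increment deterministically by one between $r'$ and their output rounds, the outputs differ exactly by the round offset.

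The main obstacle is timing: the messages and memorized values used in the Line~\ref{line:reset} check of round $r'$ refer to end-of-round-$(r'-1)$ states, while $\hat F$ changes within the round. I will choose $r'$ inside the overlap so that no resets occur at $v$ or $w$ in either $r'-1$ or $r'$, making the comparison consistent. This possibly requires $r\ge r_0+1$ and windows of length $X$, which is handled by the $+2$ slack in $X_v\in[X+2]$.
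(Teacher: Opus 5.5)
Your argument is correct and follows the paper's route. You obtain the tracking invariant from \Cref{lem:track} by some round $r_0\in O(f+1)$, and then use the $n/3$ threshold of Line~\ref{line:reset} plus the invariant to show that two correct nodes passing that check in the same round agree. The paper phrases this one-sidedly: every correct node disagreeing with $v$ is forced to reset in round $r$ and so outputs $\bot$ through round $r+X$. You phrase it as a symmetric pigeonhole.

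One small correction to your last paragraph: you need not, and cannot always, avoid resets in round $r'-1$. For $\hat r=r+X$ the no-reset windows $[r-X,r]$ and $[\hat r-X,\hat r]$ overlap only in round $r$. Simply take $r'=r$ and compare the end-of-round-$(r-1)$ values, which are exactly what the check in Line~\ref{line:reset} reads, so no extra slack from $X_v\in[X+2]$ is involved.
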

\begin{proof}
By \Cref{lem:track}, \Cref{alg:filter} satisfies the invariant that $\hat{F}_{w,v,r}=\hat{F}_{v,r}$ for all correct $v,w\in V$ and rounds $r\ge n$. If $f\ge |T|/2$, by the assumptions of the lemma it holds that $n\in O(f)$. On the other hand, if $f<|T|/2$, the variables $C_v$ count and \Cref{lem:track} states that the invariant holds from round $f+4$. Either way, there is $r_0\in O(f+1)$ such that the invariant holds in rounds $r\ge r_0$.

Now suppose that correct node $v$ outputs $F_{v,r}=c\neq \bot$ in round $r>r_0$. Thus, it did not set $X_v$ to $X+1$ in round $r$, entailing that\footnote{Observe that the check in Line~\ref{line:reset} is performed before the variables $\hat{F}_v$ and $\hat{F}_{v,w}$ are updated in round $r$, so that state is identical to that of the end of round $r-1\ge r_0$.} $|\{w\in V\,|\, \hat{F}_{v,w,r-1}\neq \hat{F}_{v,r-1}\}|\le \lceil n/3\rceil-1=t$ due to the condition in Line~\ref{line:reset}. Thus, at least $n-2t>t$ correct nodes $w$ satisfy that $\hat{F}_{w,r-1}=\hat{F}_{v,r-1}$. Therefore, any node $w$ with $\hat{F}_{w,r-1}\neq \hat{F}_{v,r-1}$ will satisfy the condition in Line~\ref{line:reset} and set $X_w$ to $X+1$ in round $r$. As it takes at least until round $r+X+1$ for $F_w$ to decrease to $0$, the output instruction ensures that $F_{w,r'}=\bot$ for all correct $w$ with $\hat{F}_{w,r-1}\neq F_{v,r-1}$ and rounds $r'\in [r,r+X]$.

On the other hand, any correct $w$ with $\hat{F}_{w,r-1}=F_{v,r-1}$ will increase $\hat{F}_w$ by $1$ modulo $C$ in each round $r'\in [r,r+X]$ or set $X_w$ to $X+1$. Again, this will ensure that in the subsequent $X$ rounds, $w$ will output $\bot$. Therefore, $F_{w,r'}\in \{F_{v,r}+r'-r\bmod C,\bot\}$ for all correct $w$ with $\hat{F}_{w,r-1}=F_{v,r-1}$.
\end{proof}

It remains to show that the amortized bit complexity is indeed small.
\begin{lemma}\label{lem:filter_complexity}
When executing \Cref{alg:filter}, correct nodes send messages of size $O(\log C)$. If $f<|T|/2\in \Omega(n)$ and the variables $C_v$, $v\in T$, count, the bit complexity is $O(n(f+1)\log C)$ amortized over $n$ rounds.
\end{lemma}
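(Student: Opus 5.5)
The message-size claim is immediate from \Cref{alg:filter}. In each round a correct node sends to any $w$ a constant number of messages: possibly one reply $\hat{F}_v$ triggered by $s_{v,w}=1$, and possibly ``req'' together with $\hat{F}_v$. Each of these is drawn from $[C]$ or is the special symbol ``req'', so by the encoding conventions of \Cref{sec:conventions} the message has size $O(\log C)$. For the amortized bound, I would observe that every reply is triggered by a ``req'' received in the previous round. Hence replies sent by correct nodes are at most the number of requests they received, from correct or faulty senders. Requests from faulty nodes cost at most $f$ replies per correct node per round, i.e., $O(nf\log C)$ bits per round. It therefore suffices to bound the number of (req, value) pairs sent by correct nodes, amortized over $n$ rounds, by $O(n(f+1))$ per round.

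I would classify the trigger conditions in the send condition of \Cref{alg:filter}. The first trigger, $w=N_v$, is round-robin and gives one message per node per round, $O(n)$ in total. The second, $w=T_v \bmod |T|$, likewise gives one message per node per round. The third trigger is $\hat{F}_{v,w}\neq \hat{F}_v$. By \Cref{lem:track}, from round $f+4$ on the invariant $\hat{F}_{v,w}=\hat{F}_w$ holds for correct $v,w$. By the argument in \Cref{lem:filter_validity}, all correct nodes hold the correct count $\hat{F}_v$ from round $f+4$ on. Hence this trigger fires only toward faulty $w$, giving $O(nf)$ messages per round. The fourth trigger, $a_v=1$, causes $n$ messages. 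After stabilization, $a_v$ can be set only through Line~\ref{line:check}, because Lines~\ref{line:change} and~\ref{line:change_T} and the condition in Line~\ref{line:reset} no longer fire: once all correct $\hat{F}$ agree and the views are accurate, at most $f\le t$ memorized values differ. Line~\ref{line:check} fires only when the node $w=T_v-2$ holds a differing memorized value, which after stabilization requires $w$ to be faulty. Since $T_v$ cycles through $T$, this happens at most $f$ times per $|T|$ rounds per node. That amounts to $O(n\cdot n f/|T|)=O(nf)$ messages per round amortized, using $|T|\in\Omega(n)$.

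Finally I would handle the stabilization window, which I expect to be the main obstacle to state carefully. During the first $f+O(1)$ rounds (until \Cref{lem:track} and \Cref{lem:filter_validity} kick in) each correct node may send to everyone, costing $O(n^2\log C)$ bits per round. Over $O(f+1)$ rounds this sums to $O(n^2(f+1)\log C)$ bits. Amortized over $n$ rounds, i.e., taking the averaging window to start at round $1$ with $R\ge n$, this is $O(n(f+1)\log C)$ per round. One subtlety is that the amortization must be read with the convention of the model section: either take the window long enough, or note that the definition only asks for some $R$. I would pick $R=n$ when $n\ge f+O(1)$, and otherwise absorb the term directly since then $n\in O(f)$. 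Summing all contributions and multiplying by the $O(\log C)$ message size yields the claimed $O(n(f+1)\log C)$.
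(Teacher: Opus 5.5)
Your proposal is correct and takes essentially the same approach as the paper. You obtain agreement and accurate views via \Cref{lem:track} and the argument of \Cref{lem:filter_validity}, then rule out every trigger for $a_v=1$ except Line~\ref{line:check}, which fires only for faulty $w$ and hence $O(f)$ times per node per $n$ rounds since $|T|\in\Omega(n)$. As in the paper, you charge replies to the preceding requests and bound messages toward faulty nodes by $nf$ per round. Your explicit amortization of the $O(n^2(f+1)\log C)$ bits sent during the initial $O(f+1)$ rounds fills in a step the paper leaves implicit, since it only sums over rounds $r>r_0$.
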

\begin{proof}
The bound on the message size can be readily verified from the code. If $f<|T|/2$ and the variables $C_v$ count, by \Cref{lem:filter_agreement}, the output variables $F_v$, $v\in V$, count from round $r_0\in O(f+1)$. Thus, the same holds for the variables $\hat{F}_v$, $v\in V$, and by \Cref{lem:track}, we have that $\hat{F}_{v,w,r}=\hat{F}_{v,r}$ for correct $v,w\in V$. Let us check the conditions under which node $v$ sets $a_v$ to $1$ in such a round $r$:
\begin{enumerate}
  \item $\exists c\in [C]\setminus \hat{F}_v \colon |\{w\in T\,|\, \hat{F}_{v,w}=c\}|>|T|/2$. As $f<|T|/2$ and all variables of correct nodes agree, this cannot happen.
  \item $|\{w\in V\,|\, \hat{F}_{v,w}\neq \hat{F}_v\}|>n/3$. As $f\le t<n/3$ and all variables of correct nodes agree, this cannot happen.
  \item $T\ni w=T_{v,r}-2\bmod |T|$ and $\hat{F}_{v,w}\neq \hat{F}_v$. As correct nodes' variables agree, this is possible only if $w$ is faulty. As $f<|T|/2\in \Omega(n)$ and $T_v$ counts modulo $|T|$, this occurs at a rate of $O(f/n)$ amortized over $n$ rounds.
  \item $v\in T$ and $C_v\neq \hat{F}_v+1\bmod C$. As the variables $C_v$ count and match $\hat{F}_v$ for correct $v\in T$, this cannot happen.
\end{enumerate}
Now let us sum up all messages sent by correct nodes in rounds $r>r_0$. Consider the possible causes for a correct node $v$ to send a message to correct node $w$.
\begin{itemize}
  \item $a_v=1$. As we have observed, the amortized rate over $n$ rounds at which this happens is $O(f)$ when summing over all nodes. In a round in which $a_v$ is $1$, $v$ sends $O(n \log C)$ bits of communication. Overall, this contributes an amortized bit complexity of $O(nf\log C)$.
  \item $\hat{F}_{v,w}\neq \hat{F}_v$. Since correct nodes' variables agree, this does not occur.
  \item $w=N_v\bmod n$ or $T\ni w=T_v\bmod |T|$. Since these are two nodes only, the corresponding messages contribute a total of $O(n\log C)$ to the bit complexity.
  \item $s_{v,w}=1$. As $s_{v,w}$ is set to $0$ again after the triggered message is sent, it must have set to $1$ in round $r-1$. Thus, any such message can be attributed to one of the preceding causes without affecting the asymptotic amortized bit complexity.
\end{itemize}
Finally, note that, trivially, correct nodes send at most $nf$ messages to faulty nodes in each round. Summing up all terms, we arrive at the stated bound of $O(n(f+1)\log C)$ on the amortized bit complexity.
\end{proof}
It is worth noting that, if $f$ is small, \Cref{lem:filter_complexity} only controls the bit complexity once the variables $C_v$ count. Hence, we need to take into account the stabilization time of the utilized counting algorithm when amortizing the bit complexity, resulting in the following corollary.
\begin{corollary}\label{cor:filtering}
Assume that $|T|\in \Omega(n)$. Then \Cref{alg:filter} solves $(C,X,T)$-filtering with stabilization time $O(f+1)$ and messages of size $O(\log C)$. If we assume that $f<|T|/2$ implies that the variables $C_v$ start counting within $k$ rounds, its bit complexity amortized over $kn/(f+1)$ rounds is $O(n(f+1)\log C)$.
\end{corollary}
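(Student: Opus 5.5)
The corollary follows by assembling \Cref{lem:filter_validity,lem:filter_agreement,lem:filter_complexity}. There are three things to check: stabilization time, message size, and amortized bit complexity.

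\textbf{Stabilization time.} I would split into two cases according to whether $f<|T|/2$.

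If $f\ge |T|/2$, then validity holds vacuously. Since $|T|\in\Omega(n)$, we also have $n\in O(f)$, so \Cref{lem:filter_agreement} applies directly and gives $X$-round crusader agreement from some round $r_0\in O(f+1)$.

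If $f<|T|/2$, validity holds from round $f+O(1)$ by \Cref{lem:filter_validity}, whose proof only needs the counting guarantee when $C_v$ actually count. For crusader agreement, I would not rely on $C_v$ counting. Instead I would fall back on the unconditional part of \Cref{lem:track}: the invariant holds from round $n$ regardless. The argument of \Cref{lem:filter_agreement} then gives agreement from round $n+1$.

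This leaves a gap: $n$ may exceed $O(f+1)$. The most delicate point is reconciling this with the claimed stabilization time. Two readings resolve it:
\begin{itemize}
\item The stabilization time is to be understood relative to the point at which the clock set's counters count. In our usage, $C_v$ do count within the stated $k$ rounds whenever $f<|T|/2$, so \Cref{lem:filter_agreement} applies from round $k+O(f+1)$.
\item Alternatively, one observes that the proof of \Cref{lem:filter_agreement} only needs the invariant, and uses \Cref{lem:track} in whichever form applies.
\end{itemize}
I would state explicitly that the time $O(f+1)$ is counted from when the premise of \Cref{lem:filter_agreement} holds. This matches how $S^b_F$ enters \Cref{thm:stab}, where it is added to $S^b_C$.

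\textbf{Message size.} This is immediate from \Cref{lem:filter_complexity}.

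\textbf{Amortized bit complexity.} This is the only real computation. I would take a window of $R:=kn/(f+1)$ rounds and split it into two phases.

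\emph{Initial phase.} By assumption, in the first $k+O(f+1)$ rounds the counters $C_v$ start counting and the invariant of \Cref{lem:track} gets established. Since each message has size $O(\log C)$, correct nodes send at most $O(n^2\log C)$ bits per round. Over the whole initial phase this totals $O((k+f+1)n^2\log C)$ bits.

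\emph{Remaining rounds.} From then on, \Cref{lem:filter_complexity} gives an amortized rate of $O(n(f+1)\log C)$ per round. This is amortized over blocks of $n$ rounds; a partial final block costs at most $O(n^2\log C)$ extra, which is absorbed.

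\emph{Combining.} Dividing by $R$, the initial phase contributes
\[
\frac{O(kn^2\log C)}{kn/(f+1)}=O(n(f+1)\log C).
\]
The $O((f+1)n^2\log C)$ portion also divides to at most this order, since $k\ge 1$ and we may take $k\ge f+1$ without loss of generality. The two contributions sum to the claimed $O(n(f+1)\log C)$.

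The only care needed is to check that the amortization windows of \Cref{lem:filter_complexity} (length $n$) fit inside $R$. If $R<n$, I would take $R$ as a multiple of $n$; this changes the bound only by constants.
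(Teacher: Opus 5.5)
Your proposal follows the paper's proof. It assembles \Cref{lem:filter_validity,lem:filter_agreement,lem:filter_complexity} and, for the bit complexity, charges the $O(kn^2\log C)$ bits sent before the $C_v$ count against a window of $kn/(f+1)$ rounds, which is exactly the paper's one-line amortization remark. The caveat you raise is genuine, and the paper leaves it implicit in the hypothesis of \Cref{lem:filter_agreement}: for $f<|T|/2$ the lemmas only give crusader agreement $O(f+1)$ rounds after the $C_v$ start counting, and unconditionally only from round $n$. So your first reading is the right one, while your second reading does not by itself yield $O(f+1)$.
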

\begin{proof}
Follows from \Cref{lem:filter_validity,lem:filter_agreement,lem:filter_complexity}, noting that if the variables $C_v$ start counting only after $k$ rounds, we need to amortize the resulting $O(kn^2\log C)$ bits of communication over at least $kn/(f+1)$ rounds to not affect the asymptotic amortized bit complexity.
\end{proof}
We reach the following intermediate result, which meets our goals up to the additive $O(\log n)$ overhead in stabilization time.
\begin{theorem}
$C$-counting can be solved with stabilization time $O(f+1+\log n)$ and message size $O(\log C + \log^2 n)$. Moreover, the bit complexity amortized over $n\log n$ rounds is $O(n((\log n+\log C)f+\log^2 n))$.
\end{theorem}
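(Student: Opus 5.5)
We instantiate \Cref{alg:recursion} recursively with a balanced partition, exactly as in the proof of \Cref{cor:stab}. The only changes are that filtering uses \Cref{alg:filter} and king consensus uses the algorithm of \Cref{thm:king}; both run for a constant number of rounds (\Cref{thm:king} gives $R=8$) and use $O(\log n)$- or $O(\log C)$-sized messages.

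\textbf{Stabilization time.} Since $|V_b|\ge \lfloor |V|/2\rfloor\in\Omega(|V|)$, \Cref{cor:filtering} applies at every level and gives filtering stabilization time $S^b_F\in O(f_b+1)$, where $f_b$ is the number of faults in the subsystem under consideration. We plug this into the second statement of \Cref{thm:stab}. At each recursion level, $V_b$ is chosen to be the part with $f_b\le f/2$ faults; it automatically satisfies $f_b<|V_b|/3$ whenever the parent satisfies $f<|V|/3$. The parent then stabilizes within $S^b_C+O(f+1)$. Unrolling along this root-leaf path yields $\sum_i O(2^{-i}f+1)=O(f+1+\log n)$. The message-size bound is argued as in \Cref{cor:stab}: each node takes part in $O(\log n)$ levels, each contributing messages of size $O(\log n)$, plus one top-level instance with messages of size $O(\log C)$.

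\textbf{Bit complexity.} We argue by induction over the recursion tree, amortizing over $n\log n$ rounds. Consider one level on node set $V'$ of size $n'$ with $f'$ faults. If $f'\ge |V'_b|/3$ for some $b$, then $f'\in\Omega(n')$. In that case the trivial per-round bound of $O(n'^2\log n')$ bits is already $O(n'f'\log n')$, so this level is within budget. Otherwise, both sub-counters are correct and stabilize within $k=O(f'+\log n')$ rounds. By \Cref{cor:filtering}, filtering then costs $O(n'(f'+1)\log n')$ amortized over $kn'/(f'+1)\le n'\log n'+n'$ rounds; we have to check that this window fits within the $n\log n$ budget.

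For king consensus, after both filters count, all correct nodes agree on when each instance starts and on its leader. Only $O(f'+1)$ instances per $n'$ rounds have faulty leaders, and each costs $O(n'^2\log)$ bits, giving $O(n'(f'+1)\log)$ amortized. Instances with no leader, or with a correct leader, receive identical inputs once $C_v$ counts; by \Cref{thm:king} each costs $O(n'(f'+1)\log)$ bits, and a constant number start per round. The pre-stabilization phase costs $O(k n'^2\log)$ bits, which we amortize over $n\log n$ rounds.

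Summing over levels, the subsets at depth $i$ partition $V$, so $\sum n'(f'+1)\le n(f+2^i)$. The $2^i$ term is where the $\log^2 n$ comes from: each level contributes $O(n\log n)$ baseline, for a total of $O(n\log^2 n)$, and the fault terms add $O(nf(\log n+\log C))$ across the $\log n$ levels (the $\log C$ appears only at the top level).

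\textbf{Main obstacle.} The delicate step is the amortization: showing that the startup cost $O(kn'^2)$ bits, with $k=O(f+\log n)$, divided by $n\log n$ rounds stays within $O(n(f\log+\log^2 n))$. This works because $kn'^2/(n\log n)\le n(f+\log n)/\log n$ for $n'\le n$, but the argument needs care about nested levels whose windows start at different times. A second delicate point is the top level, which uses $\log C$ rather than $\log n$.
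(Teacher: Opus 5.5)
Your route is the paper's: the same instantiation, the same stabilization argument via \Cref{thm:stab} and \Cref{cor:filtering} along the child with fewer faults, and the same charging of each recursion-tree node $V'$ with $O(|V'|(f'+1)\log n)$ bits per round. That bound is trivial if $f'\in\Omega(|V'|)$, and otherwise comes from \Cref{cor:filtering}, \Cref{thm:king}, $O(f')$ faulty-leader instances per $|V'|$ rounds, and spreading the startup phase over $n\log n$ rounds.

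The step that fails as written is the final summation. The inequality $\sum_{V'}n'(f'+1)\le n(f+2^i)$ is true but too weak. Taken at face value, its $n2^i$ term only bounds depth $i$ by $O(n2^i\log n)$, i.e., $O(n^2\log n)$ overall. Its $nf$ term does not decay with $i$, so summing over the $\Theta(\log n)$ depths gives only $O(nf\log^2 n)$ instead of $O(nf\log n)$. What you need is the observation the paper uses: the sets at depth $i$ are disjoint and each has size at most $(2/3)^i n$. Hence $\sum_{V'}n'\le n$ and $\sum_{V'}n'f'\le (2/3)^i nf$, so depth $i$ contributes $O(((2/3)^i nf+n)\log n)$. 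It is exactly the geometric decay of the fault term that gives $O(nf\log n+n\log^2 n)$ in total. Your startup estimate has the same defect: bounding $kn'^2/(n\log n)$ by $n(f+\log n)/\log n$ throws away the factor $n'$ that makes the per-node bounds summable over the up to $2^i$ sets at depth $i$. Keep it, i.e., $kn'^2\log n/(n\log n)\le kn'\in O(n'(f'+\log n))$. Once you do, the ``nested windows'' obstacle disappears. All startup phases begin in round $1$, so you simply bound each tree node's total cost over the first $n\log n$ rounds and divide by $n\log n$, as the paper does.

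Two further points affect the paper's proof as much as yours. First, the top-level issue you flag cannot be resolved. Once the filters count, every $k_b\in O(1)$ rounds a top-level king consensus instance has a leader. In the protocols of \Cref{lem:graded_king,thm:king}, every node sends its $O(\log C)$-bit input to that leader, which replies to all nodes. Hence, even for $f=0$, correct nodes send $\Theta(n\log C)$ bits per round. The argument therefore establishes $O(n((\log n+\log C)(f+1)+\log^2 n))$, which matches the stated bound only if $\log C\in O(\log^2 n)$.

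Second, you (like the paper) use \Cref{cor:filtering} as an unconditional $O(f+1)$ bound on $S^{1-b}_F$ in \Cref{thm:stab}. However, \Cref{lem:filter_agreement} only yields crusader agreement that early if the inputs from the clock set already count or $f\ge |T|/2$. Otherwise, \Cref{lem:track} only gives $S^{1-b}_F\le \min\{n,S^{1-b}_C+f+O(1)\}$. The filter of \Cref{lem:filtering} does not have this issue, which is why \Cref{cor:stab} is unaffected. Consequently, when both halves contain fewer than a third faulty nodes, the recurrence involves the stabilization times of \emph{both} children. Unrolling it naively gives only $O((f+1)\log n)$, so getting $O(f+1+\log n)$ needs an additional argument.
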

\begin{proof}
We recursively apply \Cref{alg:recursion} using \Cref{alg:filter} for filtering and the king consensus algorithm from \Cref{thm:king}, where we partition the node set as evenly as possible in each recursion step and the base case of $n=1$ is trivial. Due to the even partition in each recursive step, we can apply \Cref{cor:filtering} and reason analogously to \Cref{cor:stab} to infer that the resulting algorithm solves $C$-counting with stabilization time $O(f+1+\log n)$.
The bound on message size is immediate from the fact that the recursion has depth $O(\log n)$ and the bounds on message size from \Cref{thm:king,cor:filtering}.

It remains to bound the amortized bit complexity. To this end, consider the recursion tree, where we label each recursive instance of \Cref{alg:recursion} by the node set on which it runs. We charge each node with (bounds on) the amortized communication complexity of its filtering and king consensus instances, but not its recursive calls. We will then sum over all nodes to bound the total amortized communication complexity.

Note that the leaves of the tree correspond to trivial instances on single nodes, which do not contribute to communication costs. For an inner node of the tree corresponding to an instance of \Cref{alg:recursion} on node set $V'$ with $f'<|V'|/3$ faulty nodes, by the already estabilished bound on the stabilization time we have the following.
\begin{itemize}
  \item We can apply \Cref{cor:filtering} with $k=O(f'+1+\log |V'|)=O(f'+1+\log n)$ to bound the contribution of filtering to the amortized bit complexity over $n\log n$ rounds by $O(|V'|(f'+1)\log |V'|)=O(|V'|(f'+1)\log n)$.
  \item If $f'<\lfloor |V'|\rfloor/9$, by \Cref{lem:filter_good} the variables $F_v^b$ of the instance of \Cref{alg:recursion} on $V'$ start to count within $O(f'+1+\log n)$ rounds for both $b\in \{0,1\}$. From then on, over $\Theta(n)$ rounds there are only $O(f')$ instances of king consensus to which correct nodes input a faulty leader $\ell$. Therefore, \Cref{thm:king} states that the total bit complexity of king consensus instances on node set $V'$ within $\Theta(n)$ rounds is $O(|V'|(f'+1)\log |V'|)$ (except for the initial call on node set $V$ corresponding to the root of the tree, where message size is $O(\log C)$). Amortizing over $n\log n$ rounds, the initial cost of $O(n^2(f'+1+\log n))$ messages before the variables $F_v^b$ start counting does not increase the amortized bit complexity beyond $O(|V'|(f'+1)\log |V'|)=O(|V'|(f'+1)\log n)$ (or $O(n(f+1)\log C)$ for the root of the tree).
\end{itemize}
Observe that if $f'\ge |V'|/3$, no stabilization is guaranteed. However, in this case the bounds on message size still apply and $f'=\Omega(|V'|)$. Therefore, in this case the contribution of the node to the amortized bit complexity is $O(|V'|^2\log n)=O(|V'|(f'+1)\log n)$. In summary, this bound applies uniformly to all nodes of the recursion tree.

Now consider all nodes in depth $d$ of the recursion tree. As \Cref{alg:recursion} is always applied with $V_0$ and $V_1$ partitioning the node set, the union of all nodes in calls in depth $d$ is a subset of $V$. Moreover, because each recursive call partitions as evenly as possible, $|V'|\le (2/3)^d n$. Hence, summing over all nodes in depth $d$, we get a contribution of $O((f(2/3)^d n+n)\log n)$. Summing over all $O(\log n)$ values of $d$ for which the tree contains nodes in depth $d$, we arrive at a total bit complexity amortized over $n\log n$ rounds of $O(n((\log n+\log C)f+\log^2 n))$.
\end{proof}

%% file: opt_stab.tex
\section{Asymptotically Optimal Stabilization Time}\label{sec:opt_stab}
The additive overhead of $O(\log n)$ in the stabilization time is owed to the recursion depth of $O(\log n)$. Even if there are no faults, the stabilization of the counts needs to propagate from a leaf of the recursion tree to the root, costing $O(1)$ rounds for each level of the recursion. Hence, to reach stabilization time $O(f+1)$, we need a way to stabilize within $O(1)$ rounds if there are no faults. We achieve this by means of a different recursive template, which has one branch ensure this property, while the other simply ``recurses'' on the entire node set and applies \Cref{alg:recursion}. The challenge is that we cannot amortize the cost incurred by faulty leaders as in \Cref{sec:communication}, as the single node providing the ``recursively'' constructed clock would control which nodes act as leaders. This is problematic if $f$ is small, as we then do not have the budget to cover repeated calls to king consensus with $\Omega(n^2)$ cost.

\subsection{Weak King Consensus}

As our way out, we relax the king consensus task such that king agreement is only required when \emph{all} nodes are correct. For our use case of $f=0$, this relaxation is uncritical.

\begin{definition}[Weak King Consensus]
The \emph{Weak King Consensus} problem is specified as follows.
Each node $v$ has inputs $x_v\in \mathcal{V}$ and $\ell_v\in V\,\dot{\cup}\,\{\bot\}$.
Each node $v$ computes an output $y_v\in \mathcal{V}\,\dot{\cup}\,\{\bot\}$ with the following guarantees:
\begin{itemize}
 \item \textbf{Validity}: If there is $x\in \mathcal{V}$ so that $x_v=x$ for all correct $v$, then $y_v\in \{x,\bot\}$ for all correct $v$.
 \item \textbf{Default}: If $\ell_v=\bot$ for all correct $v$, then $y_v=\bot$ for all correct $v$.
 \item \textbf{Weak King Agreement}: If $f=0$ and there is $\ell \in V$ so that $\ell_v=\ell$ for all $v$, then $y_v=y_{\ell}\neq \bot$ for all $v$.
\end{itemize}
\end{definition}
To obtain a cheap implementation of weak king consensus, we make use of edge expanders. An edge expander guarantees that all node sets of size at most $n/2$ have an $\varepsilon$-fraction of their incident edges leave the set.
\begin{definition}[Edge Expander]
The \emph{edge expansion} of a graph $G=(V,E)$ is defined as
\begin{equation*}
h(G):=\min_{0<|S|\le n/2, S\subset V}\left\{\frac{|E\cap (S\times V\setminus S)|}{|S|}\right\}.
\end{equation*}
\end{definition}
\begin{theorem}[\cite{gabber79explicit}]
There is a family of constant-degree graphs $G_n=(\{1,\ldots,n\},E_n)$ of constant expansion, i.e., $h(G_n)\ge \varepsilon>0$ for $\varepsilon\in \Omega(1)$, such that the neighbors of any given node can be computed efficiently, i.e., with $\log^{O(1)} n$ computational steps.
\end{theorem}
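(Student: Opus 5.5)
The plan is to make the construction of~\cite{gabber79explicit} explicit and then adapt it to arbitrary $n$. First I would fix $m\in\NN$ and define the Margulis--Gabber--Galil graph $H_m$ on $\ZZ_m\times\ZZ_m$. Each $(x,y)$ is joined to $(x\pm y,y)$, $(x\pm(y+1),y)$, $(x,y\pm x)$ and $(x,y\pm(x+1))$, all modulo $m$. This graph is $8$-regular, possibly with loops and multi-edges. Neighbor computation needs only $O(1)$ additions modulo $m$, i.e.\ $\log^{O(1)} n$ steps. The core is a uniform spectral gap: the second largest eigenvalue $\lambda_2(H_m)$ of the adjacency operator should satisfy $\lambda_2\le 8-\delta$ for a constant $\delta>0$ independent of $m$.

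To prove the gap, I would restrict the adjacency operator to functions orthogonal to constants and pass to the Fourier transform on $\ZZ_m^2$. There the affine maps become the dual linear maps $T_1(a,b)=(a,b-a)$ and $T_2(a,b)=(a-b,b)$, and the shifts become phase factors. This reduces the claim to the Gabber--Galil inequality: for $g\colon \ZZ_m^2\setminus\{0\}\to\mathbb{C}$,
\[
\Bigl|\sum_{z} \overline{g(z)}\bigl(g(T_1z)+g(T_2z)\bigr)\Bigr|\le (2-\delta')\,\|g\|^2 .
\]
I would prove it by the standard weighting argument. Define a partial order or potential $\phi(a,b)$ on representatives in $(-m/2,m/2]^2$, such as $|a|+|b|$. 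Then show that for all but a small diagonal region, $T_1,T_2$ and their inverses increase $\phi$ for at least three of the four images, by a definite factor. Applying $2|g(z)g(Tz)|\le \gamma|g(z)|^2+\gamma^{-1}|g(Tz)|^2$, with $\gamma$ chosen according to whether $\phi$ grows, then gives the bound. I expect this step to be the main obstacle. The delicate part is controlling the region near the axes and diagonals and the wrap-around modulo $m$ while keeping $\delta'$ independent of $m$. The plan is to follow Gabber and Galil's case analysis over the quadrants. If that becomes unwieldy, the fallback is the cleaner Jimbo--Maruoka continuous-torus argument, transferred to the discrete setting.

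Given the spectral gap, the discrete Cheeger inequality $h(H_m)\ge \tfrac12(d-\lambda_2)$ yields $h(H_m)\ge\delta/2$. Finally I would handle general $n$. Let $m:=\lfloor\sqrt n\rfloor$, so that $m^2\le n<m^2+2m+1$, and identify $\{1,\ldots,m^2\}$ with $\ZZ_m^2$. Each of the at most $2m$ remaining nodes is attached to a distinct node of the core, together with a bounded number of extra edges to keep the degree constant, for instance by assigning node $m^2+j$ to core node $j$. The edges of any cut $S$ with $|S|\le n/2$ are then counted through the core. If $S$ captures at least a constant fraction of its mass in the core, and that core part is at most roughly half the core, expansion of $H_m$ applies. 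Otherwise the attachment edges themselves leave $S$. Since $2m=o(m^2)$, both cases give $h(G_n)\ge\varepsilon$ for a smaller constant $\varepsilon>0$, while the degree stays constant and neighbor computation stays efficient.
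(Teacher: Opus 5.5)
The paper does not prove this statement; it only cites Gabber and Galil, so you are reconstructing the cited result. Two of your steps are fine: the easy Cheeger bound $h\ge(d-\lambda_2)/2$, and padding $\ZZ_m^2$ with the at most $2m$ leftover nodes. The central reduction, however, fails. The inequality you reduce to,
\[
\Bigl|\sum_{z\neq 0}\overline{g(z)}\bigl(g(T_1z)+g(T_2z)\bigr)\Bigr|\le(2-\delta')\|g\|^2 \quad\text{for } g\colon\ZZ_m^2\setminus\{0\}\to\mathbb{C},
\]
is false for every $m\ge 2$ and every $\delta'>0$. Since $T_1,T_2$ are invertible linear maps of $\ZZ_m^2$, they permute $\ZZ_m^2\setminus\{0\}$, and $g\equiv 1$ gives exactly $2\|g\|^2$. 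This is not a wrap-around nuisance that a careful case analysis can control. To reach this inequality you bounded the phase factors $1+\omega^{(\cdot)}$ coming from the shifted maps by $2$, which discards the $+1$ shifts. Without the shifts, $(0,0)$ is fixed by every map and is therefore an isolated vertex; indeed $g\equiv 1$ is, up to scaling, the Fourier transform of $\mathbf{1}_{\{(0,0)\}}-m^{-2}$. For the same reason, no potential $\phi$ can make your weighting argument work on the finite group. All orbits of $T_1,T_2$ are finite, and the region where $T_1z$ or $T_2z$ wraps around carries a constant fraction of the mass, for example of constant $g$.

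The inequality does hold on $\ZZ^2\setminus\{0\}$, the dual of the continuous torus $\mathbb{R}^2/\ZZ^2$. That is what Gabber and Galil prove, and the shifts must then be used explicitly to get down to $\ZZ_m^2$. There are two ways to do this.
\begin{itemize}
\item \emph{Gabber and Galil's route.} Deduce expansion of the measure-preserving shears of the continuous torus. Then identify $(x,y)\in\ZZ_m^2$ with the square $[x/m,(x+1)/m)\times[y/m,(y+1)/m)$, and observe that the sheared image of this square is covered by the squares of $(x,x+y)$ and $(x,x+y+1)$ (resp.\ $(x+y,y)$ and $(x+y+1,y)$). This transfers vertex expansion directly, with no Cheeger step.
\item \emph{Direct discrete route.} Keep the factors $|1+\omega^{a}|=2|\cos(\pi a/m)|$, which are at most $\sqrt2$ once $|a|\ge m/4$, and let them absorb the bad cases. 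This is what the direct discrete proof in the survey of Hoory, Linial and Wigderson does.
\end{itemize}
That survey proof is for the variant with $x\pm 2y$, $x\pm(2y+1)$, and so on, and this matters for your weighting step too. For your factor-$1$ maps, even on $\ZZ^2$, it is not true that three of the four images increase $|a|+|b|$ outside a thin diagonal region. On the whole cone $0<a<b<2a$, only $(a,a+b)$ and $(a+b,b)$ do, and there the $\gamma$-weighting gives coefficient $\gamma+\gamma^{-1}\ge 2$, so no gain. With this potential you would therefore have to switch to the doubled maps, or carry out Gabber and Galil's own, more delicate analysis. Your Jimbo--Maruoka fallback mentions a continuous-to-discrete transfer, but it does not recognize that this transfer is exactly where the shifts are indispensable.
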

Using a constant-degree expander, checking whether expander neighbors have differing values will not cause more than $O(f)$ nodes to observe a difference after stabilization. On the other hand, if there are no faults, the expander ensures that the number of nodes seeing a difference in case $f=0$ is proportional to the number of nodes with a minority value. This provides proof that more communication is justified---either, because stabilization has not occurred yet or because there are many faults. This is sufficient to convince all nodes with minority value to query all nodes for their values, which then proves to them that they can safely listen to the leader and adopt the majority value it proposes. Crucially, all of this only needs to work if $f=0$; only validity must be maintained under all circumstances.
\begin{theorem}\label{thm:weak_king}
Weak king consensus can be solved in $6$ rounds with messages of size $O(\log |\mathcal{V}|)$. If there is $x\in \mathcal{V}$ such that $x_v=x$ for all correct nodes $v$, then the bit complexity is $O(n(f+1)\log |\mathcal{V}|)$.
\end{theorem}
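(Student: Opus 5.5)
I would prove \Cref{thm:weak_king} by giving an explicit six-round algorithm that follows the four-step outline from \Cref{sec:overview}, using the expander family $G_n$ with expansion $\varepsilon$ and constant degree $\Delta$. Let $N(v)$ denote the neighbors of $v$ in $G_n$. The protocol is as follows.

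\textbf{Round 1.} Each $v$ sends $x_v$ to all $w\in N(v)$. Call $v$ \emph{alarmed} if it receives a value different from $x_v$ from some neighbor.

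\textbf{Round 2.} Each alarmed $v$ sends ``alert'' to all nodes. Let $k_v$ be the number of alerts $v$ receives.

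\textbf{Round 3.} Each $v$ sends ``req'' to the nodes $v,v+1,\ldots,v+\lceil 2k_v/\varepsilon\rceil\bmod n$.

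\textbf{Round 4.} Each node answers every request with its value. A node $v$ for which at least half of the answers differ from $x_v$ sends ``req'' to all nodes.

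\textbf{Round 5.} Each node answers these requests with its value. A node $v$ that receives more than $t$ values different from $x_v$ marks itself as \emph{unlocked}.

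\textbf{Round 6.} If $\ell_v=v$, node $v$ sends to all nodes the majority value among the values it received in Round 5, falling back to $x_v$.

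\textbf{Output.} If $\ell_v=\bot$, $v$ outputs $\bot$. If $v$ is unlocked and received $z$ from $\ell_v$, it outputs $z$. Otherwise it outputs $x_v$.

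The $\ell$-dependence in Rounds 5 and 6 is a detail; the leader can simply query all nodes itself in Round 4 to learn the majority.

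\textbf{Validity and Default.} These are immediate. Default follows from the output rule. For validity, if all correct inputs equal $x$, then at most $t$ received values can differ from $x$. Hence no correct node becomes unlocked, and every correct node outputs $x$ or $\bot$.

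\textbf{Weak King Agreement.} This is the first real step. Assume $f=0$ and let $x^*$ be the plurality value. Let $M$ be the set of nodes holding values other than $x^*$, with $k:=|M|$. The key claim is that every $v\in M$ ends up unlocked. The argument runs in four parts.
\begin{enumerate}
\item Bound the number of alerts. If $k\le n/2$, expansion gives at least $\varepsilon k$ cut edges, so at least $\varepsilon k/\Delta$ alarmed nodes. Each $v$ therefore receives $k_v\gtrsim \varepsilon k/\Delta$ alerts. I would rescale the query range by $\Delta$, i.e., use $\lceil 2\Delta k_v/\varepsilon\rceil$, or else count alarmed endpoints more carefully.
\item Handle $k>n/2$ separately. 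The case with no absolute majority needs its own treatment; I would handle it by counting pairs of differing neighbors as alarms.
\item Show that a node in $M$ sees enough disagreement in Round 4. A node in $M$ queries a window containing at least $2k$ nodes; the bound needed is that half of its responses differ from its own value.
\end{enumerate}
The third part is the main obstacle. An arbitrary contiguous window need not contain many $x^*$-holders, since $M$ could be concentrated right there. I would fix this by replacing the contiguous window with a query set chosen so that the structure of the instance cannot align against it. One option is to make the query window size proportional to $n$ whenever $k$ is large, namely $\min\{n,\lceil 2\Delta k_v/\varepsilon\rceil\}$, and argue that the differing fraction across windows averages correctly. Alternatively, I would relax the Round-4 threshold to ``some differing response.''
\begin{enumerate}
\setcounter{enumi}{3}
\item Conclude. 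Once every node in $M$ queries all nodes, it sees $n-k\ge n/2>t$ differing values and becomes unlocked. It then adopts $\ell$'s proposal $x^*$, while nodes holding $x^*$ keep $x^*$, so all outputs agree.
\end{enumerate}

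\textbf{Complexity.} When all correct inputs equal $x$, only nodes with a faulty expander neighbor become alarmed, which is at most $\Delta f$ nodes; faulty alerts add at most $f$ more. Hence $k_v=O(f)$ for every correct node, and Round 3 costs $O(nf)$ messages in total. A correct node that sends full queries in Round 4 must have seen half of its $\Omega(k_v/\varepsilon)$ window respond with differing values, which requires $\Omega(k_v)$ faulty responders. I would charge each such node to the faulty nodes in its window, giving only $O(f)$ full queriers, i.e., $O(nf)$ messages. The leader's broadcast costs $O(n)$ messages. With message size $O(\log|\mathcal{V}|)$, the total is $O(n(f+1)\log|\mathcal{V}|)$ bits.
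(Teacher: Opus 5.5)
Your protocol skeleton (expander alerts, windows proportional to the alert count, full queries, the $t+1$ unlock threshold, a leader proposal) matches the paper's, and validity and default are fine. However, the proof of weak king agreement is incomplete: you flag step~3 as the main obstacle and leave it unresolved, together with the case $k>n/2$. Neither suggested fix works. Averaging gives no guarantee for each individual $v\in M$, which is what agreement needs. Lowering the threshold to a single differing response breaks the communication bound. Faulty nodes can make $k_v\ge f$ everywhere by alerting all nodes, so every window has size $\Theta(f)$. Then $f$ evenly spaced faulty nodes put a faulty responder into the windows of $\min\{n,\Theta(f^2)\}$ correct nodes, each of which queries everyone.

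The missing idea, which is how the paper argues, is to apply expansion not to $M$ but to $V_{x_v}:=\{w\in V\mid x_w=x_v\}$ for the node $v$ at hand. Let $z$ be the most frequent value (the one the leader proposes), and suppose $x_v\neq z$. If $v$ does not query everyone, more than half of its window holds $x_v$, so with your $\Delta$-rescaled window $|V_{x_v}|>\Delta k_v/\varepsilon$. Since $z$ is at least as frequent as $x_v$, $|V_{x_v}|\le n/2$. Expansion then yields at least $\varepsilon|V_{x_v}|/\Delta>k_v$ nodes of $V_{x_v}$ with a differently-valued neighbor. All of them alert everyone when $f=0$, contradicting that $v$ received only $k_v$ alerts. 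Hence every $v$ with $x_v\neq z$ queries all nodes, sees $n-|V_{x_v}|\ge n/2>t$ differing values, and adopts $z$. The arrangement of values around the ring is irrelevant, and no case split on absolute majority is needed.

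Your own decomposition could also be finished. For $k\le n/2$ the worry is unfounded: a window of more than $2k$ nodes contains at most $k$ nodes of $M$. For $k>n/2$, every value class has size below $n/2$; summing their cuts gives at least $\varepsilon n/2$ bichromatic edges, so windows cover $V$. The proposal, however, makes neither argument.

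Second, the bit-complexity bound must hold whenever the correct inputs $x_v$ agree, for \emph{arbitrary} $\ell_v$. In your protocol every $v$ with $\ell_v=v$ broadcasts (and, per your remark, queries all nodes). If every correct node sets $\ell_v=v$, you send $\Theta(n^2)$ messages; your count ``the leader's broadcast costs $O(n)$'' assumes a single leader. The paper avoids this by having each node also send $x_v$ to $\ell_v$ in round~1, and letting $v$ with $\ell_v=v$ broadcast only if it received round-1 messages from all nodes. Correct nodes send $O(1)$ round-1 messages each, so only $O(1)$ nodes can qualify. For $f=0$ and an agreed leader, the leader does qualify and already knows a most frequent value.

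This also repairs your schedule, which as written needs seven rounds. The full queries can only be sent once the window answers have arrived, so the answers your leader's proposal depends on arrive only in round~6.
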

\begin{proof}
We claim that the following algorithm achieves the stated guarantees.

\begin{algorithm2e}[H]
	\caption{Weak king consensus. The parameter $0<\varepsilon\in \Omega(1)$ is chosen such that graphs of expansion $\varepsilon$ can be constructed efficiently.}
	\KwIn{$(x_v,\ell_v)\in \mathcal{V}\times (V\,\dot{\cup}\,\{\bot\})$}
	\KwOut{$y_v\in \mathcal{V}\,\dot{\cup}\,\{\bot\}$}
	compute $\{v,w\}\in E$ for a constant-degree graph $(V,E)$ of expansion $\varepsilon>0$\;
	send $x_v$ to all $w$ with $\{v,w\}\in E$ and to $\ell_v$\tcp*{first round}
	\lIf{received $x_w\neq x_v$ from $w$ and $\{v,w\}\in E$}{send ``alert'' to all nodes\tcp*[f]{second round}}
	\lIf{received $k_v\neq 0$ times ``alert'' and $w=v+i\bmod n$ for $i\in [0,\lceil 2k_v/\varepsilon\rceil]$}{send ``req'' to $w$\tcp*[f]{third round}}
	\lIf{received ``req'' from $w$}{send $x_v$ to $w$\tcp*[f]{fourth round}}
	\If{received $x_w\neq x_v$ from half of the nodes $w=v+i\bmod n$ with $i\in [0,\lceil 2k_v/\varepsilon\rceil]$}{
		send ``req'' to all nodes\tcp*{fifth round}
	}
	\lIf{$\ell_v\neq v$ and $v$ received a ``req'' message from node $w$}{send $x_v$ to $w$}
	\ElseIf{$\ell_v=v$ and $v$ received messages from all nodes in the first round}{
		send a value $z_v$ most frequently received in the first round to all nodes\tcp*{sixth round}
	}
	\lIf{$\ell_v=\bot$}{$y_v=\bot$}
	\lElseIf{received $t+1$ values different from $x_v$ and $z_{\ell_v}$ from $\ell_v$}{$y_v:=z_{\ell_v}$}
	\lElse{$y_v:=x_v$}
\end{algorithm2e}
The bounds on the number of rounds and the message size are immediate from the description of the algorithm. To bound the bit complexity if there is $x\in \mathcal{V}$ such that $x_v=x$ for all correct nodes $v$, we control the ammount of communication in reach round.
\begin{enumerate}
  \item As $(V,E)$ has constant degree, correct nodes send $O(n)$ messages.
  \item As $(V,E)$ has constant degree and $x_v=x$ for all correct $v$, $O(f)$ nodes send a total of $O(nf)$ ``alert'' messages.
  \item As each node receives $k_v\in O(f)$ ``alert'' messages, the total number of ``req'' messages in this round is $O(nf)$.
  \item As nodes only respond to ``req'' messages from the preceding round, $O(nf)$ messages are sent.
  \item Going over the nodes from $1$ to $n$, we greedily select nodes $v$ with $k_v>0$ that in the third round do not receive a ``req'' message from the preceding one that was selected, but send ``req'' messages in the fifth round. Excepting the last such node, by construction no faulty node receives a ``req'' message from more than one selected node. However, each selected node must have sent ``req'' messages to at least $k_v/\varepsilon$ faulty nodes, as $x_v=x$ for correct nodes. Hence, $f\ge \sum_{v\mbox{ selected}}k_v/(2\varepsilon)$. On the other hand, the greedy selection ensures that the union $\bigcup_{v\mbox{ selected}}\{v,\ldots,v+\lceil 2k_v/\varepsilon\rceil\}$ contains all nodes that send ``req'' messages in the fifth round. As this union has size at most $O(\sum_{v\mbox{ selected}}k_v)$, at most $O(nf)$ ``req'' messages are sent in the fifth round.
  \item The number of responses to ``req'' messages is $O(nf)$. Recall that in the first round correct nodes send only $O(1)$ messages each. Thus, there can only be $O(1)$ nodes that have received messages from all nodes. Hence, nodes $v$ with $\ell_v=v$ send $O(n)$ messages. Overall, $O(n(f+1))$ messages are sent in this round.
\end{enumerate}
Summing up over all rounds and using that messages have size $O(\log |\mathcal{V}|)$, the bound on the bit complexity follows.

\textbf{Validity:} In order for $v$ to output $y_v\notin \{x_v,\bot\}$, it must have received at least $t+1$ messages carrying values different from $x_v$ in the final round. If there is $x\in \mathcal{V}$ so that $x_w=x$ for all correct nodes $w$, then correct nodes $w$ with $\ell_w\neq w$ only ever send messages with value $x$ (or ``req'' or ``alert'' messages). Thus, a node $w$ with $\ell_w=w$ that receives messages from all nodes in round $1$ will not send $z_{\ell_w}\neq x$ in the final round either, since it received a strict majority of messages with value $x$ in round $1$. Hence, in this case it must hold that $y_v\in \{x_v,\bot\}$.

\textbf{Default:} This property is immediate from the output instruction.

\textbf{Weak King Agreement:} Suppose that $f=0$ and that there is $\ell\in V$ such that $\ell_v=\ell$ for all $v\in V$. Then $\ell$ receives messages from all nodes in the first round and sends a value $z:=z_{\ell}$ to all nodes in the final round. We claim that $y_v=z$ for all nodes $v$.

To see this, assume towards a contradiction that $v\in V$ is a node with $y_v\neq z$. As $v$ received $z$ from $y_\ell=\ell$ in the final round, it must have received $x_w\neq x_v$ from at most $t$ nodes $w$ in the final round. If $v$ sent ``req'' to all nodes in the second to last round, this is impossible, as $x_v$ is the input of at most $n/2<n-t$ nodes. Thus, $v$ did not do so, implying that it received $x_v$ from at least half of the nodes $v+i\bmod n$ with $i\in [0,\ldots,\lceil 2k_v/\varepsilon\rceil]$.

Denoting $V_{x_v}:=\{w\in V\,|\,x_w=x_v\}$, it follows that $|V_{x_v}|>k_v/\varepsilon$. However, because $x_v\neq z$ is not more frequent than $z$, $|V_{x_v}|\le n/2$. Hence, at least an $\varepsilon$-fraction of all edges of $(V,E)$ that are incident to nodes in $V_{x_v}$ has their other endpoint outside of $V_{x_y}$. As $(V,E)$ has uniform degree, it follows that at least an $\varepsilon$-fraction of the nodes in $V_{x_v}$ has a neighbor in $V\setminus V_{x_v}$. Thus, we arrive at the contradiction that at least $\varepsilon |V_{x_v}|\le k_v<\varepsilon |V_{x_v}|$ nodes send an ``alert'' message to all nodes in round $2$.
\end{proof}

\subsection{Recursion Template Ensuring \texorpdfstring{$O(1)$-round Stabilization if $f=0$}{O(1)-round Stabilization if f=0}}
As our solution to weak king consensus is guaranteed to incur a cost of only $\tilde{O}(nf)$ after stabilization, regardless of whether the leader is correct, it is safe to use in an imbalanced recursion, see \Cref{alg:recursion_0}. This template exploits that if there is no fault, the leader will succeed on \emph{any} attempt supported by all nodes, so we can let nodes ignore weak king consensus instances if the filtered clock from the recursively constructed counter on the whole node set asks them to execute a king consensus instance (with a non-$\bot$ leader).
\begin{algorithm2e}[t]
	\caption{A second recursion template for $C$-counting. We give the code $v\in V$ executes in each round, where $C_v$ is the output variable. The template is parametrized by a solutions to counting, filtering with $V$ as clock set, $R$-round king consensus, and $R$-round weak king consensus, where $R\in O(1)$, all on node set $V$. To highlight the similarities to \Cref{alg:recursion}, we use the shorthands $k:=3R$ and $X:=6R$.\label{alg:recursion_0}}
	\KwVars{for $b\in\{0,1\}$, outputs $A_v$ and $F_v$ of counting on $V$ and filtering with clock set $V$, modulo-$R$ counter $P_v\in [R]$, cooldown counter $L_v\in [R+1]$}
	\KwOut{$C_v\in [C]$}
	\leIf{$F_v=kw\bmod (kn)$}{$\ell_v:=w$}{$\ell_v:=\bot$}
	initialize king consensus on universe $[C]\cup \{\bot\}$ with input $(C_v+R\bmod C,\ell_v)$\;
	\For{$r\in \{R,\ldots,1\}$}{
		execute the code of round $r$ of king consensus on the state stored for round $r$\;
		\If{$r=R$ and the output of the instance is $c\in [C]$}{
			$C_v:=c$\;
		}
		\Else{
			store the new state in the memory block allocated for round $r+1$\;
		}
	}
	execute the code of an instance of $(k n,X,V)$-filtering with input $A_v$ and output $F_v$\;
	execute the code of an instance of $(kn)$-counting on node set $V$ with output $A_v$\;
	$C_v:=C_v+1\bmod C$\;
	\If{$v=0$}{
		$P_v:=P_v+1\bmod R$\;
		send $P_v$ to all nodes\;
	}
	\ElseIf{received $P_0$ from node $0$}{
		$P_v:=P_0$\;
	}
	\lIf{$\ell_v\neq \bot$}{$L_v:=R$}
	\leIf{$P_v=0\bmod R$ and $L_v=0$}{$\ell_v':=0$}{$\ell_v':=\bot$}
	initialize weak king consensus on universe $[C]\cup \{\bot\}$ with input $(C_v+R-1\bmod C,\ell_v')$\;
	$L_v:=L_v-1$\;
	\For{$r\in \{R,\ldots,1\}$}{
		execute the code of round $r$ of weak king consensus on the state stored for round $r$\;
		\If{$r=R$ and the output of the instance is $c\in [C]$}{
			$C_v:=c$\;
		}
		\Else{
			store the new state in the memory block allocated for round $r+1$\;
		}
	}
\end{algorithm2e}

In the following, denote by $S_C$ the stabilization time of the instance of $(k n)$-counting on node set $V$ and by $S_F$ the stabilization time of the instance of $(k n,X,V)$-filtering. Moreover, we use \Cref{def:unimpeded} in the sense that $b=0$ corresponds to instances of weak king consensus (whose only possible leader is node $0$) and $b=1$ corresponds to instances of king consensus. To prove stabilization, as in \Cref{sec:counting} it suffices to show that an unimpeded instance occurs.
\begin{lemma}\label{lem:opt_unimpeded}
There is an unimpeded instance of king consensus within $S_C+S_F+O(f+1)$ rounds.
\end{lemma}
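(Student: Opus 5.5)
The plan is to follow the proof of \Cref{lem:unimpeded}, which becomes much simpler here because only one clock selects king consensus leaders. The weak king consensus instances are suppressed locally by the cooldown counter $L_v$, so no counting argument like \Cref{lem:constants} should be needed.

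First, since $f\le t<n/3$, the recursive $(kn)$-counting instance on $V$ stabilizes by round $S_C$. By the validity property of $(kn,X,V)$-filtering (the argument of \Cref{lem:filter_good}), the variables $F_v$ then $(kn)$-count from round $S_C+S_F$, and all correct nodes agree on them. Consequently, from round $S_C+S_F+1$ on, all correct nodes compute the same $\ell_v$ in every round. This value is non-$\bot$ exactly once every $k=3R$ rounds, and the successive non-$\bot$ values run through consecutive node identifiers $w,w+1,\ldots \bmod n$. Every $b=1$ instance initialized in such a round therefore runs with leader $w$ in the sense of \Cref{def:unimpeded}.

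Second, fix a round $r$ in which the common leader $w$ is correct, and check the two impedance conditions.
\begin{itemize}
\item \emph{No other king consensus instance.} The next round with $\ell_v\neq\bot$ is $r+3R>r+R$, so all other king consensus instances initialized in $\{r,\ldots,r+R\}$ have input $\ell_v=\bot$.
\item \emph{No weak king consensus instance.} In round $r$ every correct node sets $L_v:=R$ before computing $\ell'_v$ and then decrements $L_v$ once per round. Hence $L_v\neq 0$, and thus $\ell'_v=\bot$, for the instances initialized in rounds $r,\ldots,r+R-1$.
\end{itemize}
The boundary round $r+R$ needs care, since there $L_v$ may have just reached $0$. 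I would handle it by aligning indices with the convention of \Cref{lem:unimpeded}: $\ell_v$ in round $r$ is computed from $F_{v,r-1}$, and the king consensus instance outputs in the same loop iteration where $r=R$. Either the weak instance started in round $r+R$ does not interfere with the unimpeded instance's output, or the count is shifted by one. This off-by-one bookkeeping is the step I expect to be fiddly but not conceptually hard; if necessary one could argue that any instance initialized in round $r+R$ only outputs in round $r+2R$, after agreement has been fixed.

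Third, bound the waiting time until a correct leader appears. After round $S_C+S_F$, the first $f+1$ rounds with $\ell_v\neq\bot$ use $f+1$ distinct consecutive leaders; they are distinct because $f+1\le n$. At most $f$ of these leaders are faulty, so one of them is correct. These rounds occur within $3R(f+2)\in O(f+1)$ rounds, which is well-defined since $R\in O(1)$. The corresponding instance is unimpeded by the second step, and it is initialized by round $S_C+S_F+O(f+1)$, as claimed.
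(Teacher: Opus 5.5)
Your proposal is correct and follows the paper's proof essentially step for step: filter validity makes $F_v$ count from round $S_C+S_F$, the spacing $k=3R>R$ rules out competing king consensus instances, the reset $L_v:=R$ forces $\ell'_v=\bot$, and your explicit count (at most $f$ of the next $f+1$ consecutive leaders are faulty) supplies the $O(f+1)$ bound that the paper leaves implicit. Your worry about round $r+R$ is justified, since the paper asserts $\ell'_v=\bot$ through round $r+R$ although $L_v=0$ there. The right fix is not the timing of that instance's output but the order of operations in \Cref{alg:recursion_0}: the weak instance initialized in round $r+R$ reads its input only after the unimpeded instance has produced its agreed output, so it is covered by the validity step of \Cref{lem:stabilization} rather than the default step.
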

\begin{proof}
As $f\le t<|V|/3$, within $S_C+S_F$ rounds the variables $F_v$ start to count. Thus, there is some round $r$ such that a king consensus instance running with some correct leader is initialized, and no correct node uses input $\ell_v\neq \bot$ for king consensus instances in rounds $r,r+1,\ldots,r+R$. In round $r$ all correct nodes $v$ set $L_v$ to $R$ before determining $\ell_v'$. As $L_v$ needs to decrement $R$ times to reach $0$, which happens after $\ell_v'$ is determined, no correct node inputs $\ell_v'\neq \bot$ in rounds $r,r+1,\ldots,r+R$.
\end{proof}

The above lemma handles the case that $f>0$. If $f=0$, fast stabilization is facilitated by node $0$.
\begin{lemma}\label{lem:opt_unimpeded_0}
If $f=0$, there is an unimpeded instance of weak king consensus by round $S_F+O(1)$.
\end{lemma}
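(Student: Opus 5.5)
We need to exhibit, by round $S_F+O(1)$, an instance of weak king consensus with leader $0$ such that every node inputs $\ell'_v=0$ in that round and inputs $\ell_v=\bot$ to every king consensus instance started during its $R$ rounds. Unimpededness is then read off \Cref{def:unimpeded} with $b=0$, where king agreement is replaced by weak king agreement; this applies since $f=0$.

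\textbf{Step 1: node $0$'s counter.} With $f=0$, node $0$ is correct and sends $P_0$ to everyone in every round. So from round $2$ on, all nodes have $P_v=P_0$, and these values increase by one modulo $R$ each round. Consequently, every window of $R$ consecutive rounds contains a round in which all nodes have $P_v=0$.

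\textbf{Step 2: the filtered clock.} From round $S_F$ on, \Cref{lem:filter_bad} gives a virtual clock $\hat F$ with $F_{v,r}\in\{\hat F_r,\bot\}$ for all $v$. The clock $\hat F$ advances by one per round except at the block boundaries, which are spaced $X=6R$ rounds apart. A node can set $\ell_v\neq\bot$ only in rounds where $\hat F\equiv 0 \bmod k$, with $k=3R$. Within one block of $X=6R$ rounds this happens at most three times, at rounds spaced $k=3R$ apart, plus possibly once near a block boundary. Note that $L_v$ is reset to $R$ exactly when $\ell_v\neq\bot$, and the weak instance is suppressed while $L_v>0$.

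\textbf{Step 3: finding a clean gap.} We need a round $r$ with the following three properties:
\begin{enumerate}
\item $P=0$ in round $r$;
\item no node had $\ell_v\neq\bot$ in rounds $r-R,\dots,r$, so that $L_v=0$ at all nodes;
\item no node has $\ell_v\neq\bot$ in rounds $r,\dots,r+R$.
\end{enumerate}
If these hold, all nodes input leader $0$ to the weak instance started in round $r$. No king consensus instance in $[r,r+R]$ carries a non-$\bot$ leader, and by the default property such instances output $\bot$. The forbidden rounds, where $\hat F\equiv 0\bmod k$, are at most one per $3R$ rounds within a block, plus boundary effects. Hence in any window of $X+O(R)$ rounds after $S_F$ there is a gap of at least $3R-1$ consecutive rounds free of forbidden rounds. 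Inside such a gap we need $r$ with $r-R$ and $r+R$ both in the gap and $P_r=0$; a window of length $3R-1$ contains $R$ admissible values of $r$, and one of them has $P_r=0$ by Step 1.

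\textbf{Main obstacle.} The delicate part is the bookkeeping of Step 3: the one-round offset between $F_{v,r-1}$ and $\ell_v$, the exact moment at which $L_v$ is decremented, and the fact that a block boundary of $\hat F$ may create an extra forbidden round. I would handle the last issue by choosing the gap inside the interior of a single block, which has length $6R$ and contains at most two forbidden rounds at distance $3R$. This leaves a free stretch of length at least $\approx 3R-1$, possibly requiring the next block if the first is too close to $S_F$. The total waiting time is therefore $S_F+O(X+R)=S_F+O(1)$.
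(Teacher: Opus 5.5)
Your argument is essentially the paper's. Both find a run of $3R-1$ rounds in which no node can have $\ell_v\neq\bot$, spend $R$ of them letting $L_v$ drain to $0$, wait at most $R$ more for $P_v=0$, and let the instance run for $R$ rounds. Taking the middle gap between the two forbidden rounds of a single block of $X=6R$ rounds is, if anything, a cleaner way to obtain that run than the paper's two-case split.

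One bookkeeping fix is needed. That middle gap has length exactly $3R-1$, so requiring rounds $r-R,\dots,r+R$ to be clean leaves only $R-1$ candidates for $r$, not $R$, and $P_r=0$ could be missed. However, $L_v$ is set to $R$ and decremented in the same round, so a reset in round $r_0$ already permits $\ell'_v=0$ in round $r_0+R$. Hence only rounds $r-R+1,\dots,r+R$ must be clean, which gives the $R$ candidates you claim, one of which has $P_r=0$.
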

\begin{proof}
If $f=0$, the variables $P_v$ $R$-count from round $1$. Analogously to \Cref{lem:filter_bad}, for rounds $r\ge S_F$ there is a virtual clock $\hat{F}$ such that $F_{v,r}\in \{\hat{F}_r,\bot\}$ and, for any $X$ consecutive rounds, $\hat{F}_{r+1}=\hat{F}_r+1\bmod (kn)$. We distinguish two cases.
\begin{enumerate}
  \item If $\hat{F}_r\bmod k\neq 0$ in rounds $r,r+1,\ldots,r+3R-2$, all nodes have $\ell_v=\bot$ in these rounds. Thus, by round $r+R-1$, the counters $L_v$ all become $0$. By round $r+2R-1$, an instance of weak king consensus is initialized, and it terminates by round $r+3R-2$. Hence, it is unimpeded.
  \item Otherwise, $\hat{F}_r\bmod k\neq 0$ in rounds $r+3R-1,r+3R,\ldots,r+6R-3$, where we use that $k=3R$ and $X=6R$. Analogously to the first case, we conclude that an unimpeded instance of weak king consensus is initialized by round $r+5R$.
\end{enumerate}
As $R=O(1)$, this concludes the proof.
\end{proof}

\begin{lemma}\label{lem:opt_stab}
\Cref{alg:recursion_0} solves $C$-counting with stabilization time $S_C+S_F+O(f+1)$. If $f=0$, the stabilization time is $S_F+O(1)$.
\end{lemma}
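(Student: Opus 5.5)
The plan is to mirror the proof of \Cref{thm:stab}: combine the existence of an unimpeded instance (\Cref{lem:opt_unimpeded} for general $f$, \Cref{lem:opt_unimpeded_0} for $f=0$) with an analogue of \Cref{lem:stabilization} adapted to \Cref{alg:recursion_0}. The first statement then follows because an unimpeded instance occurs by round $S_C+S_F+O(f+1)$. The second follows because for $f=0$ an unimpeded weak king consensus instance occurs by round $S_F+O(1)$, and $R\in O(1)$ is added on top in either case.

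The key step is to re-establish \Cref{lem:stabilization} for the modified template, where $b=0$ now means weak king consensus with leader $0$ and $b=1$ means king consensus. I would first note that the template now feeds different inputs to the two kinds of instances: king consensus gets $C_v+R$, weak king consensus gets $C_v+R-1$. These offsets must be checked against the order of operations in a round. King consensus outputs are applied before the increment $C_v:=C_v+1$, whereas weak king consensus is initialized and its outputs are applied after that increment. So in both cases an instance initialized in round $r'$ outputs, in round $r'+R$, exactly the value the counter should hold at that point of the round if the $C_v$ were already counting. This is the bookkeeping I expect to need care, and it is the main obstacle; I would verify it by tracking $C_v$ at the two write points within a round.

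With the offsets verified, the argument proceeds as in \Cref{lem:stabilization}. Let the unimpeded instance be initialized in round $r$. Agreement on its output comes from king agreement for a king consensus instance with a correct leader, or from weak king agreement if it is a weak king consensus instance, which arises only in the case $f=0$ where all nodes use leader $0$. By unimpededness, every other instance initialized in rounds $r,\dots,r+R$ has $\ell_v=\bot$ (respectively $\ell'_v=\bot$) at all correct nodes. The default properties of both primitives then make those instances output $\bot$, so after round $r+R$ all correct $C_v$ agree. Finally, a minimal-counterexample induction over later instances, using the validity properties of king consensus and weak king consensus, shows that no later instance outputs anything other than the agreed count or $\bot$. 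Hence the $C_v$ count from round $r+R$.

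For the $f=0$ case I would additionally observe that the recursive counter's stabilization $S_C$ plays no role. \Cref{lem:opt_unimpeded_0} uses only the virtual clock guaranteed after $S_F$, together with $P_v$ counting from round $1$ when node $0$ is correct. So the bound $S_F+O(1)$ follows directly.
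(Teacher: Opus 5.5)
Your route matches the paper's. You build an analogue of \Cref{lem:stabilization} for \Cref{alg:recursion_0}: king agreement, or weak king agreement (which coincides with it when $f=0$), is used only for the unimpeded instance. The default property disposes of the competing instances, and the validity of both primitives handles all later ones. \Cref{lem:opt_unimpeded,lem:opt_unimpeded_0} then supply the unimpeded instance.

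\textbf{The gap.} The one step you add, and single out as the main obstacle, does not come out as you assert. In round $r'$ the king instance is initialized with $C_{v,r'-1}+R$. The weak king instance is initialized after the increment, with $(C_{v,r'-1}+1)+R-1=C_{v,r'-1}+R$. So the $-1$ merely undoes the increment, and both instances receive the \emph{same} value. The two loops have identical structure, so they have the same delay; take output in round $r'+R$, as you do.
\begin{itemize}
\item The king output is written before the increment, giving $C_{v,r'+R}=C_{v,r'-1}+R+1$, which is correct.
\item The weak king output is written after the increment, giving $C_{v,r'+R}=C_{v,r'-1}+R$, which is one short.
\end{itemize}
Because the weak king instance is both initialized and applied after the increment, the two shifts cancel, and it needs the same offset $+R$ on the post-increment value. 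The $-1$ would be right only if it were initialized after the increment but applied before it. For any common delay, at most one of the two offsets is consistent with counting.

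\textbf{Why it matters.} Your minimal-counterexample induction needs every later instance to be fed the agreed count plus exactly its delay, and this fails for one instance type. Concretely, for $f=0$ the mechanism behind \Cref{lem:opt_unimpeded_0} keeps launching weak king instances with leader $0$ at all nodes after stabilization. Their outputs are non-$\bot$ by weak king agreement and equal to the common input by validity. Under your delay-$R$ convention, each one would leave the counter one behind the count, violating validity of counting.

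The paper's proof passes over this with ``analogously,'' so the defect is in the pseudocode, not in your strategy. Still, your verification step cannot be carried out as claimed. You need to make the offsets consistent first, e.g.\ feed $C_v+R$ to the weak king instance under your delay-$R$ convention, or write its output before the increment. After that, the rest of your argument goes through as in the paper.
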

\begin{proof}
Analogously to the proof of \Cref{lem:stabilization}, we can show that if an unimpeded king consensus instance is initialized in round $r$, the variables $C_v$ count from round $r+R$: the king agreement property of king consensus is only used to establish that the outputs of the unimpeded instance agree, so using weak king agreement on other instances is not an issue. On the other hand, if $f=0$, then the weak king agreement and king agreement properties coincide, i.e., for $f=0$ a solution to weak king agreement is a solution to king agreement. Hence, under the constraint that $f=0$, an unimpeded instance of weak king consensus being initialized in round $r$ implies that the variables $C_v$ count from round $r+R$ as well. The statement of the lemma now follows by applying \Cref{lem:opt_unimpeded,lem:opt_unimpeded_0}.
\end{proof}

\subsection{Combining the Techniques}
It remains to merge the two recursive templates so that the length of the path in the recursion tree that guarantees stabilization is reduced from $O(\log n)$ to $O(\log f)$.
\main*
\begin{proof}
We combine our two recursive templates as follows. We alternate between
\begin{itemize}
  \item [(i)] applying the template from \Cref{alg:recursion_0}, where the counting algorithm on $V$ is given by
  \item [(ii)] applying the template from \Cref{alg:recursion} with $V_0=\{1,\ldots,\lfloor |V|/2\rfloor\}$ and $V_1=V\setminus V_0$ (choosing constants appropriately), with the counting algorithms given by (i) or being trivial if only one node remains.
\end{itemize}
We use the $8$-round solution to king consensus given by \Cref{thm:king}, the $6$-round solution to weak king consensus given by \Cref{thm:weak_king} (padded to $R=8$ rounds), and the filtering algorithm given in \Cref{alg:filter}. Note that \Cref{alg:filter} is always applied with $|T|\in \Omega(|V|)$.

Consider the recursion tree, whose nodes are labeled by the node sets $V'$ on which the respective template is applied. Observe that on inner nodes on even levels, the template from \Cref{alg:recursion_0} is applied, while on inner nodes on odd levels, the template from \Cref{alg:recursion} is used. Thus, the depth of the tree is $O(\log n)$ due to the balanced split on odd levels. As we partition the nodes on odd levels, this immediately yields the bound on the message size.

Concerning the stabilization time, consider the root-leaf path constructed by (i) on even levels, going to the child that is a leaf if and only if $V'$ contains no faulty nodes, and (ii) on odd levels to the child whose node set contains fewer faults. Along this path, the number of faults $f'$ in the current set is reduced by at least half for every odd step, and once it reaches zero, the path reaches a leaf on the next hop. In particular, the path maintains the invariant $f'<|V'|/3$. By \Cref{thm:stab,cor:filtering,lem:opt_stab}, we get at each node on the path that the stabilization time is $S_{\mathrm{next}}+O(f'+1)$, where $S_{\mathrm{next}}$ is the stabilization time of the next node on the path. We conclude that the stabilization time of the root node, i.e., the overall algorithm, is bounded by
\begin{equation*}
O\left(\sum_{d=0}^{O(\log f)}\left\lfloor 2^{-\lfloor d/2\rfloor}\right\rfloor f+1\right)=O(f+1).
\end{equation*}

With the bound on the stabilization time established, it remains to bound the amortized bit complexity. We annotate each node of the recursion tree with a bound on its contributions that are not caused by recursive calls (amortized over $n$ rounds), so that summing over all nodes of the tree results in the desired bound. Since leaves do not contribute, we need to consider inner nodes only. Denote by $V'$ and $f'$ the node set corresponding to the considered inner node and number of faults in this node set, and observe that on each level of the tree, the node sets of inner nodes are disjoint.
\begin{itemize}
  \item By \Cref{thm:weak_king}, weak king consensus instances incur a cost of $O(|V'|(f'+1)\log |\mathcal{V}|)$ if nodes agree on the inputs $x_v$. If $f'<|V'|/3$, this occurs within $O(f'+1)$ rounds, during which $O(|V'|^2\log |\mathcal{V}|)$ bits are sent. Amortizing over $n\ge |V'|$ rounds, the contribution becomes $O(|V'|(f'+1)\log |\mathcal{V}|)$ on average. If $f'\ge |V'|/3$, the message size bound from the theorem yields a bound of $O(|V'|^2\log |\mathcal{V}|)=O(|V'|(f'+1)\log |\mathcal{V}|)$.
  \item By \Cref{thm:king}, king consensus instances incur a cost of $O(|V'|(f'+1)\log |\mathcal{V}|)$ if nodes agree on the inputs $x_v$ \emph{and} non-faulty $\ell_v\in V'\cup \{\bot\}$. If $f<|V'|/9$, even on odd levels both recursive instances have few enough faults to stabilize within $O(f'+1)$ rounds. By \Cref{lem:filter_good,cor:filtering}, the variables $F_v^b$ (or $F_v$ on even levels) thus start counting within $O(f'+1)$ rounds. Also, we already know that the output variables start counting within $O(f'+1)$ rounds. Hence, barring an initial $O(f'+1)$ rounds, all instances of king consensus except for $O(f')$ many every $|V'|\le n$ rounds incur only cost $O(|V'|(f'+1)\log |\mathcal{V}|)$. Using the message size bound of $O(\log |\mathcal{V}|)$ and amortizing over $n$ rounds, the amortized bit complexity is bounded by $O(|V'|(f'+1)\log |\mathcal{V}|)$.
  \item Unless $f'=\Theta(|V'|)$, we can apply \Cref{cor:filtering} with $k=O(f'+1)$ to bound the amortized bit complexity of the instances of \Cref{alg:filter} by $O(|V'|(f'+1)\log n)$; if $f'=\Theta(|V'|)$, the same bound trivially follows from the message size bound.
  \item \Cref{alg:recursion_0} causes additional communication of $O(|V'|)$ bits per round due to node $0$ sending the modulo $R\in O(1)$ counter $P_1$ to all nodes in $V'$.
\end{itemize}
Summing over all inner nodes in a given depth $d$, using that $|V'|$ falls by factor at least $3$ for every $2$ levels, and taking into account that the considered node sets are disjoint, we get a total contribution of $O(n((2/3)^{d/2}f+1)(\log |\mathcal{V}|+\log n)$ from nodes in depth $d$, where $\log |\mathcal{V}|=O(\log n)$ in all cases but the consensus instances for $d=0$, where $\log |\mathcal{V}|=O(\log C)$. Summing over all $O(\log n)$ values of $d$, we arrive at an overall bit complexity of $O(n(f\log C+\log^2 n))$ amortized over $n$ rounds.
\end{proof}

%% file: open.tex
\section{Open Questions}\label{sec:open}
We conclude the paper by listing a number of open questions we consider to be of interest.
\begin{enumerate}
  \item Is the achieved bit complexity (nearly) optimal?
  \item Is it possible to interleave our approach with randomized techniques to get both stabilization time $O(f+1)$ deterministically and $O(1)$ in expectation?
  \item Can the proposed or similar techniques be applied to the pulse synchronization problem to achieve early stabilization also there?
  \item Cryptographic tools can enable to increase $t$ to $\lfloor (n-1)/2\rfloor$ and circumvent the Dolev-Reischuk bound, e.g.\ by means of threshold signatures~\cite{desmedt89threshold}. Can higher resilience or lower communication cost be achieved while maintaining early stabilization?
  \item Addressing the previous question requires to reconcile the total loss of volatile state and arbitrary transient violation of the protocol assumed in the context of self-stabilization with the need to hide secret keys and signatures from the adversary. Can this be achieved under reasonable assumptions? And if yes, how should this be modeled?
\end{enumerate}

%% file: app_implicit.tex
\section{Results Implicit in Prior Work}\label{app:implicit}
\filtering*
\begin{proof}
The claim that the following algorithm solves the task.

\begin{algorithm2e}[H]
	\caption{$(C,X,T)$-filtering algorithm implicit in~\cite{lenzen17counting,lenzen19counting}.}
	\KwIn{$C_v\in [C]$ iff $v\in T$}
	\KwVars{$m_v,M_v\in [C]\cup \{\bot\}$, $X_v\in [X+1]$}
	\KwOut{$F_v\in [C]\cup \{\bot\}$}
	send $(C_v,m_v)$ to all nodes\;
	\leIf{received $(C,\cdot)$ from the majority of nodes $w\in T$}{$m_v:=C$}{$m_v:=\bot$}
	\If{received $(\cdot,m)$ from at least $n-t$ nodes $w\in V$}{
		\leIf{$m=M_v+1\bmod C$}{$X_v:=\max\{X_v-1,0\}$}{$X_v:=X$}
		$M_v:=m$
	}
	\lElse{$X_v:=X$}
	\leIf{$X_v=0$}{$F_v:=M_v$}{$F_v:=\bot$}
\end{algorithm2e}
The bound on message size can be readily verified from this description.

\textbf{Validity:} If fewer than $|T|/2$ nodes in $T$ are faulty and the variables $C_v$ $C$-count, the variables $m_v$ and $M_v$ count from rounds $1$ and $2$, respectively. Then all counters $X_v$ count down to $0$ by the end of round $X+2$, so that the outputs count from round $X+2$.

\textbf{$X$-round Crusader Agreement}: If in round $r\ge X+2$ correct $v$ outputs $C_{v,r}\in [C]$, it follows that it received $M_{v,r'}=m_{w,r'-1}=C_{v,r}+r'-r\bmod C$ from at least $n-t$ nodes $w\in V$ each in rounds $r'\in \{r-X,r-X-1,\ldots,r\}$. At least $n-2t>t$ of these are correct, implying that no $w$ receives $n-t$ times another value in round $r'$. Hence, $M_{w,r'}\in \{M_{v,r'},\bot\}$ for each correct $w$ and such $r'$, implying that also $F_{w,r'}\in \{M_{v,r'},\bot\}=\{C_{v,r}+r'-r\bmod C,\bot\}$, as required.
\end{proof}

\graded*
\begin{proof}
We claim that the following algorithm achieves the stated guarantees.

\begin{algorithm2e}[H]
	\caption{Graded agreement algorithm implicit in~\cite{berman89consensus}.}
	\KwIn{$x_v\in \mathcal{V}$}
	\KwOut{$(y_v,g_v)\in \mathcal{V}\times \{0,1\}$}
	send $x_v$ to all nodes\tcp*{first communication round}
	\lIf{received $x_v$ $n-t$ times}{send $x_v$ to all nodes\tcp*[f]{second communication round}}
	\lIf{received $x_v$ $n-t$ times}{$(y_v,g_v):=(x_v,1)$}
	\lElseIf(\tcp*[f]{break ties arbitrarily}){received $t+1$ times $x\in \mathcal{V}$}{$(y_v,g_v):=(x,0)$}
	\lElse{$(y_v,g_v):=(x_v,0)$}
\end{algorithm2e}
The running time and bound on message size can be readily verified from this description.

\textbf{Validity}: If there is $x\in \mathcal{V}$ so that $x_v=x$ for all correct $v$, then each $v$ receives $n-t$ times $x$ twice and outputs $(x,1)$.

\textbf{Graded Agreement}: Suppose that $g_w=1$ for some correct $w$. Hence, $w$ received at least $n-2t>t$ messages with $y_w$ in the second round from correct nodes. These must have received at least $n-2t>t$ messages with $y_w$ from correct nodes in the first round. It follows that each correct $v$ with $x_v\neq y_w$ receives more than $t$ values different from $x_v$ in the first round, implying that it will not send $x_v$ again in the second round. Therefore, each node receives more than $t$ times $x_w$ and at most $t$ times $x$ for any $x\neq x_w$ in the second round. We conclude that each correct $v$ outputs $y_v=y_w$.
\end{proof}

\king*
\begin{proof}
We claim that the following algorithm achieves the stated guarantees.

\begin{algorithm2e}[H]
	\caption{King consensus algorithm implicit in~\cite{berman89consensus}.}
	\KwIn{$(x_v,\ell_v)\in \mathcal{V}\times (V\,\dot{\cup}\,\{\bot\})$}
	\KwOut{$y_v\in \mathcal{V}\,\dot{\cup}\,\{\bot\}$}
	run the protocol of \Cref{lem:graded} with input $x_v$; denote output by $(z_v,g_v)$\tcp*{two rounds}
	\lIf{$\ell_v=\bot$}{$y_v:=\bot$}
	\lElseIf{$g_v=0$ and received $z_{\ell_v}$ from $\ell_v$}{$y_v:=z_{\ell_v}$}
	\lElse{$y_v:=z_v$}
\end{algorithm2e}
The running time and bound on message size can be readily verified from this description.

\textbf{Validity}: If there is $x\in \mathcal{V}$ so that $x_v=x$ for all correct $v$, by validity of graded agreement all correct nodes output $(z_v,g_v)=(x,1)$ from their call to graded agreement. Hence, they all output $x$ or $\bot$, as required.

\textbf{Default}: Immediately follows from the output instruction.

\textbf{King agreement}: If there is correct $\ell \in V$ so that $\ell_v=\ell$ for all correct $v$, distinguish two cases. If some correct node $w$ outputs $(z_w,1)$ from the call to graded agreement, the graded agreement property ensures that each correct $v$ outputs $(z_w,g_v)$ from this call for some $g_v\in \{0,1\}$. In particular, $\ell=\ell_{\ell}$ sends $z_{\ell}=z_w$ in the third round, so that each correct node outputs $y_v=z_{\ell}=z_w$. The other case is that no correct node outputs $(z_w,1)$ for some $z_w$ from the call to graded agreement. In this case all nodes output $z_{\ell}$.
\end{proof}